\newcommand {\ts} {\textsl}
\def\sgn{\mathop{\rm sgn}\nolimits}
\def\bbR{\mathbb{R}}
\def\bbZ{\mathbb{Z}}
\newtheorem{theorem}{Theorem}
\newtheorem{corol}{Corollary}
\newtheorem{propos}{Proposition}
\newtheorem{lemma}{Lemma}
\newtheorem{agre}{Agreement}
\newcommand{\colu}[2]{\begin{pmatrix} #1 \\ #2 \end{pmatrix}}
\newcommand{\matr}[4]{\begin{pmatrix} #1 & #2 \\ #3 & #4 \end{pmatrix}}
\def\inn{^{\rm in}}
\def\out{^{\rm out}}
\def\iso{I}
\def\tG{\Delta}
\def\QR{Q}
\def\PR{P}
\def\Tor{{\rm \bf T}}
\def\Cir{{\rm \bf S}}
\newcommand{\fns}[1]{{\footnotesize #1}}
\newcommand {\fts}[1] {{\small #1}}
\newcommand{\rul}{\rule[-5pt]{0pt}{18pt}}
\numberwithin{equation}{section}
\begin{document}

\author{M.P. Kharlamov\\ \\
\it Russian Academy of National Economy and Public Administration\\
\it Volgograd Branch, Volgograd, Russia}

%\ead{mikeh@inbox.ru}
%\ead[url]{www.vags.ru/pp/hmp}
\title{Phase topology of one system with separated variables\\and singularities of the symplectic structure\footnote{J. of Geometry and Physics, On-line July 2014, DOI: 10.1016/j.geomphys.2014.07.004}}

\date{}

\maketitle

\begin{abstract}
We consider an example of a system with two degrees of freedom admitting separation of variables but having a subset of codimension 1 on which the 2-form defining the symplectic structure degenerates. We show how to use separation of variables to calculate the exact topological invariant of non-degenerate singularities and singularities appearing due to the symplectic structure degeneration. New types of non-orientable 3-atoms are found.

\textit{Keywords}: phase topology, separated variables, Kowalevski top, double field, loop molecule,  non-orientable atoms

\textit{MSC 2000}: 70E17, 70G40

\end{abstract}

%\linenumbers
%\tableofcontents
%% main text
\section{Introduction}\label{sec1}
The Lie co-algebra $e(3,2)^*$ with coordinate functions $g_i,\alpha_j,\beta_k$ has the Lie\,--\,Poisson bracket
\begin{equation}\notag
\begin{array}{c}
\{g_i,g_j\}=-\varepsilon_{ijk}g_k,\quad \{g_i,\alpha_j\}=-\varepsilon_{ijk}\alpha_k,\quad \{g_i,\beta_j\}=-\varepsilon_{ijk}\beta_k,\\
\{\alpha_i,\alpha_j\} = \{\beta_i,\beta_j\} = \{\alpha_i,\beta_j\} = 0 \qquad (i,j,k=1,2,3).
\end{array}
\end{equation}
For a given function $H:e(3,2)^* \to \bbR$ the system of equations written as
\begin{equation}\label{eq1_1}
  \dot x = \{x,H\}
\end{equation}
is called Euler equations on $e(3,2)^*$ with the Hamilton function $H$ \cite{Bogo}.

Suppose we have a rigid body rotating about a fixed point $O$ and relate all
vector and tensor objects to a reference frame moving with the body.
Let $\boldsymbol{\alpha},\boldsymbol{\beta}$ be some vectors fixed in the inertial space and $\mathbf{g}$ the kinetic momentum of the body. The equations of motion have the form \eqref{eq1_1} with $H=\frac{1}{2} \mathbf{g}\cdot \mathbf{I}^{-1}\mathbf{g} + W(\boldsymbol{\alpha},\boldsymbol{\beta})$.
The constant symmetric matrix $\mathbf{I}$ is the inertia tensor and $\boldsymbol{\omega}=\mathbf{I}^{-1}\mathbf{g}$ is the angular velocity of the body. The function $W$ is treated as the potential energy. In what follows we use the coordinates $\omega_i$ of $\boldsymbol{\omega}$ in the moving frame instead of $g_i$ for convenience. In the generic case $\boldsymbol{\alpha}{\times}\boldsymbol{\beta} \ne 0$, common levels $\mathcal{P}$ of the Casimir functions $\boldsymbol{\alpha}^2, \boldsymbol{\beta}^2, \boldsymbol{\alpha}{\cdot}\boldsymbol{\beta}$ are 6-dimensional symplectic leaves of the Lie\,--\,Poisson bracket. System \eqref{eq1_1} restricted to $\mathcal{P}$ becomes a Hamiltonian system with three degrees of freedom.

The Hamilton function
\begin{equation}\label{eq1_2}
H = \omega_1^2 + \omega_2^2 + \dfrac{1}{2}\omega_3^2 - \alpha_1 -\beta_2
\end{equation}
defines an integrable generalization of the Kowalevski top \cite{Kowa} to a double force field. Additional integrals in involution are \cite{Bogo,ReySem}:
\begin{equation}\notag
\begin{array}{l}
K = ( \omega_1^2 - \omega_2^2 + \alpha_1 - \beta_2 )^2 + (2 \omega_1 \omega_2 + \alpha_2 + \beta_1)^2,\\
G = (\omega_1 \alpha_1+\omega_2 \alpha_2 + \frac{1}{2}\omega_3 \alpha_3)^2  +(\omega_1 \beta_1+\omega_2 \beta_2 + \frac{1}{2}\omega_3 \beta_3)^2 \\
\qquad +\omega_3 [ \omega_1 (\alpha_2 \beta_3-\alpha_3 \beta_2)+\omega_2 (\alpha_3 \beta_1-\alpha_1 \beta_3) + \frac{1}{2}\omega_3 (\alpha_1 \beta_2-\alpha_2 \beta_1)]\\
\qquad -\alpha_1 \boldsymbol{\beta}^2-\beta_2 \boldsymbol{\alpha}^2+(\alpha_2+\beta_1)(\boldsymbol{\alpha}{\cdot}\boldsymbol{\beta}).
\end{array}
\end{equation}

It is known that for a wide class of Hamilton functions including \eqref{eq1_2} without loss of generality one can choose the following constants of the Casimir fun\-ctions~\cite{KhRCD}
\begin{equation}\label{eq1_3}
  \boldsymbol{\alpha}^2=a^2, \qquad \boldsymbol{\beta}^2=b^2, \qquad \boldsymbol{\alpha}{\cdot}\boldsymbol{\beta}=0 \qquad (a > b >0).
\end{equation}
For the function \eqref{eq1_2} the cases $b=0$ and $b=a$ correspond to the classical Kowalevski case \cite{Kowa} and the case of Yehia \cite{Yeh1}. Both of these cases have symmetries, globally reduce to systems with two degrees of freedom and are not considered here. For irreducible cases \eqref{eq1_3} the system \eqref{eq1_1}, \eqref{eq1_2} admits a Lax representation given by A.G.~Reyman and M.A.~Semenov-Tian-Shansky \cite{ReySem} but has not been yet reduced to quadratures. Let us call this system \textit{the RS-system}.

The study of irreducible integrable 3D-systems begins with detecting the so-called critical subsystems. These are even-dimensional invariant submanifolds of the phase space with the induced Hamiltonian systems having less than three degrees of freedom. All critical subsystems for the RS-system were found in \cite{Bogo,Odin,KhRCD}. Separation of variables is known for two of them \cite{KhSavUMBeng,KhRCD07}. Consider a critical subsystem on a four-dimensional invariant submanifold. The 2-form defining the Hamiltonian type of the induced dynamics is obtained as the restriction of the symplectic structure of $\mathcal{P}$. It appeared that, in the RS-system, all critical subsystems with two degrees of freedom have 3-dimensional subsets on which this 2-form degenerates. Such systems are now called \textit{almost} Hamiltonian. For the first subsystem \cite{Bogo}, as shown in \cite{Zo2006}, the Fomenko\,--\,Zieschang invariant \cite{FoZi1991} can be applied. Here we study the second subsystem found in \cite{Odin}. It is denoted by $\mathcal{N}$. This notation stands for the dynamical system and therefore includes both the phase space and the induced dynamics. For the sake of brevity we also call $\mathcal{N}$ the phase space of the subsystem meaning the corresponding subset of $\mathcal{P}$. The rough phase topology of the system $\mathcal{N}$ was described in \cite{KhSavUMBeng}. Nevertheless, some properties of $\mathcal{N}$ has not been completely established and the character of some ``strange'' bifurcations has not been explained. In this paper we fulfil the complete topological investigation of the system  $\mathcal{N}$ in terms of the topological invariants \cite{FoZi1991,BolMet,igs} calculated using the global separation of variables. New topological effects are revealed due to non-orientable bifurcations, which are possible in almost Hamiltonian systems.

%\clearpage

\section{The main system and separation of variables}\label{sec2}
We use the definition of $\mathcal{N}$ given in \cite{KhSavUMBeng}. Consider the first integral of the RS-system
\begin{equation}\notag
F = (2G - p^2 H)^2 - r^4 K,
\end{equation}
where $p = \sqrt{a^2 + b^2}$, $r = \sqrt{a^2 - b^2}$ $(p>r>0)$. As shown in \cite{KhRCD}, the equation $F=0$ applied to the integral constants gives a leaf of the bifurcation diagram of the global integral map $H{\times}K{\times}G: \mathcal{P} \to \bbR^3$. Therefore, we define a non-empty invariant set as follows
\begin{equation}\notag
  \mathcal{N} = \{x \in \mathcal{P}: F(x)=0, \, {\rm d} F(x)=0 \}.
\end{equation}
The invariant submanifold $\mathcal{N}\subset \mathcal{P}$ was first found in \cite{Odin} in terms of invariant relations on $\mathcal{P}$ having singularities on the set $\{\Lambda=0\}$, where
$$
\Lambda = (\alpha_1-\beta_2)^2+(\alpha_2+\beta_1)^2.
$$
The Lie\,--\,Poisson bracket $L$ of these relations is defined as
\begin{equation*}
  L =\frac{1}{\sqrt{\Lambda}}\left\{\omega_1^2+\omega_2^2+[a^2+b^2-2(\alpha_1 \beta_2-\alpha_2 \beta_1)] M\right\}, \quad M=\dfrac{1}{r^4}(2G-p^2 H).
\end{equation*}
It is proved in \cite{KhSavUMBeng} that in the domain $\{\Lambda \ne 0\}$ the set $\mathcal{N}$ is a smooth \mbox{4-d}i\-men\-sio\-nal manifold, $L$ is a partial integral on $\mathcal{N}$ and the restriction to $\mathcal{N}$ of the symplectic structure is non-degenerate everywhere except for the subset ${\{L=0\}}$. This subset is non-empty. Moreover, it does not contain critical points of $L$ and therefore is a 3-dimensional submanifold in $\mathcal{N}$. Thus, the induced system on $\mathcal{N}$ is almost Hamiltonian, i.e., the 2-form defining the Hamiltonian field degenerates on a subset of codimension 1. Below we give some simple explanation of the fact that $\mathcal{N}$ is everywhere a smooth 4-dimensional manifold. Moreover, it appears to be non-orientable. In the sequel, we call $\mathcal{N}$ a manifold without further comments on the fact.

From now on having a first integral denoted by an upper case letter we denote its arbitrary constant by the corresponding lower case letter. On $\mathcal{N}$, the following identities hold \cite{KhSavUMBeng}
\begin{eqnarray}
& g = \dfrac{1}{2}(p^2 h + r^4 m), \qquad k = r^4 m^2, \label{eq2_1}\\
& \ell^2 = 1 + 2 m h + 2 p^2 h^2. \label{eq2_2}
\end{eqnarray}
These relations show that it is convenient to take $M,H$ for the functionally independent pair of the first integrals on $\mathcal{N}$, while the pairs $L,H$ and $L,M$ are not in one-to-one correspondence with the triple $(H,K,G)$. Thus, we define the integral map
$$
J=M{\times}H: \mathcal{N} \to \bbR^2
$$
and study the bifurcations of the integral manifolds
\begin{equation*}
  J_{m,h} = \{x \in \mathcal{N}: M(x)=m,\, H(x)=h\}.
\end{equation*}
For almost all integral manifolds $J_{m,h}$ we can take two values of $\ell$ with different signs according to \eqref{eq2_2}.

Introduce the new variables $u_1,u_2$ as
\begin{equation}\notag
u_1 = \dfrac{a \sqrt{\Lambda}}{a^2-(\alpha_1\beta_2-\alpha_2\beta_1)},\quad u_2 = \dfrac{b^2-(\alpha_1\beta_2-\alpha_2\beta_1)}{b \sqrt{\Lambda}}.
\end{equation}
It readily follows from \eqref{eq1_3} that
\begin{equation}\label{eq2_3}
|u_1|\leqslant 1, \qquad |u_2| \leqslant 1.
\end{equation}
These are the so-called natural restrictions.
Let
\begin{equation*}
\begin{array}{c}
  \tau_{1,2} = \dfrac{2am}{\ell \mp 1}, \quad \sigma_{1,2} = \dfrac{\ell \mp 1}{2 b m }, \quad \Theta = \dfrac{1}{a -b \, u_1 u_2},\\
  h_*=-\dfrac{1}{2}(h+p^2m), \quad   \psi= a b \,m \, u_2+ h_* u_1-\frac{1}{2}(a + b\,u_1 u_2).
\end{array}
\end{equation*}
Consider the two-valued (algebraic) radicals
\begin{equation}\label{eq2_4}
\begin{array}{rl}
  \QR_1=\sqrt{1-u_1^2}, & \QR_2=\sqrt{1-u_2^2}, \\
  \PR_1=\sqrt{\mathstrut h_*(u_1-\tau_1)(u_1-\tau_2)},  & \PR_2=b \sqrt{\mathstrut m (u_2-\sigma_1)(u_2-\sigma_2)}.
\end{array}
\end{equation}
\begin{propos}\label{prop02}
The differential equations induced on $\mathcal{N}$ by the RS-system separate in variables $u_1,u_2$,
\begin{equation}\label{eq2_5}
  \dot u_1 = \QR_1 \PR_1, \qquad \dot u_2 = \QR_2 \PR_2,
\end{equation}
and on the integral manifolds $J_{m,h}$ the phase variables have the following expressions in terms of $u_1,u_2$
\begin{equation}\label{eq2_6}
\renewcommand{\arraystretch}{1.1}
\begin{array}{c}
\displaystyle \alpha _1  = -{2a } {\Theta^{2}}[(a u_1-b u_2)\psi + b \QR_1 \QR_2 \PR_1 \PR_2], \\
\displaystyle \alpha _2  = \phantom{-}{2a } {\Theta^{2}}[ (a u_1-b u_2) \PR_1 \PR_2 -  b \QR_1 \QR_2 \psi], \\
\displaystyle \beta _1  =  \phantom{-}{2b } \,{\Theta^{2}}[a \QR_1 \QR_2 \psi - (b u_1 - a u_2) \PR_1 \PR_2], \\
\displaystyle \beta _2  =  \phantom{-}{2b } \,{\Theta^{2}}[a \QR_1 \QR_2 \PR_1 \PR_2- (b u_1-au_2)\psi],\\
\displaystyle \omega _1  = {r (\ell u_1-2 a m)} {\Theta^{}}\PR_2,\quad \omega _2  = {r (2b m u_2-\ell)} {\Theta^{}}\PR_1,\\
\displaystyle \alpha _3  = {a r} {\Theta^{}}\QR_1 , \quad \beta _3  ={ b r} {\Theta^{}}u_1 \QR_2,  \quad \omega _3  = {2}{\Theta^{}}(b \QR_2 \PR_1 - a \QR_1 \PR_2).
\end{array}
\end{equation}
\end{propos}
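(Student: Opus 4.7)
The plan is to verify the proposition in two stages: first establish the formulas (2.6) as a parametrisation of each integral manifold $J_{m,h}\subset\mathcal{N}$ by $(u_1,u_2)$, with $\ell$ determined by (2.2) up to a sign; then differentiate $u_1,u_2$ along the flow and check the separated form (2.5). Throughout, the key observation is that $u_1,u_2$ depend only on the components of $\bd{\alpha}$ and $\bd{\beta}$, and the factor $\Theta=(a-bu_1u_2)^{-1}$ serves as a common denominator.

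To obtain (2.6) I would combine three ingredients. The Casimir constraints (1.3) together with the definitions of $\Lambda$, $u_1$, $u_2$ determine the pairs $(\alpha_1,\alpha_2)$ and $(\beta_1,\beta_2)$ up to a common planar rotation; in particular $\alpha_3^2$ and $\beta_3^2$ are fixed. The invariant relations from \cite{Odin} defining $\mathcal{N}$ express $\omega_1,\omega_2,\omega_3$ rationally through $\alpha_j,\beta_k$ and the constants $h,m,\ell$. The identities (2.1), (2.2) then fix the residual rotation and sign choices by forcing $\alpha_3,\beta_3$ and $L$ into the compact shapes listed in (2.6). Correctness of the resulting closed forms is an algebraic check that (2.6) satisfies $\bd{\alpha}^2=a^2$, $\bd{\beta}^2=b^2$, $\bd{\alpha}\cdot\bd{\beta}=0$, $H=h$, $M=m$, $L=\ell$, the defining relations of $\mathcal{N}$, and reproduces the definitions of $u_1,u_2$ through the identity $\Lambda=(\alpha_1-\beta_2)^2+(\alpha_2+\beta_1)^2$.

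For the separation (2.5), since $u_1,u_2$ depend on $\alpha_j,\beta_k$ only, one computes $\dot u_i$ from the Poisson brackets $\{\alpha_j,H\}=(\bd{\omega}\times\bd{\alpha})_j$ and $\{\beta_j,H\}=(\bd{\omega}\times\bd{\beta})_j$; the potential part of $H$ contributes nothing to these derivatives. Substituting the parametrisation (2.6), reducing modulo $\QR_i^2=1-u_i^2$, and cancelling the overall $\Theta$-factors, the right-hand sides must collapse to $\QR_1\PR_1$ and $\QR_2\PR_2$ respectively. The main obstacle is the sheer scale of the polynomial manipulation: inverting the relations of $\mathcal{N}$ to obtain (2.6) in closed form and then verifying every constraint and the final factorisations is long but mechanical, with computer algebra the natural tool. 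The structural reason the separation succeeds is that $\tau_{1,2},\sigma_{1,2},h_*$ and $\psi$ are tuned so that the roots of $\PR_i^2$ coincide with the zeros of $\dot u_i$ at fixed $u_{3-i}$, a property anticipated from the Lax representation of \cite{ReySem} and parallel to the separations obtained in \cite{KhSavUMBeng,KhRCD07}.
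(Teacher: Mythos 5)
Your plan is sound as a verification scheme, but it follows a genuinely different route from the paper: the paper gives no computation at all here, stating only that ``the result easily follows from'' the earlier work \cite{KhSavUMBeng}, where the separation was derived in variables $s_1,s_2$, and then noting the rescaling $u_1=a s_1^{-1}$, $u_2=b^{-1}s_2$ that brings the oscillation intervals to $[-1,1]$ and makes $\Theta=(a-bu_1u_2)^{-1}$ positive via \eqref{eq1_3} and \eqref{eq2_3}. Your approach --- treat \eqref{eq2_6} as an ansatz, check that it satisfies the Casimirs, the integral values $h,m,\ell$, the invariant relations defining $\mathcal{N}$ and the identity $\Lambda=(\alpha_1-\beta_2)^2+(\alpha_2+\beta_1)^2$, then differentiate $u_1,u_2$ through the Poisson brackets of $\boldsymbol{\alpha},\boldsymbol{\beta}$ with $H$ and reduce modulo $\QR_i^2=1-u_i^2$ --- is self-contained in principle and, if executed, would actually constitute a proof rather than a citation; what it costs is the large polynomial computation you correctly identify, which you leave to computer algebra just as the paper leaves it to the reference, so neither text carries out the algebra. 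Two small cautions: your claim that the Casimirs plus $u_1,u_2$ alone fix $\alpha_3^2,\beta_3^2$ is not obvious (the invariant relations of $\mathcal{N}$ are needed for that, not just $\boldsymbol{\alpha}^2,\boldsymbol{\beta}^2,\boldsymbol{\alpha}\cdot\boldsymbol{\beta}$), though your final ``verify all constraints'' step absorbs this; and since $\QR_i,\PR_i$ are two-valued radicals, a complete check must confirm that a single simultaneous branch choice makes \eqref{eq2_5} and \eqref{eq2_6} hold together --- a consistency the paper itself relies on later (Section~5) when orienting the cycles $\Gamma_i$ by the phase flow.
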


The result easily follows from \cite{KhSavUMBeng}. The variables of separation $s_1,s_2$ found in \cite{KhSavUMBeng} can oscillate on infinite segments. Therefore here we use dimensionless variables $u_1=a s_1^{-1}, u_2 =b^{-1}s_2$, always restricted to the segment $[-1,1]$. Let us note that, according to \eqref{eq1_3} and \eqref{eq2_3}, $a-b u_1 u_2>0$ everywhere, so the factor $\Theta$ in \eqref{eq2_6} is well defined and always positive.

\section{Geometrical representation of the integral manifolds}\label{sec3}
Fixing the values $m,h$ and choosing $\ell$ according to \eqref{eq2_2}, let us treat \eqref{eq2_6} as a map
\begin{equation*}
  \pi: V^6=\bbR^3(u_1,\QR_1,\PR_1){\times}\bbR^3(u_2,\QR_2,\PR_2)\to \mathcal{N}.
\end{equation*}
Then the integral manifold $J_{m,h}$ is the $\pi$-image of the direct product of two curves $\Gamma_i \subset \bbR^3(u_i,\QR_i,\PR_i)$ ($i=1,2$)
\begin{equation}\label{eq3_1}
\begin{array}{l}
  \Gamma_1: \left\{ \begin{array}{l}
                          \QR_1^2+u_1^2=1,\\
                          \PR_1^2-f_1(u_1)=0;
                     \end{array}
                          \right. \qquad   \Gamma_2: \left\{ \begin{array}{l}
                            \QR_2^2+u_1^2=1, \\
                            \PR_2^2-f_2(u_2)=0,
                     \end{array}
                            \right.
\end{array}
\end{equation}
where
\begin{equation*}
\begin{array}{l}
  f_1(u_1) =h_* u_1^2 + a \ell u_1 - a^2 m = h_*(u_1-\tau_1)(u_1-\tau_2) \\
  f_2(u_2) = b^2 m u_2^2 - b \ell u_2 -h_* = b^2 m (u_2-\sigma_1)(u_2-\sigma_2) .
\end{array}
\end{equation*}
Introduce the enhanced space $V^9=\bbR^3(\ell,m,h){\times}\bbR^3(u_1,\QR_1,\PR_1){\times}\bbR^3(u_2,\QR_2,\PR_2)$,
and define $\mathcal{M} \subset V^9$ by equations \eqref{eq2_2} and \eqref{eq3_1}. Let
$\hat{\pi}: \mathcal{M} \to \mathcal{N}$ be the map given by \eqref{eq2_6}.

\begin{lemma}\label{lem1}
The set $\mathcal{M}$ is a connected smooth 4-dimensional manifold in $V^9$.
\end{lemma}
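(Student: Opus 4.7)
The plan is to apply the regular value theorem. I would write the five scalar equations defining $\mathcal{M}\subset V^9$ as $E_1:=\ell^2-1-2mh-2p^2h^2$ from \eqref{eq2_2}, together with the four equations $E_2,\dots,E_5$ from \eqref{eq3_1}, and show that their gradients are linearly independent at every point of $\mathcal{M}$. Connectedness is then handled separately via the projection onto the base $(\ell,m,h)$-hyperboloid.

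First, I would exploit the block structure of the Jacobian. The gradient $\nabla E_1$ sits in the $(\ell,m,h)$-directions, $\nabla E_2$ in the $(u_1,Q_1)$-directions, and $\nabla E_4$ in the $(u_2,Q_2)$-directions; each is nonzero on its vanishing locus ($\nabla E_1=0$ would force $\ell=m=h=0$, incompatible with $E_1=0$, and the radius vector $(u_i,Q_i)$ cannot vanish on the unit circle), so these three are automatically independent. What remains is to verify that $\nabla E_3$ and $\nabla E_5$ do not lie in their span.

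This last check is the main obstacle. A dependence of $\nabla E_3$ first requires $P_1=0$ (killing the only $dP_1$-term), hence $f_1(u_1)=0$, and I would split on whether $Q_1$ vanishes. If $Q_1\ne 0$ then $du_1$ is free modulo $\{E_2=0\}$, so $f_1'(u_1)=0$ is also required; combining $f_1=f_1'=0$ with proportionality of the $(d\ell,dm,dh)$-part of $dE_3$ to $dE_1$ and eliminating, one is driven to the relation $u_1^2=2a^2/p^2$, which violates $|u_1|\le 1$ thanks to the strict inequality $a>b$ in \eqref{eq1_3}. If $Q_1=0$ then $u_1=\pm 1$ and $du_1=0$ on the tangent space, so $f_1'$ drops out of the proportionality; these conditions combined with $f_1(\pm 1)=0$ and substituted back into $E_1$ collapse to the impossible identity $0=1$. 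The argument for $\nabla E_5$ is identical under the swap $a\leftrightarrow b$, $u_1\leftrightarrow u_2$, so the Jacobian has rank $5$ throughout $\mathcal{M}$, proving smoothness.

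For connectedness, I would use the projection $\mathcal{M}\to\mathcal{B}:=\{E_1=0\}\subset\bbR^3(\ell,m,h)$. The quadric $\ell^2-2mh-2p^2h^2=1$ has quadratic form of signature $(2,1)$ with positive right-hand side, hence $\mathcal{B}$ is a one-sheeted hyperboloid and is connected. Each fiber is a product of two plane curves of type $P^2=f(u)$, $Q^2=1-u^2$, which may a priori split into several components; however, the loci in $\mathcal{B}$ across which the component structure changes (corresponding to $f_i(\pm 1)=0$ or to $f_i$ acquiring a double root inside $[-1,1]$) are real-codimension-one, and the smoothness of $\mathcal{M}$ at these transition points (already established) forces the components of neighbouring fibres to glue together. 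Paths in $\mathcal{B}$ crossing these loci then connect any two points of $\mathcal{M}$.
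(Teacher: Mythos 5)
Your smoothness argument is in the spirit of what the paper leaves to the reader (the paper only asserts that the rank is always $5$), and the block decomposition is the right idea, but the case analysis does not close. Checking separately that $\nabla E_3\notin\operatorname{span}(\nabla E_1,\nabla E_2,\nabla E_4)$ and $\nabla E_5\notin\operatorname{span}(\nabla E_1,\nabla E_2,\nabla E_4)$ does not yield independence of all five gradients: when $P_1=P_2=0$ simultaneously (a non-empty two-dimensional subset of $\mathcal{M}$ containing, in particular, the rank-0 equilibria over the points $p_i$), a relation $c_1\nabla E_1+\dots+c_5\nabla E_5=0$ can have both $c_3\neq0$ and $c_5\neq0$, and then the $(\ell,m,h)$-block only demands that the combination $c_3\nabla_{\ell,m,h}f_1+c_5\nabla_{\ell,m,h}f_2$ be proportional to $\nabla E_1$ --- strictly weaker than the two separate proportionalities you exclude. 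At the corner points where moreover $Q_1=Q_2=0$, the whole rank-5 claim reduces to the invertibility of the $3\times3$ matrix built from $\nabla E_1$, $\nabla_{\ell,m,h}f_1$, $\nabla_{\ell,m,h}f_2$, which your argument never examines. (I also could not reproduce your intermediate relation $u_1^2=2a^2/p^2$; the elimination I carried out leads to $m=0$ and then to a contradiction with $E_1=0$, so the contradiction survives, but the formula should be rechecked.)

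The connectedness half has the more serious gap. You project onto the full quadric $\mathcal{B}=\{E_1=0\}$ and use its connectedness, but the image of $\mathcal{M}$ in $\bbR^3(\ell,m,h)$ is only the set $\mathcal{D}$ of \emph{admissible} triples --- those for which each $f_i$ is non-negative somewhere on $[-1,1]$ --- and this is a proper subset of $\mathcal{B}$ (its projection to the $(m,h)$-plane is cut out by $h\geqslant\min[r^2m-2a,\,-r^2m-2b]$). Connectedness of $\mathcal{B}$ says nothing about connectedness of $\mathcal{D}$, and a path in $\mathcal{B}$ joining two admissible points may run through points with empty fibre, where there are no ``components of neighbouring fibres'' to glue. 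The paper's proof supplies exactly this missing step: it identifies $\mathcal{D}$ as a two-sheeted covering of $\operatorname{Im}J$ whose sheets are glued along $L_0$, checks that this set is connected, and then lifts paths whose interiors stay in the interior of $\mathcal{D}$, where the fibres $\Gamma_1{\times}\Gamma_2$ never degenerate to a point. (Both you and the paper are brief about why all components of a disconnected fibre such as $4\Tor^2$ eventually merge along such a path, but the paper explicitly defers that to the bifurcation analysis of Section~4, whereas your appeal to ``smoothness forces gluing'' is not by itself a mechanism --- a smooth total space can perfectly well have fibre components that never meet.)
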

\begin{proof}
It is easy to check that the system of five equations \eqref{eq2_2}, \eqref{eq3_1} always has rank~5. Therefore, $\mathcal{M}$ is a smooth 4-dimensional manifold.

Let us call a point $(\ell,m,h)\in \bbR^3$ admissible if the corresponding set
\begin{equation}\notag
\tG=\tG(\ell,m,h)=\Gamma_1{\times}\Gamma_2
\end{equation}
is not empty. For a given $\tG$, we call the projection of it onto the $(u_1,u_2)$-plane a region of possible motions. If not empty, this region is a rectangle in the square \eqref{eq2_3} cut out by the system of inequalities $\{f_1(u_1) \geqslant 0, \; f_2(u_2) \geqslant 0\}$. This yields that the image of the integral map $J$ in the $(m,h)$-plane is
\begin{equation}\notag
\begin{array}{l}
{\rm Im}\,J =\{h \geqslant \min [r^2 m - 2a, - r^2 m - 2b], \; 2p^2 m^2+2 h m +1 \geqslant 0\}.
\end{array}
\end{equation}
The set $\mathcal{D}$ of all admissible points is a two-sheet covering of ${\rm Im}\,J$. The sheets with opposite signs of $\ell$ are glued together along the curve
\begin{equation}\label{eq3_2}
  L_0: \; 2p^2 m^2+2 h m +1=0, \quad \ell=0, \quad m <0
\end{equation}
and $\mathcal{D}$ is connected. Take any two points in $\mathcal{M}$ and connect their images in $\mathcal{D}$ by a path $\gamma$ with ${\rm Int}\,\gamma \subset {\rm Int}\mathcal{D}$. At the points of ${\rm Int}\,\gamma$ the curves $\Gamma_i$ never degenerate to one point (we omit technical details, since this fact follows from the rough topological analysis given below). Therefore the initial two points in $\mathcal{M}$ can also be connected by a path.
\end{proof}

\begin{lemma}\label{lem3}
Consider the involution $\chi : V^9 \to V^9$ $(\chi^2={\rm Id})$
\begin{equation*}
  \chi : (\ell,m,h,u_1,\QR_1,\PR_1,u_2,\QR_2,\PR_2) \mapsto (-\ell,m,h,-u_1,\QR_1,-\PR_1,-u_2,-\QR_2,\PR_2).
\end{equation*}
The manifold $\mathcal{M}$ and the map $\hat\pi$ are $\chi$-invariant and $\chi$ changes the orientation on $\mathcal{M}$.
\end{lemma}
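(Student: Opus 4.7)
The proof naturally separates into three parts: the involution property $\chi^2 = \mathrm{Id}$ together with the $\chi$-invariance of $\mathcal{M}$, the $\chi$-invariance of $\hat\pi$, and the orientation-reversal claim. The first is essentially bookkeeping. Each of the five sign flips composing $\chi$ is its own inverse, so $\chi^2 = \mathrm{Id}$ is immediate. To verify that $\chi(\mathcal{M}) = \mathcal{M}$, I would check that each of the five defining relations---equation \eqref{eq2_2} together with the four equations in \eqref{eq3_1}---is $\chi$-invariant. The constraint $\ell^2 = 1 + 2mh + 2p^2 h^2$ is preserved because $\ell^2$, $m$, and $h$ are all fixed by $\chi$. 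The circle relations $\QR_i^2 + u_i^2 = 1$ are preserved since only squares of the affected coordinates appear. Finally, $\PR_i^2 = f_i(u_i)$ is preserved because the cross-terms $a\ell u_1$ in $f_1$ and $b\ell u_2$ in $f_2$ are invariant under the simultaneous sign flip $(u_i,\ell) \mapsto (-u_i,-\ell)$, while the remaining monomials depend only on $u_i^2$ or on parameters unchanged by $\chi$.

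For the second part I would verify each of the nine coordinate expressions in \eqref{eq2_6}. The factor $\Theta = 1/(a - b u_1 u_2)$ depends only on the invariant product $u_1 u_2$, hence is $\chi$-invariant. For five of the coordinates---$\alpha_3$, $\beta_3$, $\omega_1$, $\omega_2$, $\omega_3$---invariance follows immediately by counting sign-flipping factors in each monomial: $\alpha_3 = ar\Theta \QR_1$ involves only fixed factors; $\beta_3 = br\Theta u_1 \QR_2$ contains two flipped factors; $\omega_1 = r(\ell u_1 - 2am)\Theta \PR_2$ pairs $\ell$ with $u_1$; $\omega_2$ pairs $\ell$ with $u_2$ while $\PR_1$ flips; and $\omega_3$ contains two products each with two flipped factors. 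The remaining four expressions for $\alpha_1, \alpha_2, \beta_1, \beta_2$ constitute the main computational hurdle: one must track how $\psi$ transforms under $\chi$ simultaneously with the sign changes of $(au_1 - bu_2)$, $(bu_1 - au_2)$, $\QR_1 \QR_2$, and $\PR_1 \PR_2$, and reassemble the resulting expressions. This is a direct substitution that I anticipate being the most delicate step; where convenient, the defining relations of $\mathcal{M}$ can be used to simplify intermediate combinations before comparison.

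The orientation statement is the cleanest to prove. Viewing $\mathcal{M}$ as the common zero set in $V^9$ of the five defining functions $\Phi_1,\ldots,\Phi_5$ used in Part 1, I observe that $\chi$ acts on the ambient coordinates by flipping exactly the five signs of $\ell$, $u_1$, $\PR_1$, $u_2$, $\QR_2$, so the pullback of the ambient volume form satisfies $\chi^* \Omega_{V^9} = (-1)^5 \Omega_{V^9} = -\Omega_{V^9}$. On the other hand, each $\Phi_i$ is $\chi$-invariant by Part 1, so $\chi^* d\Phi_i = d\Phi_i$ and hence the normal form $d\Phi_1 \wedge \cdots \wedge d\Phi_5$ is preserved. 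Since the induced orientation form $\Omega_\mathcal{M}$ on $\mathcal{M}$ is characterized by $d\Phi_1 \wedge \cdots \wedge d\Phi_5 \wedge \Omega_\mathcal{M} = \Omega_{V^9}$, applying $\chi^*$ to both sides gives $\chi^* \Omega_\mathcal{M} = -\Omega_\mathcal{M}$, so $\chi$ reverses orientation on $\mathcal{M}$, completing the proof.
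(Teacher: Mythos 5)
Your three-part decomposition is sound, and your orientation argument is, in substance, exactly the paper's: the paper writes it as the Jacobi-matrix identity $\mu_{\chi(x)}=\mu_x A_\chi$, i.e.\ the conormal frame $d\Phi_1,\dots,d\Phi_5$ is preserved by $\chi$ while $\det A_\chi=(-1)^5=-1$, so $\chi$ reverses the ambient orientation, preserves the normal orientation, and hence reverses the orientation of $\mathcal{M}$ --- which is your computation with $\chi^*\Omega_{V^9}$ and $\chi^*d\Phi_i$ restated in the language of forms. The invariance of $\mathcal{M}$ and of $\hat\pi$ is not argued in the paper at all, so your verifications are a genuine addition; the checks for the five defining equations and for $\alpha_3,\beta_3,\omega_1,\omega_2,\omega_3$ are complete and correct as you state them.

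The one step you defer, however, is the crux of the second part, and as it stands it would not close. The expressions for $\alpha_1,\alpha_2,\beta_1,\beta_2$ in \eqref{eq2_6} are $\chi$-invariant precisely when $\psi$ is odd under $\chi$: in that case every bracket is a sum of products of an even number of sign-flipping factors, exactly as in your parity count for the easy five, and no use of the constraint equations is needed. But with $\psi$ as printed, $ab\,m\,u_2+h_*u_1$ is odd while the last term $-\tfrac12(a+b\,u_1u_2)$ is even, so for instance $\chi^*\alpha_1-\alpha_1=-2a\Theta^2(au_1-bu_2)(a+b\,u_1u_2)$, which does not vanish on $\mathcal{M}$ (the $u_i$ are independent there). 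The resolution is that the last term of $\psi$ must carry a factor of $\ell$ (compare $f_1,f_2$, whose cross terms $a\ell u_1$ and $b\ell u_2$ are exactly what make $\PR_i^2$ invariant under the simultaneous flip of $\ell$ and $u_i$); with $-\tfrac12\ell(a+b\,u_1u_2)$ the function $\psi$ is odd and your substitution goes through by pure parity counting. You should either carry out this computation explicitly or at least record that the entire invariance of $\hat\pi$ hinges on the oddness of $\psi$; ``reassembling after substitution'' without that observation is where a reader would get stuck.
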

Indeed, for $x\in \mathcal{M}$ let $\mu_x \in L(5,9)$ denote the Jacobi matrix of the system \eqref{eq2_2}, \eqref{eq3_1}. We readily obtain that $\mu_{\chi(x)} \equiv \mu_x \cdot A_\chi$, where $A_\chi$ is the matrix of $\chi$. So $\chi$ changes the orientation of $V^9$ but preserves the orientation of the space normal to $\mathcal{M}$. Therefore, it changes the orientation of $\mathcal{M}$.

Let us emphasize that the regions of possible motions for $(\ell,m,h)$ and $(-\ell,m,h)$ are centrally symmetric to each other and do not coincide except for $\ell=0$. In the latter case $\chi$ becomes a $\bbZ_2$-symmetry of $\tG$. In particular, $\chi:\mathcal{M} \to \mathcal{M}$ is a diffeomorphism and $\mathcal{N}=\mathcal{M}/\chi$. Summarizing the above statements we come to the following theorem.

\begin{theorem}\label{theo1}
The set $\mathcal{N}$ is a smooth connected 4-dimensional non-orientable submanifold in $\mathcal{P}$.
For any point $(m,h)\in ({\rm Im}\,J)\backslash L_0$ the integral manifold $J_{m,h}$ is diffeomorphic to $\Gamma_1{\times}\Gamma_2$. On $L_0$ we have $J_{m,h} \cong (\Gamma_1{\times}\Gamma_2)/\bbZ_2$.
\end{theorem}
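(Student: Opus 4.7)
My plan is to assemble the theorem from Lemmas~\ref{lem1} and~\ref{lem3} plus the identification $\mathcal{N}=\mathcal{M}/\chi$ recorded in the paragraph preceding the statement. First I would promote $\chi$ to a \emph{free} smooth involution so that the quotient is automatically a smooth manifold; then use orientation reversal for non-orientability; then analyse the fibre $\widehat{J}^{-1}(m,h)\subset\mathcal{M}$ of $\widehat{J}=J\circ\hat\pi$ sheet by sheet to obtain the two cases of the integral manifold description.

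For the first part I would begin by checking that $\chi$ has no fixed points on $\mathcal{M}$. A $\chi$-fixed point would need $\ell=0$, $u_1=\PR_1=0$ and $u_2=\QR_2=0$, and the latter contradicts $\QR_2^2+u_2^2=1$. Hence the projection $\mathcal{M}\to\mathcal{M}/\chi=\mathcal{N}$ is a local diffeomorphism, so $\mathcal{N}$ is a smooth 4-manifold; connectedness is immediate from Lemma~\ref{lem1}. For non-orientability I would invoke the standard fact that the quotient of an orientable manifold by a free orientation-reversing smooth involution is non-orientable. Orientability of $\mathcal{M}$ follows because the Jacobi matrix of \eqref{eq2_2} and \eqref{eq3_1} has constant rank~5, trivialising the normal bundle of $\mathcal{M}\hookrightarrow V^9$ in the canonically oriented ambient $V^9$; Lemma~\ref{lem3} supplies the orientation reversal.

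For the fibre description I fix $(m,h)\in\mathrm{Im}\,J$ and study the preimage in $\mathcal{M}$. Equation~\eqref{eq2_2} determines $\ell$ up to sign, so $\widehat{J}^{-1}(m,h)=\tG(\ell,m,h)\sqcup\tG(-\ell,m,h)$ when $\ell\ne 0$: a pair of disjoint copies of $\Gamma_1{\times}\Gamma_2$ interchanged by $\chi$. Passing to $\mathcal{N}=\mathcal{M}/\chi$ identifies the two copies, giving $J_{m,h}\cong\Gamma_1{\times}\Gamma_2$. When $(m,h)\in L_0$ the two sheets fuse into the single sheet $\tG(0,m,h)=\Gamma_1{\times}\Gamma_2$ on which $\chi$ acts as a fixed-point-free $\bbZ_2$-symmetry (by the check above), so $J_{m,h}\cong(\Gamma_1{\times}\Gamma_2)/\bbZ_2$.

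The main obstacle I anticipate is verifying that $\hat\pi$ is injective on a single sheet $\tG(\ell,m,h)$, so that the natural surjection onto $J_{m,h}$ is a genuine diffeomorphism and not merely a covering. I would handle this by inverting formulas~\eqref{eq2_6} directly: the coordinates $\alpha_3$ and $\beta_3$ recover $(u_1,\QR_1)$ and $(u_2,\QR_2)$ up to precisely the sign flips constituting $\chi$, after which $\omega_1,\omega_2$ determine $\PR_1,\PR_2$. This dovetails with the free $\chi$-action established above and closes the argument.
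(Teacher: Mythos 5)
Your proposal is correct and takes essentially the same route as the paper, which likewise assembles the theorem from Lemmas~\ref{lem1} and~\ref{lem3} together with the identification $\mathcal{N}=\mathcal{M}/\chi$; you merely make explicit what the paper leaves implicit, namely the freeness of $\chi$ on $\mathcal{M}$, the orientability of $\mathcal{M}$ via its trivialised normal bundle, and the injectivity of $\hat\pi$ on a single sheet $\tG(\ell,m,h)$, which the paper absorbs into the unproved assertion $\mathcal{N}=\mathcal{M}/\chi$ inherited from the separation of variables of the cited reference. The only small caveat is that $\beta_3=br\,\Theta\, u_1 \QR_2$ fails to determine $\QR_2$ when $u_1=0$, so your inversion of \eqref{eq2_6} should instead recover $u_1,u_2$ from their explicit expressions in $\boldsymbol{\alpha},\boldsymbol{\beta}$ given in Section~\ref{sec2} and then read off the remaining signs from $\alpha_3$, $\omega_1$ and $\omega_2$.
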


\begin{figure}[!ht]
%figure01%
\centering
\includegraphics[width=0.3\textwidth, keepaspectratio = true]{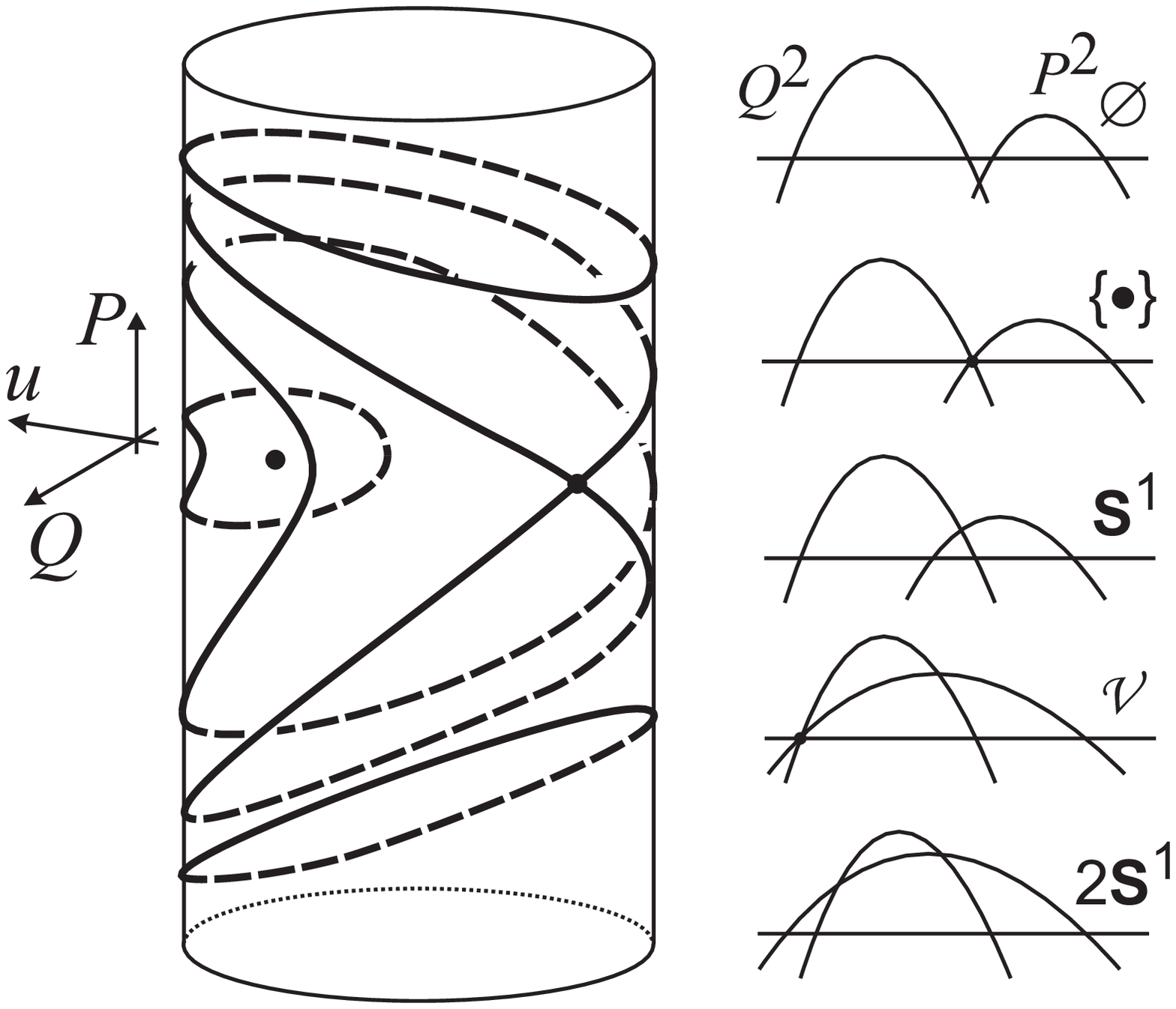}\
\includegraphics[width=0.3\textwidth, keepaspectratio = true]{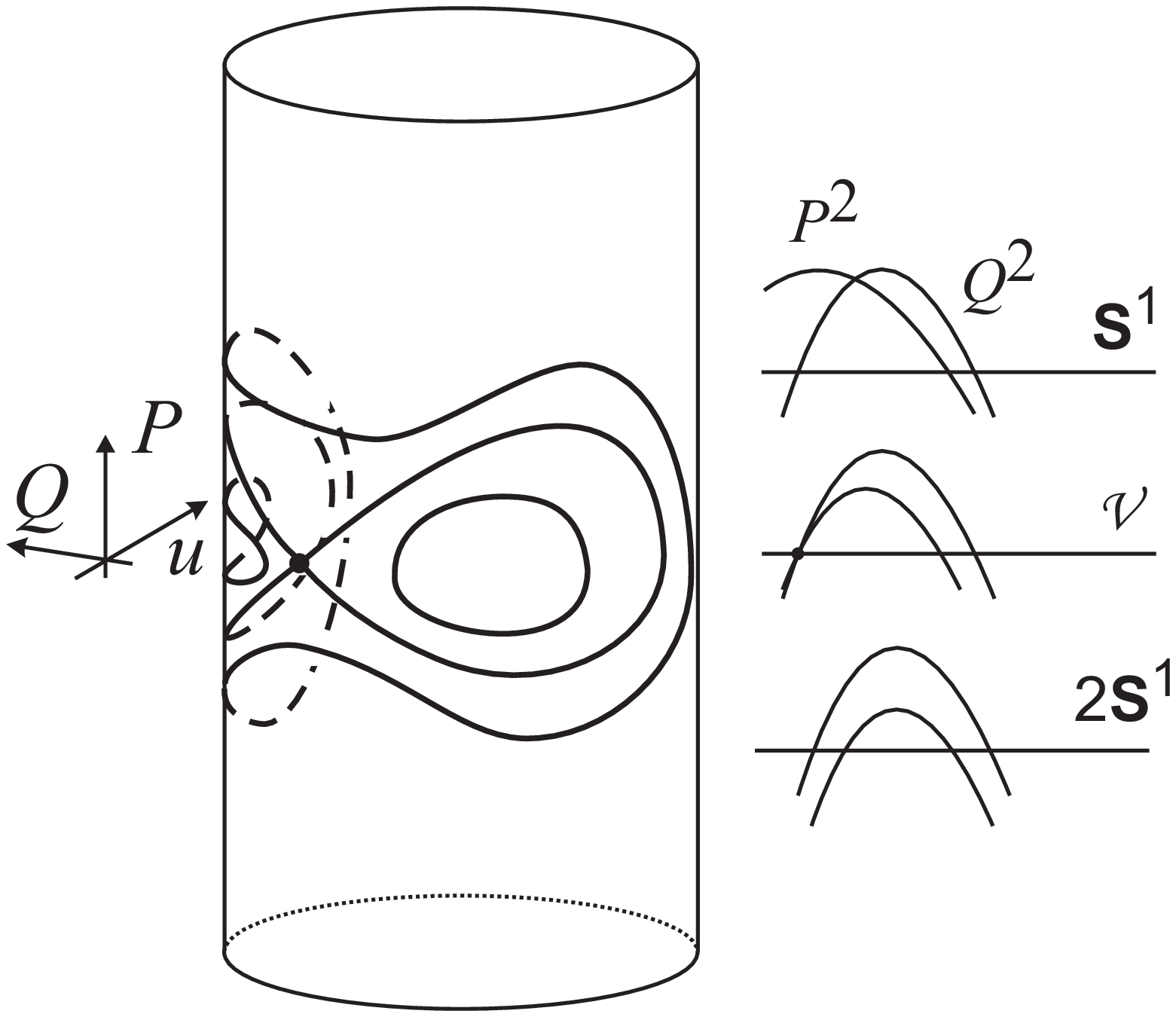}\
\includegraphics[width=0.3\textwidth, keepaspectratio = true]{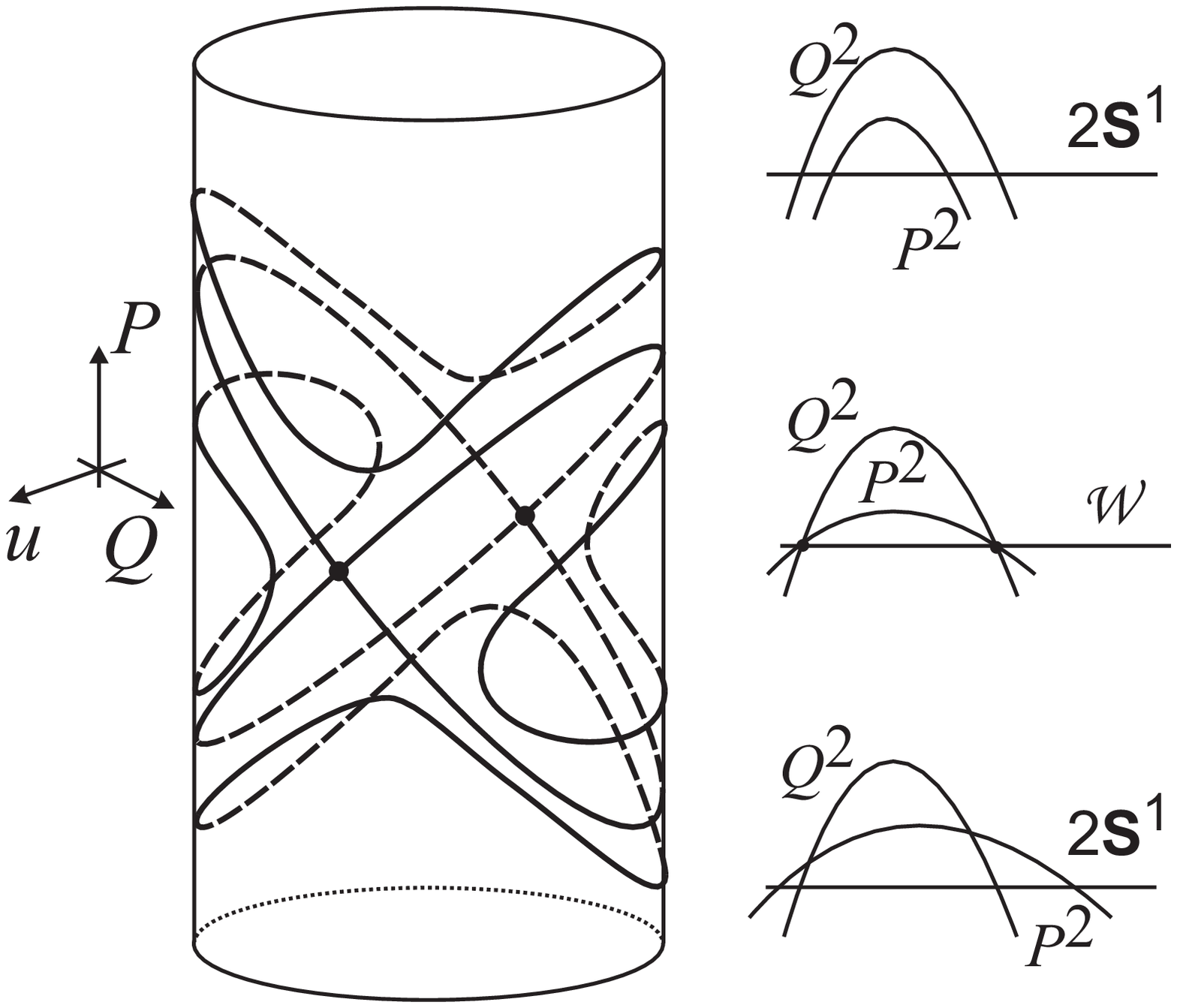}
\caption{Possible transformations of $\Gamma$.}\label{figgamma}
\end{figure}

As we can see from \eqref{eq3_1}, the set $\Gamma_i$ in $\bbR^3(u_i,\QR_i,\PR_i)$ is an intersection of the round cylinder generated by the unit circle in the $(u_i,\QR_i)$-plane and a cylinder with elliptical directrix in the $(u_i,\PR_i)$-plane. The form of $\Gamma_i$ depends on the position of the roots of $f_i(u_i)$ with respect to $\pm 1$. The standard transformations which can happen to $\Gamma_i$ are shown in Fig.~\ref{figgamma}:

$1)$ $\varnothing \to \{\boldsymbol{\cdot}\} \to \Cir^1 \to \mathcal{V} \to 2\Cir^1$;

$2)$ $\Cir^1 \to \mathcal{V} \to 2\Cir^1$;

$3)$ $2\Cir^1 \to \mathcal{W} \to 2\Cir^1$.

\noindent Here $\mathcal{V}$ and $\mathcal{W}$ stand respectively for the eight-curve $\Cir^1 \dot{\cup} \Cir^1$ with transversal self-intersection and for the curve $\Cir^1 \ddot{\cup} \Cir^1$ formed by two circles transversally intersecting at two points. The difference between cases $1$ and $2$ is that the two circles of $1$ differ by the sign of $\QR_i$, and the two circles of $2$ differ by the sign of $\PR_i$. In the first case there is an exit to $\varnothing$ when the common root of $\PR^2_i$ and $\QR^2_i$ turns out to be an isolated point. In the second case both exits would be of hyperbolic type because $\QR^2_i$ never has a multiple root in this system.

\section{Rough topology}\label{sec4}
To accomplish the \textit{rough} topological analysis of an integrable system one has to do the following:

1) find the admissible region (the image of the integral map);

2) for each regular value of the integral map find the number of Liouville tori in the pre-image of this value, i.e., establish the topology of regular integral manifolds;

4) for each critical value of the integral map find the topological type of its pre-image, i.e.,  establish the topology of irregular integral manifolds usually called (critical) integral surfaces;

5) show a collection of paths in the integral constants space with complete description of the topological type of bifurcations taking place along this path; this set of paths should be sufficient to find out the character of any bifurcation occurring in the phase space.

For the classical integrable systems in the rigid body dynamics, this program was fulfilled in \cite{Kh1976,Kh1979,PoKh1979,KhPMM83,Kh1983}. In this section, we find out the rough topology of the system $\mathcal{N}$, thus describing its rough Liouville equivalence class. Further, to describe the \textit{exact} topology of the system, i.e., to establish its Liouville equivalence class (for detailed definitions see \cite{igs}), we need to use general classifications of critical points and their neighborhoods \cite{LermIII,igs} and \textit{exact} topological invariants. By this term we mean the invariants that completely define Liouville foliations on some sufficient collection of 3-dimensional integral manifolds, e.g. Fomenko\,--\,Zieschang invariants \cite{FoZi1991} or marked loop molecules \cite{BolMet,BRF}. This will be done in the following sections.

According to Lemma~\ref{lem3}, the integral manifolds undergo topological transformations only in two cases, namely, when $(m,h)$ crosses the discriminant set of one of the product polynomials $\PR^2_i \QR^2_i$ ($i=1,2$) or when $(m,h)$ reaches the curve $L_0$.
In the first case, for $\ell \ne 0$, the map $\chi$ identifies in $\mathcal{N}$ two sets  $\tG(\ell,m,h)$ and $\tG(-\ell,m,h)$ different in $\mathcal{M}$. Therefore, in the sequel we by default suppose that $\ell \geqslant 0$. In the second case the set $\tG(0,m,h)$ is factorized by the action on $\Gamma_1{\times}\Gamma_2$  diagonal with respect to the induced actions on $\bbR^3(u_i,\QR_i,\PR_i)$
\begin{equation}\label{eq4_1}
\begin{array}{l}
  \chi_1: (u_1,\QR_1,\PR_1) \mapsto  (-u_1,\QR_1,-\PR_1), \\
  \chi_2: (u_2,\QR_2,\PR_2) \mapsto  (-u_2,-\QR_2,\PR_2).
\end{array}
\end{equation}

In what follows if $S$ stands for some set and $n$ is a positive whole number, then $n S$ denotes $n$ isolated copies of $S$.

\begin{figure}[!ht]
%figure02%
\centering
\includegraphics[width=0.8\textwidth, keepaspectratio = true]{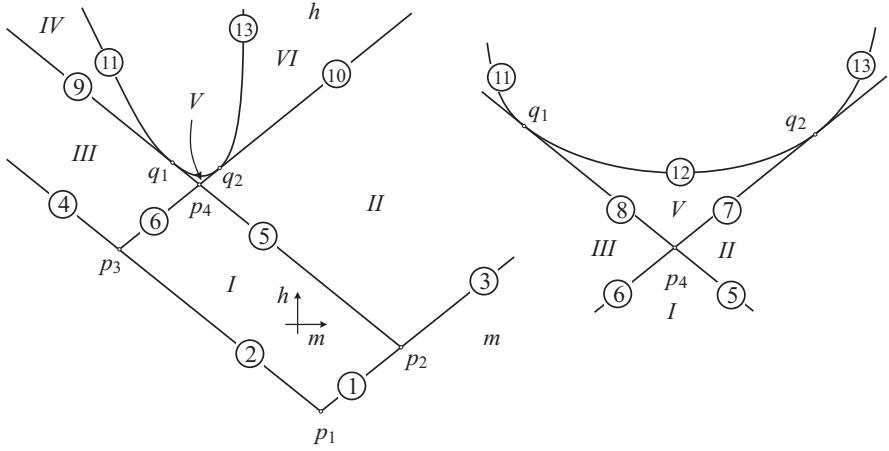}
\caption{Bifurcation diagram, chambers and segments.}\label{figbifset}
\end{figure}

\begin{theorem}\label{theo2}
The bifurcation diagram $\Sigma$ of the system $\mathcal{N}$ consists of the half-lines
\begin{equation*}
\begin{array}{lll}
R_b^-: & h=-r^2 m - 2b, & h \geqslant -a-b, \\
R_b^+: & h  =  - r^2 m + 2b, & h \geqslant -a+b, \\
R_a^-: & h  =  r^2 m - 2a, & h \geqslant -a-b, \\
R_a^+: & h  =  r^2 m + 2a,  & h \geqslant a-b
\end{array}
\end{equation*}
and of the curve $L_0$ defined by \eqref{eq3_2}.

The set $\Sigma$ divides the admissible region in the $(m,h)$-plane into six chambers $\ts{I}-\ts{VI}$ shown in Fig.~$\ref{figbifset}$. The corresponding regular integral manifolds are listed in the last column of Table~$1$. The integral surfaces in the pre-image of the $13$ segments forming the set $\Sigma$ are listed in the last column of Table~$2$.
\end{theorem}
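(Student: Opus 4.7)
The plan is to identify the bifurcation set $\Sigma$ directly from the description of integral manifolds as $\hat{\pi}(\Gamma_1\times\Gamma_2)$, and then to work through the six chambers and thirteen arcs individually. As explained just before the theorem, topological changes of $\Gamma_1\times\Gamma_2$ can come only from (a) a multiple root of the product $f_i(u_i)(1-u_i^2)$ for some $i$, or (b) a crossing of $L_0$. A short calculation using identity \eqref{eq2_2} shows that the discriminants of $f_1$ and $f_2$ are the positive constants $a^2$ and $b^2$ respectively: for instance, for $f_1(u_1)=h_* u_1^2+a\ell u_1-a^2 m$ one finds $a^2\ell^2+4a^2m h_*=a^2(\ell^2-2mh-2p^2 m^2)=a^2$. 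Hence neither $f_i$ ever has a multiple root, and (a) reduces to the condition $f_i(\pm 1)=0$, i.e., to a root of $f_i$ coinciding with $\pm 1$. Substituting $u_1=\pm 1$ into $f_1=0$ and eliminating $\ell$ via \eqref{eq2_2} yields the quadratic $(h-r^2 m)^2=4a^2$, producing the two half-lines $R_a^\pm$; the analogous computation with $f_2$ gives $R_b^\pm$. The endpoint inequalities $h\geqslant \pm a\pm b$ are recovered by intersecting each half-line with the admissible region ${\rm Im}\,J$ determined in the proof of Lemma~\ref{lem1}, and adjoining $L_0$ completes $\Sigma$.

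Next, for each of the six chambers I would choose a representative point $(m,h)$ and read off the signs of $f_i(\pm 1)$ (together with the sign of $f_i$ on $[-1,1]$ when both endpoint values are positive) to determine how many roots of $f_i$ lie in $(-1,1)$. The standard curves of Fig.~\ref{figgamma} then give $\Gamma_i\in\{\varnothing,\Cir^1,2\Cir^1\}$. Off $L_0$, the involution $\chi$ maps the $\ell$-sheet of $\mathcal{D}$ diffeomorphically onto the disjoint $(-\ell)$-sheet, so in $\mathcal{N}$ one has $J_{m,h}\cong \Gamma_1\times\Gamma_2$; on $L_0$ the formulas \eqref{eq4_1} exhibit $\chi$ as a free $\bbZ_2$-action on $\Gamma_1\times\Gamma_2$, so $J_{m,h}\cong (\Gamma_1\times\Gamma_2)/\bbZ_2$, in agreement with Theorem~\ref{theo1}. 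Multiplying out the factors in each chamber fills in the last column of Table~1.

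For each of the 13 arcs of $\Sigma$ exactly one of the five equations $f_1(\pm 1)=0$, $f_2(\pm 1)=0$, $\ell=0$ holds while the remaining data are generic. On such an arc the corresponding factor $\Gamma_i$ degenerates to one of the figures $\mathcal{V}$ or $\mathcal{W}$ of Fig.~\ref{figgamma}, selected among the standard transformations 1)--3) listed just before \S\ref{sec4} according to the position of the second root of $f_i$ and the sign of $f_i$ on $[-1,1]$; the other factor $\Gamma_j$ stays regular, and on $L_0$-arcs the critical surface is the regular product modulo the non-free limit of $\chi$. The main obstacle is precisely this case-by-case bookkeeping: for each of the 13 arcs one must identify, from the geometry of the adjacent chambers, which of 1)--3) applies, and hence which of $\mathcal{V}$ or $\mathcal{W}$ appears, in order to write down the critical integral surface that ends up in Table~2.
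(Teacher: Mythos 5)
Your overall strategy coincides with the paper's: the bifurcation set is located where one of the products $\PR_i^2\QR_i^2$ acquires a multiple root or where $(m,h)$ reaches $L_0$, and Tables~1 and 2 are then filled in chamber by chamber and arc by arc via the identification $J_{m,h}\cong\Gamma_1{\times}\Gamma_2$ (resp.\ its $\chi$-quotient over $L_0$) from Theorem~\ref{theo1}. Where you genuinely diverge is in how the four half-lines and their endpoints are obtained. The paper exhibits the critical points explicitly as the pendulum-type motions \eqref{eq4_2}; the lines $h=r^2m\pm2a$, $h=-r^2m\pm2b$ and the inequalities $h\gs\pm a\pm b$ then drop out of the parametrizations $h=\dot\phi^2\pm a-b\sin\phi$, etc., and as a by-product one obtains the dynamical description of the critical set (closed orbits, rank-0 points, separatrices) that is reused later for the non-degeneracy analysis and the atoms. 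You argue purely algebraically: the observation that the discriminants of $f_1,f_2$ are identically $a^2,b^2$ (using \eqref{eq2_2} in the form $\ell^2=1+2mh+2p^2m^2$ consistent with \eqref{eq3_2}) is correct and cleanly rules out interior double roots, a point the paper only handles implicitly through the $\theta_{1,2}$ bookkeeping of Table~1; eliminating $\ell$ from $f_i(\pm1)=0$ indeed yields $(h\mp r^2m)^2=4a^2$ or $4b^2$. Both routes are legitimate.

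The step you should not leave to a citation is the recovery of the endpoints by ``intersecting with ${\rm Im}\,J$.'' The formula for ${\rm Im}\,J$ printed in the proof of Lemma~\ref{lem1} (with $\min$) describes a region containing the \emph{entire} lines $h=r^2m+2a$ and $h=-r^2m+2b$ (note $2p^2m^2+2hm+1=(2am+1)^2$ on the first of them), so taken literally it does not truncate them at all; what does the job is the actual lower boundary $h=\max[\,r^2m-2a,\,-r^2m-2b\,]$ of the admissible region, against which these two lines terminate precisely at $h=a-b$ and $h=-a+b$. Moreover, intersection with ${\rm Im}\,J$ only bounds $\Sigma$ from above: to get equality you must also check that every admissible point of these lines is genuinely a critical value, i.e.\ that the singular point of $\Gamma_i$ at $u_i=\pm1$ coexists with a non-empty second factor on the \emph{same} $\ell$-sheet; this follows from the sheet-exchanging involution $\chi$ of Lemma~\ref{lem3}, but it needs saying. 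Finally, a wording caution: the $\bbZ_2$-action \eqref{eq4_1} on $\Gamma_1{\times}\Gamma_2$ over $L_0$ is free (a fixed point would force $m=0$), so ``the non-free limit of $\chi$'' should not be taken literally --- what degenerates at $\ell=0$ is only the action on the parameter $\ell$, not the action on the fiber.
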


\begin{table}
\begin{center}
\footnotesize
\renewcommand{\rul}{\rule[-4pt]{0pt}{12pt}}

\begin{tabular}{|c| c| c| c|c|c|c|}
\multicolumn{7}{l}{\fts{Table 1. Regular cases}}\\
\hline
\rul{\fns{Chamber}}
&
{\fns{Position of $\theta_{1,2}$}}
&
{\fns{$u_1\in$}}
&
{\fns{$u_2\in$}}
&
{$\Gamma_1$}
&
{$\Gamma_2$}
&
{$J_{m,h}$}
\\
\hline
\rul $\ts{I}_+$&$ - b < \theta _1  < b < a < \theta _2 $
&
\multirow{2}{*}{$[\,\tau_2,1 \,]$}
&
\multirow{2}{*}{$[\, - 1,\sigma_1 \,]$}
&
\multirow{2}{*}{$\Cir^1$}
&
\multirow{2}{*}{$\Cir^1$}
&
\multirow{2}{*}{$\Tor^2$}
\\
\hhline{|-|-|~|~|~|}
\rul $\ts{I}_-$&$\theta _2  <  - a <  - b < \theta _1  < b < a$&{}&{}&{}&{}&{}\\
\hline
\rul $\ts{II}_+$&$b < \theta _1  < a < \theta _2 $
&
\multirow{2}{*}{$[\,\tau_2,1 \,]$}
&
\multirow{2}{*}{$[\, - 1,1 \,]$}
&
\multirow{2}{*}{$\Cir^1$}
&
\multirow{2}{*}{$2\Cir^1$}
&
\multirow{2}{*}{$2\Tor^2$}
\\
\hhline{|-|-|~|~|~|}
\rul $\ts{II}_-$&$\theta _2  <  - a <  - b < b < \theta _1  < a$&{}&{}&{}&{}&{}\\
\hline
\rul $\ts{III}$&$ - a < \theta _2  <  - b < \theta _1  < b < a$
&
$[\,-1 ,1\,]$
&
$[\, - 1,\sigma _1 \,]$
&$2\Cir^1$&$\Cir^1$
&
{$2\Tor^2$}
\\
\hline
\rul $\ts{IV}$&$ - a <  - b < \theta _2  < \theta _1  < b < a$
&
$[\,-1,1\,]$
&
$[\,\sigma _2 ,\sigma _1 \,]$
&$2\Cir^1$&$2\Cir^1$
&
{$4\Tor^2$}
\\
\hline
\rul $\ts{V}$&$ - a < \theta _2  <  - b < b < \theta _1  < a$
&
$[\,-1 , 1\,]$
&
$[\, - 1,1\,]$
&$2\Cir^1$&$2\Cir^1$
&
{$4\Tor^2$}
\\
\hline
\rul $\ts{VI}_+$&$a < \theta _1  < \theta _2 $
&
\multirow{2}{*}{$[\,\tau_2,\tau_1\,]$}
&
\multirow{2}{*}{$[\, - 1,1 \,]$}
&
\multirow{2}{*}{$2\Cir^1$}
&
\multirow{2}{*}{$2\Cir^1$}
&
\multirow{2}{*}{$4\Tor^2$}
\\
\hhline{|-|-|~|~|~|}
\rul $\ts{VI}_-$&$\theta _2  <  - a < a < \theta _1 $&{}&{}&{}&{}&{}\\
\hline
\end{tabular}
\end{center}
\end{table}

The proof almost obviously follows from Lemma~\ref{lem3}. Let us make only some remarks.

The half-lines $R_{a}^\pm$ and $R_{b}^\pm$ correspond to the cases of a multiple root in one of the polynomials $\PR^2_i \QR^2_i$ ($i=1,2$). It happens if $\tau_{1,2}=\pm 1$ or $\sigma_{1,2}=\pm 1$. The corresponding critical motions are pendulum type motions
\begin{equation}\label{eq4_2}
\begin{array}{l}
R_a^{\pm}: \left\{ \begin{array}{l}
  \alpha_2 = \alpha_3 = 0, \quad \alpha_1 = \mp a,\quad \beta_1 = 0,\\
  \beta_2 = b \sin{\phi},\quad \beta_3 = b \cos{\phi},\\
  \omega_2 = \omega_3 = 0, \quad \omega_1 = \dot {\phi}, \quad 2\ddot{\phi} = b \cos {\phi},\\
  h=\dot{\phi}^2\pm a -b \sin\phi, \quad r^2m=\dot{\phi}^2 \mp a -b \sin\phi,
\end{array}\right. \\
R_b^{\pm}: \left\{ \begin{array}{l}
  \beta_1 = \beta_3 = 0,\quad \beta_2 = \mp b,\quad \alpha_2 = 0,\\
  \alpha_1 = a \cos{\phi},\quad \alpha_3 = a \sin{\phi},\\
  \omega_1 = \omega_3 = 0,\quad \omega_2 = \dot {\phi}, \quad 2\ddot{\phi} = -a \sin {\phi}, \\
  h=\dot{\phi}^2 \pm b -b \cos\phi, \quad r^2 m=-\dot{\phi}^2 \pm b +a \cos\phi.
\end{array}\right.
\end{array}
\end{equation}
They include closed orbits of rank 1, singular points of rank 0 at the intersections of the half-lines and separatrices of rank 1 of unstable singular points. From here we readily obtain the inequalities for $h$ on the half-lines.

Note that the mutual position of the values $\tau_i,\sigma_i,\pm 1$ is completely defined
by the position of $\theta_{1,2}=(\ell\mp 1)/(2m)$ with respect to $\pm a,\pm b$.
In Table~1, we present the inequalities for $\theta_{1,2}$ and, consequently, define the accessible regions (segments of oscillation) for $u_1,u_2$, the topology of $\Gamma_i$ and the resulting type of regular integral manifolds $J_{m,h}$. Here the $\pm$ sign attached to the notation of a chamber means the sign of $m$ in the corresponding part of this chamber. It affects the values $\theta_{1,2}$ but does not change the rest of information. The number of tori in $J_{m,h}$ for $\ell \ne 0$ equals $2^k$ where $k$ is the number of those radicals among $\PR_i,\QR_i$ which have constant sign on $\Gamma_1{\times}\Gamma_2$.

\begin{agre}\label{theagre1}
Suppose that the radical $\PR_i$ or $\QR_i$ does not vanish on the connected component of a regular integral manifold or a critical integral surface. Then we respectively denote
\begin{equation}\notag
  e_i =\sgn \PR_i \quad {\rm or} \quad d_i=\sgn \QR_i.
\end{equation}
\end{agre}

\begin{table}
\begin{center}
\footnotesize
\renewcommand{\rul}{\rule[-4pt]{0pt}{12pt}}
\begin{tabular}{|c| c| c| c|c|c|}
\multicolumn{6}{l}{\fts{Table 2. Critical cases}}\\
\hline
\rul{\fns{Seg.}}
&
{\fns{$u_1\in$}}
&
{\fns{$u_2\in$}}
&
{$\Gamma_1$}
&
{$\Gamma_2$}
&
{$J_{m,h}$}
\\
\hline
\rul $1$
&
$\{1_*\}$
&
$[\, - 1,\sigma_1 \,]$
&
{$\{\cdot\}$}
&
{$\Cir^1$}
&
{$\Cir^1$}
\\
\hline
\rul $2$
&
{$[\,\tau_2,1 \,]$}
&
{$\{ - 1_*\}$}
&
{$\Cir^1$}
&
{$\{\cdot\}$}
&
{$\Cir^1$}
\\
\hline
\rul $3$
&
$\{1_*\}$
&
$[\,-1,1\,]$
&$\{\cdot\}$
&$2\Cir^1$
&
{$2\Cir^1$}
\\
\hline
\rul $4$
&
$[\,-1 , 1\,]$
&
$\{ - 1_*\}$
&
$2\Cir^1$
&
$\{\cdot\}$
&
{$2\Cir^1$}
\\
\hline
\rul $5$
&
$[\,\tau_2 , 1\,]$
&
$[\,- 1,1_*\,]$
&
$\Cir^1$
&
$\mathcal{V}$
&
{$\mathcal{V}{\times}\Cir^1$}
\\
\hline
\rul $6$
&
$[\,-1_* , 1\,]$
&
$[\,- 1,\sigma_1\,]$
&
$\mathcal{V}$
&
$\Cir^1$
&
{$\mathcal{V}{\times}\Cir^1$}
\\
\hline
\rul $7$
&
$[\,-1_* , 1\,]$
&
$[\,- 1, 1\,]$
&
$\mathcal{V}$
&
$2\Cir^1$
&
{$2\mathcal{V}{\times}\Cir^1$}
\\
\hline
\rul $8$
&
$[\,-1 , 1\,]$
&
$[\,- 1, 1_*\,]$
&
$2\Cir^1$
&
$\mathcal{V}$
&
{$2\mathcal{V}{\times}\Cir^1$}
\\
\hline
\rul $9$
&
$[\,-1 , 1\,]$
&
$[\,- 1_*, \sigma_1\,]$
&
$2\Cir^1$
&
$\mathcal{V}$
&
{$2\mathcal{V}{\times}\Cir^1$}
\\
\hline
\rul $10$
&
$[\,\tau_2, 1_*\,]$
&
$[\,- 1, 1\,]$
&
$\mathcal{V}$
&
$2\Cir^1$
&
{$2\mathcal{V}{\times}\Cir^1$}
\\
\hline
\rul $11$
&
$[\,-1 , 1\,]$
&
$[\,- \sigma, \sigma\,]$
&
$2\Cir^1$
&
$2\Cir^1$
&
{$2\Tor^2$}
\\
\hline
\rul $12$
&
$[\,-1 , 1\,]$
&
$[\,- 1, 1\,]$
&
$2\Cir^1$
&
$2\Cir^1$
&
{$2\Tor^2$}
\\
\hline
\rul $13$
&
$[\,-\tau , \tau\,]$
&
$[\,- 1, 1\,]$
&
$2\Cir^1$
&
$2\Cir^1$
&
{$4\Tor^2$}
\\
\hline
\end{tabular}
\end{center}
\end{table}

In Table~2, we collect the information on the critical cases. In the first column we give the notation of the smooth segments of $\Sigma$ according to Fig.~\ref{figbifset}. In the corresponding accessible regions $\pm 1_*$ stand for the double root of $\PR_i^2\QR_i^2$.

More analysis is needed for the segments on the curve $L_0$. Here the roots of $\PR_i^2=f_i(u)$ become centrally symmetric and are denoted by $\pm \tau$, $\pm\sigma$ for $i=1,2$. The curve $L_0$ is tangent to $R_b^+$ and $R_a^+$ respectively at the points $q_{1,2}$
\begin{equation}\notag
\begin{array}{l}
  q_1 = \left( -\dfrac{1}{2b}, \dfrac{a^2+3b^2}{2b} \right),\; q_2 = \left( -\dfrac{1}{2a}, \dfrac{3a^2+b^2}{2a} \right).
\end{array}
\end{equation}
Let us formulate the result in a separate statement.

\begin{propos}\label{prop03}
The integral manifolds in the pre-image of the curve $L_0$ are as follows: $2\Tor^2$ on segments $11,12$; $4\Tor^2$ on segment $13$; $\mathcal{W}{\times}\Cir^1$ at the point $q_1$;
$2\mathcal{V}{\times}\Cir^1$ at the point $q_2$.
\end{propos}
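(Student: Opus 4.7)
The plan is to apply Theorem~\ref{theo1}: on $L_0$ we have $J_{m,h}=(\Gamma_1{\times}\Gamma_2)/\bbZ_2$ under the diagonal involution $\chi=\chi_1{\times}\chi_2$ of~\eqref{eq4_1}. First I would specialize the separation polynomials at $\ell=0$, so the roots become centrally symmetric: write $\tau=-2am$ and $\sigma=-1/(2bm)$, both positive since $m<0$. Comparing $\tau,\sigma$ with $1$ identifies each factor: on segment~11 ($\tau\gs 1,\sigma\ls 1$) both are $2\Cir^1$, with $\Gamma_1$ split by $\sgn\PR_1$ and $\Gamma_2$ by $\sgn\QR_2$; on segment~12 ($\tau,\sigma\gs 1$) both are $2\Cir^1$ split by $\sgn\PR_1$ and $\sgn\PR_2$; on segment~13 ($\tau\ls 1,\sigma\gs 1$) both are $2\Cir^1$ split by $\sgn\QR_1$ and $\sgn\PR_2$; at $q_1$ ($\sigma=1$) $\Gamma_2$ degenerates to $\mathcal{W}$ while $\Gamma_1$ remains $2\Cir^1$ split by $\sgn\PR_1$; at $q_2$ ($\tau=1$) the roles are reversed.

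Next I would read off the action of each $\chi_i$. Because $\chi_1$ reverses $\PR_1$ and fixes $\QR_1$, it interchanges the two components of $\Gamma_1$ precisely when they are labelled by $\sgn\PR_1$; the mirror rule with $\QR_2$ governs $\chi_2$. On any non-interchanged component, a direct parameterization ($u_i=\cos\phi$ with $\QR_i=\sin\phi$, or the corresponding form on a $\PR$-circle) exhibits $\chi_i$ as the free half-rotation $\phi\mapsto\phi+\pi$. Assembling these pieces settles the three segments: on segments~11 and~12 the four tori of $\Gamma_1{\times}\Gamma_2$ are paired two-by-two by $\chi$, so $J_{m,h}=2\Tor^2$; on segment~13 neither $\chi_i$ permutes its components, $\chi$ acts on each torus as a free translation, and each quotient is again a torus, giving $4\Tor^2$. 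At $q_1$, $\chi_1$ swaps the two components of $\Gamma_1$, so the two copies of $\Cir^1{\times}\mathcal{W}$ forming $\Gamma_1{\times}\Gamma_2$ are interchanged by $\chi$ and identified bijectively in the quotient, yielding a single $\mathcal{W}{\times}\Cir^1$.

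The main obstacle is the point $q_2$. There $\chi_2$ preserves the labels of $\Gamma_2=2\Cir^1$, so $\mathcal{W}{\times}2\Cir^1$ splits into two $\chi$-invariant copies of $\mathcal{W}{\times}\Cir^1$, each carrying the twisted action $(x,\psi)\mapsto(\chi_1 x,\psi+\pi)$. A direct check shows $\chi_1$ is a fixed-point-free involution of $\mathcal{W}$ that interchanges the two singular points $(\pm 1,0,0)$ and the two circles $\mathcal{D}^{\pm}=\{\PR_1=\pm\sqrt{a/2}\,\QR_1\}$; counting cells ($2$ vertices and $4$ arcs collapse to $1$ vertex and $2$ loops) identifies $\mathcal{W}/\chi_1$ with the figure-eight $\mathcal{V}$. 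Each quotient copy is then a $\Cir^1$-bundle over $\mathcal{V}$ whose structure group lies in the rotation-by-$\pi$ subgroup of $\mathrm{SO}(2)$; since $\mathcal{V}$ is a $1$-complex, $H^2(\mathcal{V};\bbZ)=0$ and every such circle bundle is trivial, so each copy is $\mathcal{V}{\times}\Cir^1$ and the two assemble into $2\mathcal{V}{\times}\Cir^1$. The delicate step will be this bundle triviality: the diagonal twist by $\chi$ first appears to produce a non-trivial mapping torus, and one must argue that the residual $\bbZ_2$-monodromy is removable inside $\mathrm{SO}(2)$.
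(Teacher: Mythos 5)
Your proposal is correct and follows essentially the same route as the paper: identify the components of $\Gamma_1{\times}\Gamma_2$ at $\ell=0$ by the signs of the non-vanishing radicals, determine which components the diagonal involution $\chi$ swaps or preserves, and pass to the quotient. The only difference is that at $q_2$ you justify $(\mathcal{W}{\times}\Cir^1)/\chi\cong(\mathcal{W}/\chi_1){\times}\Cir^1$ by a circle-bundle-over-a-graph triviality argument, whereas the paper simply asserts this factorization (making the analogous ``admitting some inexactness'' identification explicit only later, when treating the atoms).
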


\begin{proof} We see from Table~2 that the four components of the set $\Delta(0,m,h)$ differ by the signs of $\PR_1,\QR_2$ on segment $11$, by the signs of $\PR_1,\PR_2$ on segment $12$, and by the signs of $\QR_1,\PR_2$ on segment $13$. At the same time according to \eqref{eq4_1} the $\mathbb{Z}_2$-symmetry $\chi$ of $\Delta(0,m,h)$ simultaneously changes the signs of $\PR_1$ and $\QR_2$. This means that $\chi$ identifies the components having the same product $e_1 d_2$ on segment $11$ as in Fig.~\ref{figglu0},\,$a$. Here the arrows show the connected components of the direct product $\Gamma_1{\times}\Gamma_2$, the numbers stand for the resulting connected components in the quotient set. On segment $12$ (Fig.~\ref{figglu0},\,$b$) $\chi$ identifies the components with the opposite sign $e_1$ but the same sign $e_2$. Therefore, the result is $(\Gamma_1{\times}\Gamma_2)/\chi=2\Tor^2$. On segment $13$ the symmetry preserves all four components of $\Delta(0,m,h)$. The result is $(\Gamma_1{\times}\Gamma_2)/\chi=4\Tor^2$ (Fig.~\ref{figglu0},\,$c$).

\begin{figure}[!ht]
%figure03%
\centering
\includegraphics[width=0.88\textwidth, keepaspectratio = true]{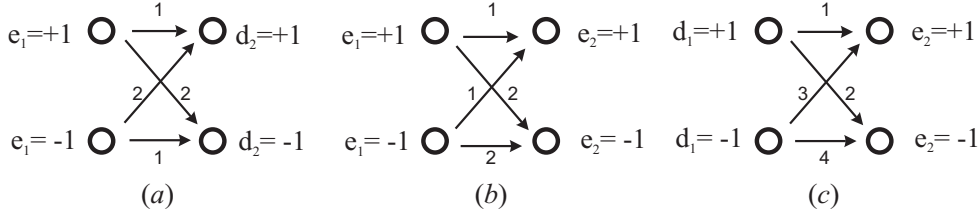}
\caption{Gluing the components in $J^{-1}(L_0)$.}\label{figglu0}
\end{figure}

Let us take the points $q_{1,2}$. The critical points in the pre-image are of rank~1 and form closed orbits. For the accessible regions we have
\begin{equation*}
  \begin{array}{llll}
     q_1: \quad u_1 \in [-1,1], & \tau > 1, &  u_2 \in [-1_*,1_*], & \sigma = 1;\\
     q_2: \quad u_1 \in [-1_*,1_*], & \tau = 1, &  u_2 \in [-1,1], & \sigma > 1.
   \end{array}
\end{equation*}
The sets $\Gamma_{1,2}$ for $q_1$ and $q_2$ are shown in Fig.~\ref{figpointsq1q2},\,$a$ and $b$. Again, $\chi$ acts as simultaneous central symmetry in $(u_1,\PR_1)$- and $(u_2,\QR_2)$-planes. At $q_1$, it glues together the components of $\Gamma_1$ and preserves $\Gamma_2$. The result is $\mathcal{W}{\times}\Cir^1$. At $q_2$, $\chi$ preserves the components of $\Gamma_2$ and therefore factorizes $\Gamma_1$. The result is $2\mathcal{V}{\times}\Cir^1$. This proves the statement.

\begin{figure}[!ht]
%figure04%
\centering
\includegraphics[width=\textwidth, keepaspectratio = true]{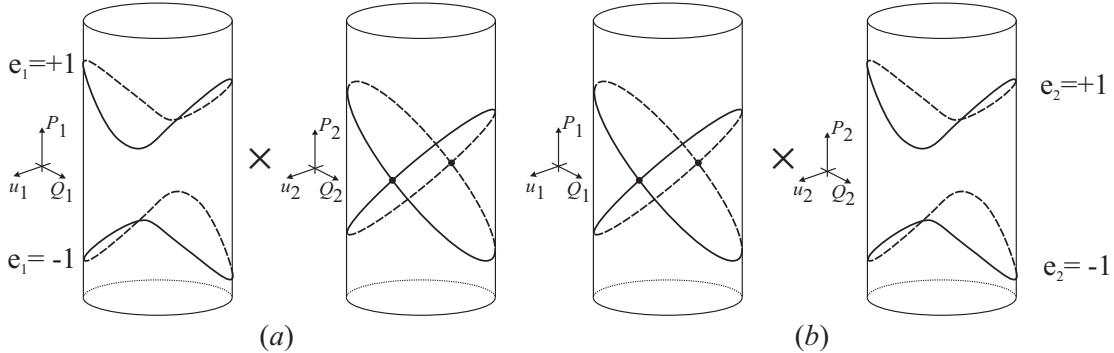}
\caption{The $\Delta$-sets at $q_1,q_2$.}\label{figpointsq1q2}
\end{figure}

\end{proof}

Now we describe all bifurcations in terms of the atoms according to the contemporary notation  \cite{Fo1988,FoZi1991,igs}. Let us recall some terminology from \cite{igs}.

Consider a 2-surface (two-dimensional compact manifold without boundary) and a Morse function $f$ on it having a critical value $f_0$. A 2-atom is a neighborhood of a connected component of the set $f_0-\delta \leqslant f \leqslant f_0+\delta$ for sufficiently small $\delta$, foliated into level lines of $f$ and considered up to the fiber equivalence. A 2-atom is supposed to have only one connected singular fiber $\{f=f_0\}$. If an atom $U$ is given, its singular fiber is denoted by $\mathcal{L}(U)$.

In Fig.~\ref{figatoms1}, the following atoms are shown: $A$ ($\partial A=\Cir^1$, $\mathcal{L}(A)=\{\cdot\}$); $B$~($\partial B=3\Cir^1$, $\mathcal{L}(B)=\mathcal{V}$);
$C_2$ ($\partial C_2=4\Cir^1$, $\mathcal{L}(C_2)=\mathcal{W}$); $C_1$ ($\partial C_1=2\Cir^1$, $\mathcal{L}(C_1)=\mathcal{W}$).
\begin{figure}[!ht]
%figure05%
\centering
\includegraphics[width=0.5\textwidth, keepaspectratio = true]{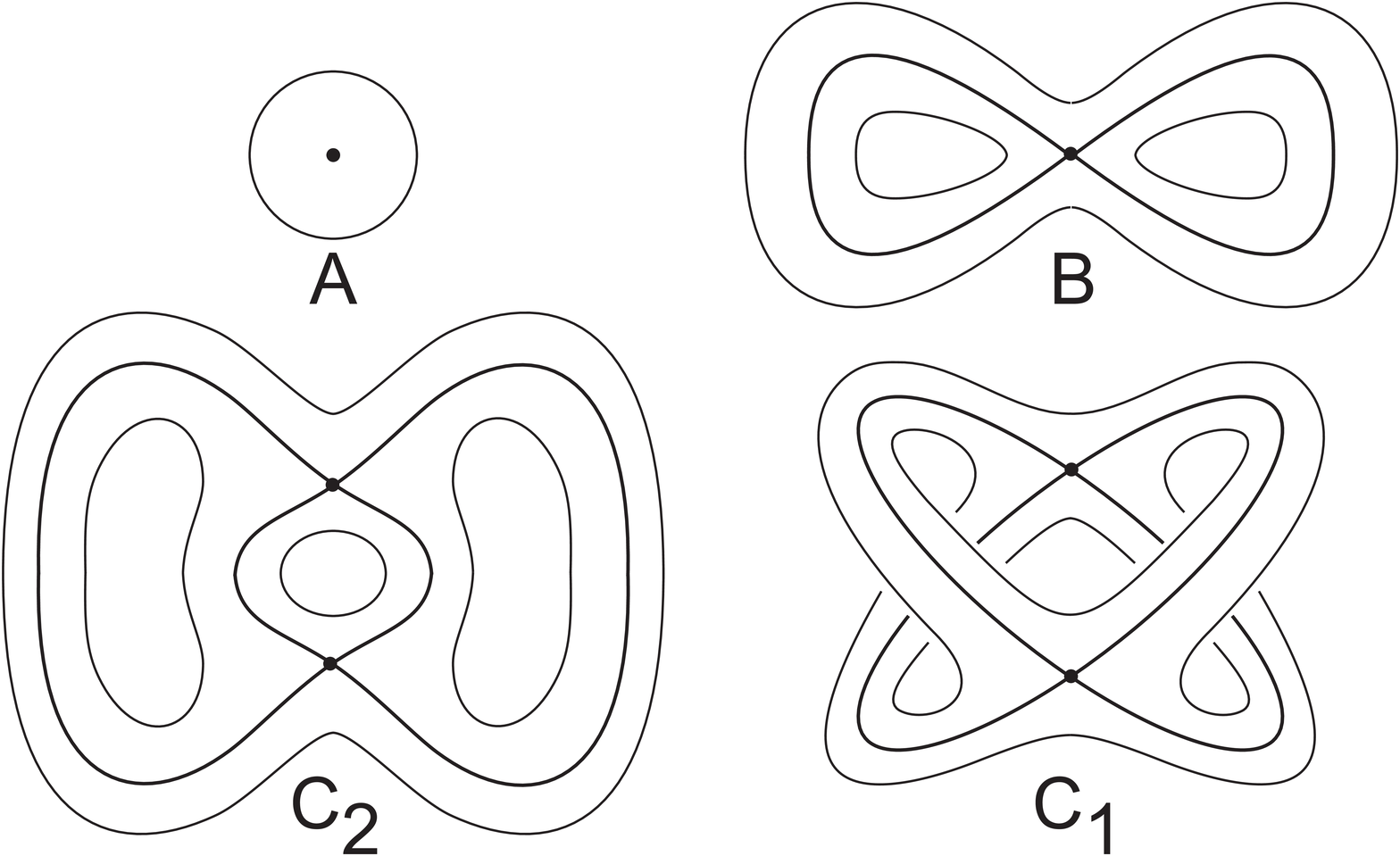}
\caption{Some known 2-atoms.}\label{figatoms1}
\end{figure}

For integrable Hamiltonian systems with two degrees of freedom, bifurcations of the Liouville tori are described in terms of 3-atoms. Let $F_1{\times}F_2$ be the integral map. Fixing a regular value $r_1$ of $F_1$ we obtain a 3-dimensional iso-$F_1$ manifold $\iso=\iso(r_1)$. Let us suppose for simplicity that $\iso$ is connected, otherwise we take one connected component of~$\iso$. Let $r_2$ be a critical value of $F_2$ on $\iso$ and $\mathcal{L}$ be the connected component of $(F_1{\times}F_2)^{-1}(r_1,r_2)$. Then $\mathcal{L}$ is called a singular leaf of the Liouville foliation on $\iso$. A 3-atom is a small enough connected neighborhood of $\mathcal{L}$ in $\iso$ containing no other singular leaves and invariant under the phase flow. In fact, 3-atoms as foliated manifolds are considered up to the fiber equivalence as defined in \cite{igs}. In the same way as before, for a given 3-atom $U$ we denote its singular leaf by $\mathcal{L}(U)$.

We easily obtain 3-atoms from the above mentioned 2-atoms by considering their direct products with a circle. Then, traditionally \cite{igs}, we keep for them the same notation. Thus, $\mathcal{L}(A)=\Cir^1$, $\mathcal{L}(B)=\mathcal{V}{\times}\Cir^1$,  $\mathcal{L}(C_2)=\mathcal{L}(C_1)=\mathcal{W}{\times}\Cir^1$ and the bifurcations of tori when crossing the singular leaves are
\begin{equation}\label{eq4_3}
  A: \varnothing \to \Tor^2, \quad B: \Tor^2 \to 2\Tor^2, \quad C_2: 2\Tor^2 \to 2\Tor^2, \quad C_1: \Tor^2 \to \Tor^2.
\end{equation}
Note that the 3-atom $C_1$ was predicted \cite{igs} but never has been met in real systems. Of course, bifurcations with non-symmetric atoms can be written the other way round depending on the direction in which we cross the bifurcation diagram.

In the case of minimal or maximal integral surfaces, symmetric atoms can be folded twice, so that the bifurcation $S\to U\to S$ turns into $\varnothing \to U \to 2S$. Let us use the notation $R$ for the atom of a minimal (maximal) torus. Here $\iso$ is a regular level of $F_1$ and on it $F_2$ has the form $(g-r_2)^2$ with regular function~$g$. The 2-atom $R$ (already not associated with any Morse function) is just an annulus foliated into circles, and the 3-atom $R$ is the direct product of an annulus and a circle foliated into 2-tori. Considering the torus $\{F_2=r_2\}$ as the singular leaf we have the following bifurcation $R: \varnothing \to 2\Tor^2$.

The atoms in \eqref{eq4_3} correspond to the critical points which are called simple \cite{LermIII} or non-degenerate \cite{igs}. For all points forming the motions \eqref{eq4_2} including those of rank 0 in the preimage of $p_1\ldots,p_4$ (see Fig.~\ref{figbifset}) the non-degeneracy is proved in \cite{RyKhMSb12}. The only exceptions are the motions in the pre-image of the points $q_1,q_2$, which are degenerate as critical points of rank~1 in the RS-system (with three degrees of freedom). Let us also mention that all tori in the pre-image of $L_0$ are degenerate as critical points of rank~2 in $\mathcal{P}$~\cite{RyKhMSb12}.

It is now easy to describe all non-degenerate bifurcations. We have the following atoms:

1) $A$ on segments $1,2$;

2) $2A$ on segments $3,4$;

3) $B$ on segments $5,6$;

4) $2B$ on segments $7,8,9,10$.

The points $p_i$ $(i=1,\ldots,4)$ have $h$-coordinate equal to $\mp a \mp b$ and are enumerated along the $h$-axis. The points $c_i$ of rank 0 in the pre-image of $p_i$ are also non-degenerate. For such points the local phase topology is described in terms of the almost direct products of 2-atoms \cite{Ng1995,igs}. Here we have only direct products. This fact follows immediately from the general classification of the cases with one singular point on a singular leaf \cite{LermIII,igs} and the atoms on the adjacent segments. Thus, small enough invariant under the phase flow neighborhoods $U_i$ of the points $c_i$ (called extended or saturated neighborhoods) are
\begin{equation}\label{eq4_4}
  U_1 = A{\times}A; \qquad U_{2,3} = A{\times}B; \qquad U_4 = B{\times}B.
\end{equation}

To finish the description of the rough topology we need to point out the bifurcations occurring on the pre-image of the curve $L_0$. Two of them are obvious. As shown in the proof of Proposition~\ref{prop03}, the symmetry $\chi$ on segments $11,12$ glues together pairwise the four tori of chambers $\ts{IV}$ and $\ts{V}$ respectively. Therefore we have here bifurcations with two atoms $R$.

\begin{figure}[!ht]
%figure06%
\centering
\begin{tabular}{m{0.75\textwidth} m{0.18\textwidth}}
\includegraphics[width=0.75\textwidth, keepaspectratio = true]{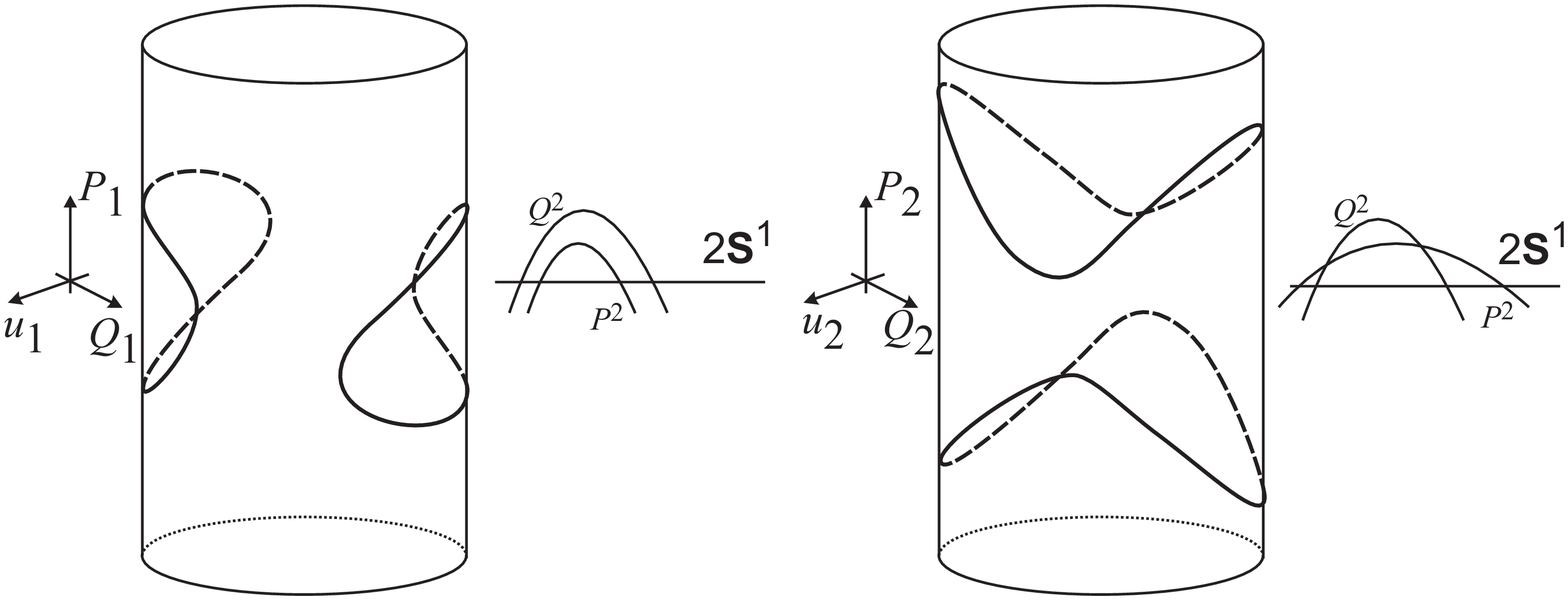}&
\includegraphics[width=0.18\textwidth, keepaspectratio = true]{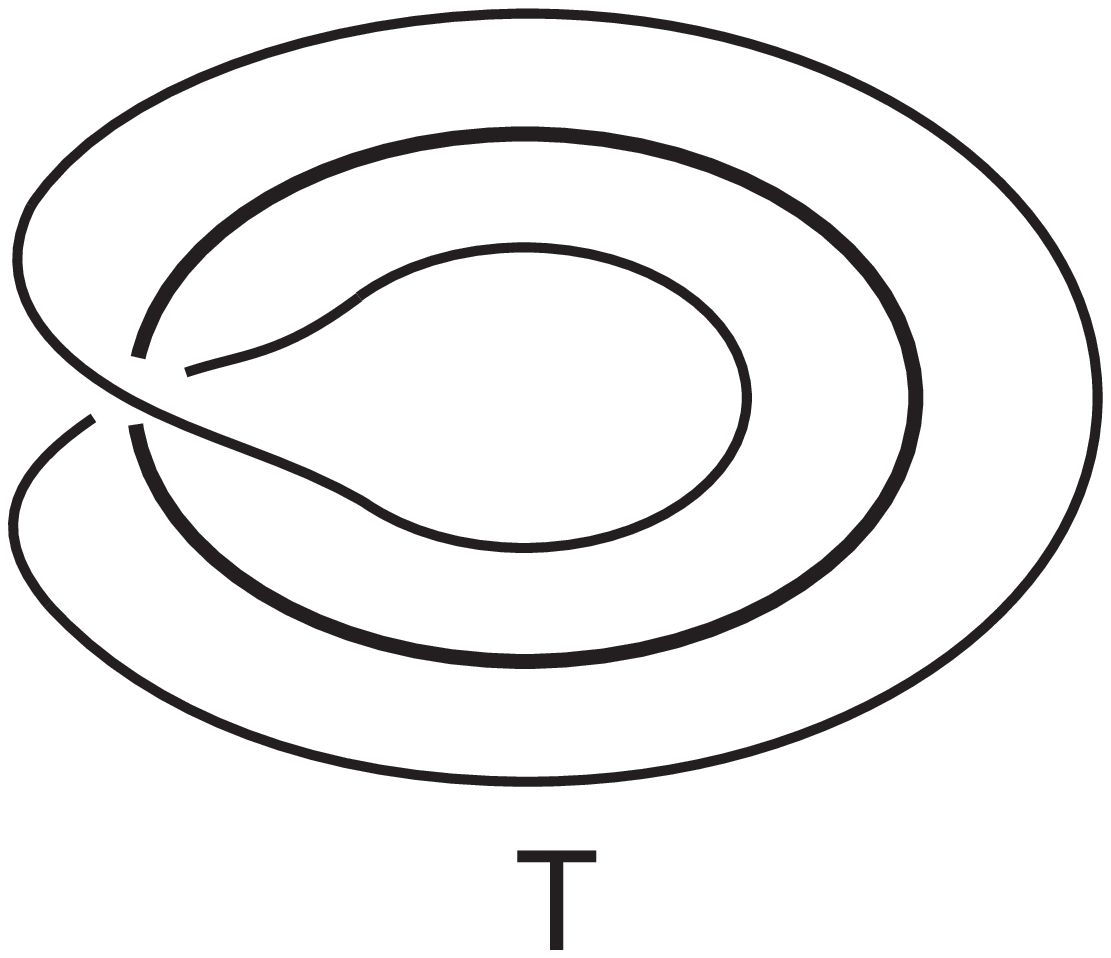}
\end{tabular}
\caption{The sets $\Gamma_i$ in chamber $\ts{VI}$ and the 2-atom $T$.}\label{figchamberVI}
\end{figure}

Let us consider a path reaching segment $13$ from chamber $\ts{VI}$. Each of the sets $\Delta(\pm \ell,m,h)\subset \mathcal{M}$ along this path has four components corresponding to the pair of signs $(d_1,e_2)$. Consider a continuous set of these components $\Tor^2(\ell;d_1,e_2)$ marked by $\ell \in (-\delta,\delta)$.
The component of $\Gamma_1(\ell)$ defined by $(d_1,e_2)$ is well projected onto the plane $(u_1,P_1)$ and the corresponding component of $\Gamma_2(\ell)$ is well projected onto the plane $(u_2,Q_2)$ as shown in Fig.~\ref{figchamberVI}. We then see that both $\chi_i: \Gamma_i(\ell) \to \Gamma_i(-\ell)$ are almost central symmetries on these planes and become real central symmetries on $\Gamma_i(0)$. Finally for a point on segment~$13$ we obtain the 3-atom $\Tor^2{\times}(-\delta,\delta)/\chi = M^2{\times}\Cir^1$, where $M^2$ is the M\"{o}bius band foliated into circles in the natural way with one singular central circle (the band's axis) twice shorter than all close ones. The product of the axis with a circle stands for the torus in the pre-image of $L_0$. The M\"{o}bius band itself foliated this way gives a non-orientable 2-atom, which we denote by $T$. The same notation we use for the corresponding 3-atom $M^2{\times}\Cir^1$. Its bifurcation in the direction from the border into the chamber is $T: \varnothing \to \Tor^2$, but unlike the atom $A$ having a circle as a singular leaf, here the singular leaf is a torus covered twice by the close regular torus as $\ell \to 0$. This 3-atom is impossible if we deal with a Hamiltonian system without degenerations of the symplectic structure.

To describe the topology in small enough neighborhoods of $q_1,q_2$ we first consider the situation arising in the covering manifold $\mathcal{M}$, where all transformations are easily seen from the sets $\Delta(\ell,m,h)$.

Consider a neighborhood of $q_1$ and unfold the picture from the integral constants space onto the plane $(\ell,m)$. We obviously obtain the cross formed by the lines $\ell=-2b m -1$ and $\ell=2b m +1$. On each of the lines the bifurcation is $2B$, along the horizontal line the bifurcation is $2C_2$ and along the vertical line the bifurcation is $2C_1$. In fact, this picture reflects two connected components of the neighborhood of the pre-image of $q_1$ in $\mathcal{M}$. On each component the bifurcation diagram together with the atoms is shown in Fig.~\ref{figbifQ1},\,$a$. The vertical iso-$M$ graphs change as in Fig.~\ref{figbifQ1},\,$b$, and the horizontal iso-$L$ graphs change as in Fig.~\ref{figbifQ1},\,$c$. This phenomenon can be called the splitting of the atom $C_2$. The possibility of topologically unstable systems is discussed in \cite{igs}. The case we obtain here is described as a possible transformation of iso-energy Fomenko graphs in \cite{RadRom}.

\begin{figure}[!ht]
%figure07%
\centering
\includegraphics[width=0.6\textwidth, keepaspectratio = true]{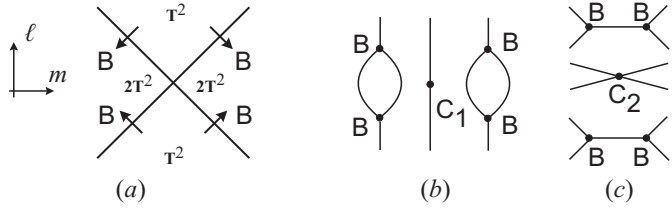}
\caption{The splitting of the atom $C_2$.}\label{figbifQ1}
\end{figure}

Let us consider a path reaching the point $q_1$ from chamber $\ts{III}$. All the way the curve $\Gamma_1$ has two components with different signs of $\PR_1$ while the closed curve $\Gamma_2(\ell,m,h)$ covers the whole set $\mathcal{W}$ as $\ell \to 0$ (Fig.~\ref{figchamberIII}).

\begin{figure}[!ht]
%figure08%
\centering
\includegraphics[width=0.8\textwidth, keepaspectratio = true]{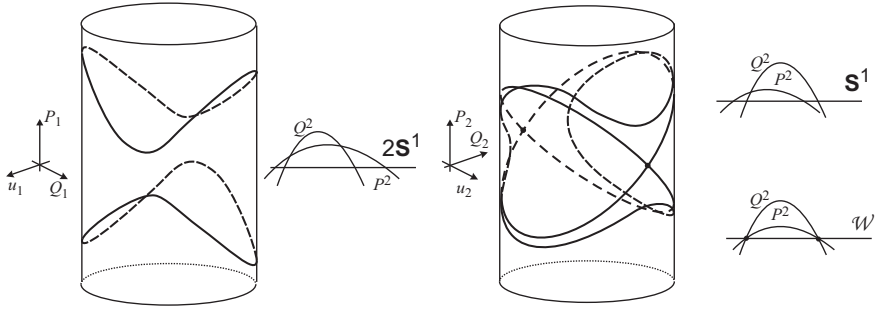}
\caption{The sets $\Gamma_i$ in chamber $\ts{III}$ while moving towards  $q_1$.}\label{figchamberIII}
\end{figure}

In $\mathcal{M}$, we unfold this path to a small vertical path $m=-\frac{1}{2b}, \ell \in (-\delta,\delta)$ and this process forms two atoms $\Cir^1{\times}C_1$, where $C_1$ denotes the 2-atom. Representing the sets $\Gamma_1 = 2\Cir^1$ and $\cup \Gamma_2(\ell) =  C_1$ on the plane as in Fig.~\ref{figAtomC1onplane}, we see that in this representation both $\chi_1$ and $\chi_2$ act as central symmetries, so $\chi_1$ identifies the components of $\Gamma_1 = 2\Cir^1$ and $\chi_2$ acts as a $\mathbb{Z}_2$-symmetry on $\cup \Gamma_2(\ell) = C_1$. Factorizing by the diagonal action, we can write, admitting some inexactness, $(2\Cir^1{\times}C_1)/\chi \cong (2\Cir^1/\chi_1){\times}C_1$. Therefore the pre-image of the chosen path in $\mathcal{N}$ gives one atom $\Cir^1{\times}C_1$, i.e., one 3-atom $C_1$. On the diagram $\Sigma$ of the system $\mathcal{N}$, crossing the point $q_1$ from the border into chamber $\ts{III}$ we have the bifurcation $C_1$ written in the form $\varnothing \to 2\Tor^2$.

\begin{figure}[!ht]
%figure09%
\centering
\includegraphics[width=0.3\textwidth, keepaspectratio = true]{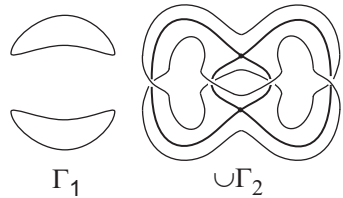}
\caption{The sets $\Gamma_1$ and $\cup \Gamma_2$ along the path in chamber $\ts{III}$.}\label{figAtomC1onplane}
\end{figure}

Let us consider a neighborhood of $q_2$ and, in the same way as in the previous case, unfold the picture from the integral constants space onto the plane $(\ell,m)$. We obtain the cross formed by the lines $\ell=-2a m -1$ and $\ell=2a m +1$. The topological picture in $\mathcal{M}$ again gives two copies of the bifurcation shown in Fig.~\ref{figbifQ1}. The essential difference appears after applying the factorization with respect to $\chi$.
Consider a small vertical path $m=-\frac{1}{2a}, \ell \in (-\delta,\delta)$ covering the path reaching $q_2$ transversally from chamber $\ts{II}$. The set $\Gamma_2 = 2\Cir^1$ does not transform in the whole neighborhood of $q_2$. Along the chosen path the union $\cup \Gamma_1(\ell)$ fills the 2-atom $C_1$ (see Fig.\ref{figchamberII}).

\begin{figure}[!ht]
%figure10%
\centering
\includegraphics[width=0.8\textwidth, keepaspectratio = true]{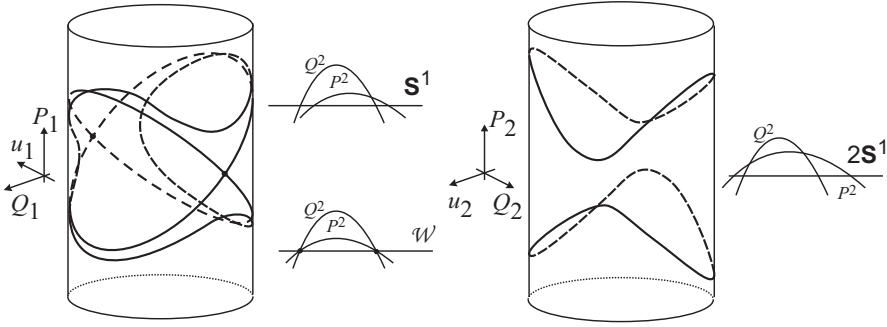}
\caption{The sets $\Gamma_i$ in chamber $\ts{II}$ while moving towards $q_2$.}\label{figchamberII}
\end{figure}

\begin{figure}[!ht]
%figure11%
\centering
\includegraphics[width=0.3\textwidth, keepaspectratio = true]{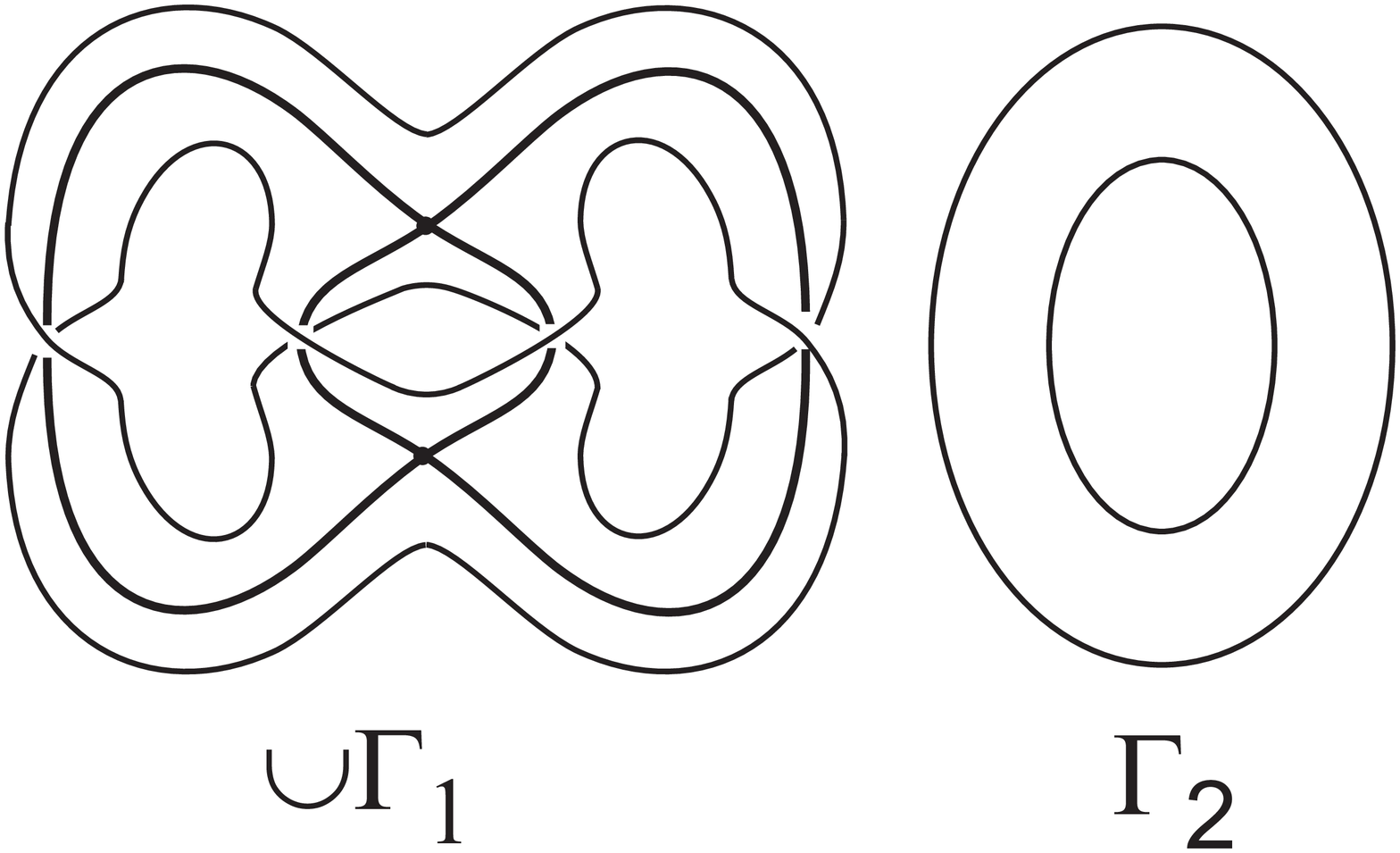}
\caption{The sets $\cup \Gamma_1$ and $\Gamma_2$ along the path in chamber $\ts{II}$.}\label{figAtomT1onplane}
\end{figure}

In this case we can show these sets in the plane as in Fig.~\ref{figAtomT1onplane}, where $\chi_1$ and $\chi_2$ act as central symmetries. Factorizing by the diagonal action, we can write $2(C_1{\times}\Cir^1)/\chi \cong 2(C_1/\chi_1){\times}\Cir^1$ and get the pre-image of the chosen path in $\mathcal{N}$ as a union of two atoms of the new type, which we denote by $T_1{\times}\Cir^1$. Here the 2-atom $T_1=C_1/\mathbb{Z}_2$ is shown in Fig.~\ref{figatomT1}. Again, we keep the same notation~$T_1$ for the corresponding 3-atom $T_1{\times}\Cir^1$. On the diagram $\Sigma$ of the system $\mathcal{N}$, crossing the point $q_2$ from the border into chamber $\ts{II}$ we have two simultaneous bifurcations $T_1: \varnothing \to \Tor^2$. The 3-manifold with boundary $T_1$ is non-orientable and is impossible in a system with non-degenerate symplectic structure.

\begin{figure}[!ht]
%figure12%
\centering
\includegraphics[width=0.35\textwidth, keepaspectratio = true]{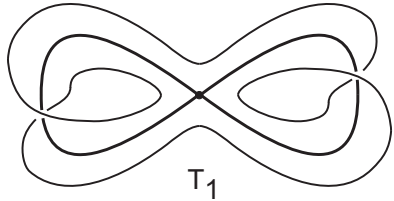}
\caption{The 2-atom $T_1$.}\label{figatomT1}
\end{figure}

Finally, we have obtained the topological description of the integral manifolds, critical integral surfaces and the bifurcations in $\mathcal{N}$ along any chosen path in the plane of the integral constants. This completes the rough topological analysis of the system $\mathcal{N}$.

\section{Exact topological analysis}\label{sec5}
The exact topological analysis is a way to establish the phase topology of the considered system up to Liouville equivalence \cite{igs}. In order to calculate main topological invariants of such equivalence, we need to describe the families of regular Liouville tori and to find the exact rules by which these families are glued to the boundaries of the bifurcation atoms. Let us recall some notions from the general theory \cite{igs}.

Given a Liouville integrable Hamiltonian (or almost Hamiltonian) system on a \mbox{4-d}i\-men\-sio\-nal manifold $M^4$, let us remove from the phase space all connected components of the integral surfaces containing critical points of the integral map $\mathcal{F}:M^4\to \bbR^2$, i.e., all singular leaves of the Liouville foliation. In our case, these leaves also include the whole level $\{L =0\}$. The connected component of the remaining set is called a \textit{family} of Liouville tori.

Consider a path $\gamma: [t_1,t_2] \to \bbR^2$ which is either closed $\gamma(t_1)=\gamma(t_2)$ or has its ends beyond the admissible region $\mathcal{F}(M^4)$. The pre-image $\iso_\gamma=\mathcal{F}^{-1}(\gamma([t_1,t_2]))$ is called a loop manifold. Under some simple transversality conditions it is indeed a smooth 3-dimensional manifold without boundary \cite{Os1991}. If the path $\gamma$ is a fixed level line of some first integral $\Phi$, we call $\iso_\gamma$ an iso-$\Phi$ manifold. Frequently, the role of $\Phi$ is played by the Hamiltonian $H$. Then $\iso_\gamma$ is called an iso-energy manifold. Identifying the points that belong to the same leaf of the Liouville foliation on $\iso_\gamma$ we obtain the rough Fomenko graph $W_\gamma$ with edges representing the families of regular tori and vertices corresponding to singular leaves of the foliation. Consider an edge of this graph bounded by two vertices. On the boundary tori of the atoms pointed by these vertices some pairs of coordinate cycles (bases of cycles) are defined called admissible coordinate systems (or admissible bases). Shift these bases to one regular torus corresponding to an inner point of the graph's edge. Two obtained bases are connected (in the one-dimensional homology group) with the so-called gluing matrix. It is an integer-valued matrix whose determinant is equal to $\pm 1$. In the orientable case without minimal or maximal tori the bases are chosen in such a way that the determinant is always equal to $-1$.
Endowing each edge of the graph with the gluing matrix we obtain one of the forms of the exact topological invariant of the Liouville foliation on $\iso_\gamma$. Usually, since gluing matrices are defined up to the changes of admissible coordinate systems, they are replaced by some sets of numerical invariants called marks. The resulting topological invariant is called the marked molecule and is denoted by $W^*_\gamma$ (see \cite{igs} for complete details). The goal of the exact topological analysis of an integrable system is to find the existing marked molecules and the corresponding loop manifolds for a reasonably full set of paths in the integral constants plane.

%\clearpage

\subsection{Families of tori and coordinate systems}
We return to the system $\mathcal{N}$ and use the advantages of the separation of variables to describe formally the families of regular tori and introduce, in some universal way, the coordinate system (the pair of independent cycles) on each family. In the general case of an integrable system with two degrees of freedom, each family is parameterized by the value of the integral map and the image of a family is an open connected set in $\bbR^2$. The image of one family can cover several chambers and the walls between them if there are walls on which some families do not bifurcate. For the system $\mathcal{N}$ this is not the case. Indeed, let us collect all information about the existing atoms and the number of regular tori in the chambers in Fig.~\ref{figbifwithatoms}. The arrows show the atoms and the direction in which the number of tori increases, the number of tori itself is given in squares. We see that on each wall all tori of the adjacent chambers are involved in the corresponding bifurcation. Indeed, in each chamber the accessible region for the separated variables in this system consists of one rectangle, therefore, all the tori projecting onto that rectangle bifurcate simultaneously.

\begin{figure}[!ht]
%figure13%
\centering
\includegraphics[width=0.8\textwidth, keepaspectratio = true]{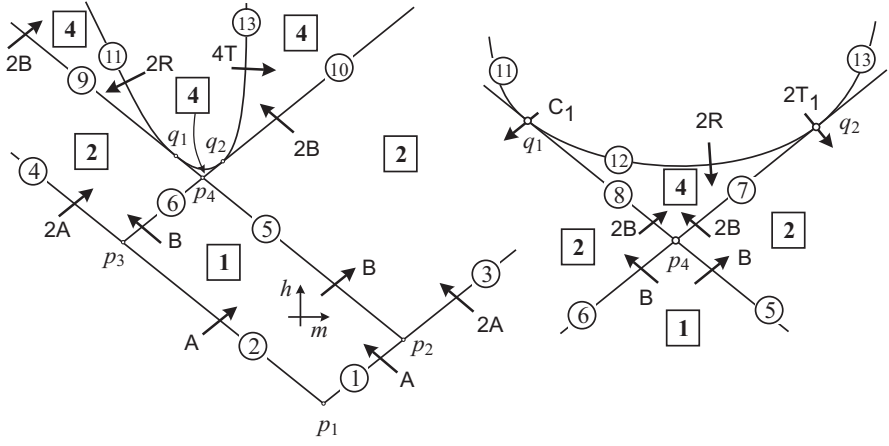}
\caption{Atoms and families.}\label{figbifwithatoms}
\end{figure}

Fix some chamber and consider the corresponding accessible regions
\begin{equation}\notag
  u_i \in [\xi_i,\eta_i]\qquad (i=1,2).
\end{equation}
Here $\xi_i,\eta_i$ are the roots of $\PR_i^2\QR_i^2=(1-u_i^2)f_i(u_i)$.
Theorem~\ref{theo1} states that regular tori are the connected components of $\Gamma_1{\times}\Gamma_2$, where $\Gamma_i$ are defined by \eqref{eq3_1}. Let us introduce the angular variables $\varphi_1,\varphi_2$ in such a way that
\begin{equation}\label{eq5_1}
\begin{array}{c}
  u_i = \xi_i \cos^2(\varphi_i)+\eta_i \sin^2(\varphi_i), \\[1mm]
  \sqrt{\eta_i - u_i}= \sqrt{\eta_i-\xi_i}\cos(\varphi_i), \qquad  \sqrt{u_i-\xi_i}= \sqrt{\eta_i-\xi_i}\sin(\varphi_i).
\end{array}
\end{equation}
Square roots of constant values are always supposed non-negative.
Substituting \eqref{eq5_1} into \eqref{eq2_4} we get the equations of $\Gamma_i$ containing $\sin \varphi_i,\cos \varphi_i $ and, maybe, some radicals which have constant sign on each connected component of the integral manifold. According to Agreement~\ref{theagre1} this sign is either $e_i$ or $d_i$ and, for the chosen connected component of $J_{m,h}$, it does not change inside the corresponding chamber. Finally we obtain that the family of tori is defined by the set of those signs out of $e_i,d_i$ which remain in the final expressions for $\PR_i,\QR_i$ on the connected components of $\Gamma_i$.

\begin{agre}\label{theagre2}
Having expressed $\PR_i,\QR_i$ on the connected component of $\Gamma_i$ in terms of $\sin\varphi_i,\cos\varphi_i,e_i,d_i$ we will always consider this component oriented by the direction of increasing of the angle $\varphi_i$.
\end{agre}

In fact, the separation of variables allows us to assign the universal orientation to the tori of each family. Indeed, the signs of the radicals in equations \eqref{eq2_5}, though arbitrary, are strictly consistent with the signs in equations \eqref{eq2_6}. Substituting \eqref{eq5_1} into \eqref{eq2_5}, we obtain
\begin{equation}\label{eq5_2}
    \dot \varphi_i = \varepsilon_i \Phi_i(\varphi_i) \qquad (i=1,2),
\end{equation}
where $\Phi_i > 0$ and $\varepsilon_i$ can equal $e_i$, $d_i$ or $\pm 1$. It means that the orientation of the cycle $\varepsilon_i\Gamma_i$ on all tori is given by the phase flow of the system $\mathcal{N}$.

\begin{agre}\label{theagre3}
We call the orientation of a regular torus positive if it is defined by the pair of cycles $(\varepsilon_1\Gamma_1,\varepsilon_2\Gamma_2)$, where $\varepsilon_i=\pm 1$ and the orientation of $\varepsilon_i\Gamma_i$ is induced by the phase flow.
\end{agre}

Since $\varepsilon_1,\varepsilon_2$ in \eqref{eq5_2} are the same for the whole family, by the described universal algorithm we fix a positive orientation of the family.

Consider an arbitrary atom $U$ together with some direction in which it is crossed along a chosen path $\gamma$ in $\bbR^2(m,h)$. The boundary $\partial U$ consists of regular tori divided in a natural way along $\gamma$ into incoming and outgoing ones. The admissible bases $(\lambda,\mu)$  on all boundary tori can be composed from the curves $\Gamma_i$ in one of the following ways $(\pm \Gamma_1, \pm\Gamma_2)$ or $(\pm \Gamma_2, \pm\Gamma_1)$ depending on the type of $U$. Some of the atoms may not have incoming or outgoing tori.

\begin{agre}\label{theagre4}
Choosing admissible bases on the boundary tori of an atom we always suppose that the orientation of these bases is positive (in the sense of Agreement~\ref{theagre3}) for all outgoing tori and negative for all incoming ones.
\end{agre}

Let us give the explicit formulas for the parameterized cycles $\Gamma_i$ and the positive bases on the families. Here we use the information given in Table~1.

For the curve $\Gamma_1$ we have three different cases. In chambers $\ts{I},\ts{II}$ the variable $u_1$ oscillates in $[\tau_2,1]$ and therefore we put
\begin{equation}\notag
\Gamma_1: \left\{
\begin{array}{l}
u_1=\tau_2 \cos^2\varphi_1+\sin^2\varphi_1,\\
\QR_1 = \sqrt{1-\tau_2}\sqrt{1+u_1(\varphi_1)}\cos \varphi_1,\\
\PR_1 = \sqrt{1-\tau_2}\sqrt{h_*[u_1(\varphi_1)-\tau_1]} \sin\varphi_1,\\
\sgn \dot \varphi _1 =+1.
\end{array}  \right.
\end{equation}

Let us once give remarks to the sign of $\dot\varphi_i$ applied in all similar representations of $\Gamma_i$. Making the formal substitution of the introduced expressions for $u_1,\QR_1,\PR_1$ into the first equation \eqref{eq2_5} we get
\begin{equation}\notag
  \dot \varphi_1 = \frac{1}{2}\sqrt{1+u_1(\varphi_1)}\sqrt{h_*[u_1(\varphi_1)-\tau_1]}.
\end{equation}
In all such cases the remaining non-constant square roots will have constant sign along the corresponding trajectory. We suppose them to be positive. Here there are two of them. To change both signs, it is enough to substitute $\varphi_1 \to \varphi_1+\pi$ without changing the sign of $\dot \varphi_1$. To change only one of the signs we may substitute $\varphi_1 \to -\varphi_1$ or $\varphi_1 \to \pi-\varphi_1$. This will change both the default orientation of $\Gamma_1$ and the sign of $\dot \varphi_1$ all over the chamber and, of course, will not affect the class of equivalent gluing matrices on the corresponding families.

Using the same method of the formal substitution with square roots considered positive, we obtain for chambers $\ts{III}-\ts{V}$
\begin{equation}\label{eq5_3}
\Gamma_1: \left\{
\begin{array}{l}
u_1=- \cos 2\varphi_1,\\
\QR_1 = \sin 2\varphi_1 ,\\
\PR_1 = e_1 \sqrt{h_*[u_1(\varphi_1)-\tau_1][u_1(\varphi_1)-\tau_2]},\\
\sgn \dot \varphi _1 = e_1,
\end{array}  \right.
\end{equation}
and for chamber $\ts{VI}$
\begin{equation}\label{eq5_4}
\Gamma_1: \left\{
\begin{array}{l}
u_1=\tau_2 \cos^2\varphi_1+\tau_1 \sin^2\varphi_1,\\
\QR_1 = d_1 \sqrt{1-u_1^2(\varphi_1)},\\
\PR_1 = \sqrt{-h_*}(\tau_1-\tau_2) \sin\varphi_1 \cos\varphi_1,\\
\sgn \dot \varphi _1 = d_1.
\end{array}  \right.
\end{equation}
Analogously, for the curve $\Gamma_2$ we obtain in chambers $\ts{I},\ts{III}$
\begin{equation}\label{eq5_5}
\Gamma_2: \left\{
\begin{array}{l}
u_2=-\cos^2\varphi_2+\sigma_1 \sin^2\varphi_2,\\
\QR_2 = \sqrt{1+\sigma_1}\sqrt{1-u_2(\varphi_2)}\sin \varphi_2,\\
\PR_2 = b \sqrt{1+\sigma_1}\sqrt{m[u_2(\varphi_2)-\sigma_2]} \cos\varphi_2,\\
\sgn \dot \varphi_2 =+1,
\end{array}  \right.
\end{equation}
in chambers $\ts{II},\ts{V},\ts{VI}$
\begin{equation}\label{eq5_6}
\Gamma_2: \left\{
\begin{array}{l}
u_2=- \cos 2\varphi_2,\\
\QR_2 = \sin 2\varphi_2 ,\\
\PR_2 = e_2 \, b\,\sqrt{m[u_2(\varphi_2)-\sigma_1][u_2(\varphi_2)-\sigma_2]},\\
\sgn \dot \varphi_2 = e_2,
\end{array}  \right.
\end{equation}
and in chamber $\ts{IV}$
\begin{equation}\label{eq5_7}
\Gamma_2: \left\{
\begin{array}{l}
u_2=\sigma_2 \cos^2\varphi_2+\sigma_1 \sin^2\varphi_2,\\
\QR_2 = d_2 \sqrt{1-u_2^2(\varphi_2)},\\
\PR_2 = b\sqrt{-m}(\sigma_1-\sigma_2) \sin\varphi_2 \cos\varphi_2,\\
\sgn \dot \varphi_2 = d_2.
\end{array}  \right.
\end{equation}

Finally we can say that for a given chamber the number of families is equal to the number of different combinations of $\sgn \dot \varphi_1,\sgn \dot \varphi_2$ and on each family the positive orientation is defined by the basis $(\sgn \dot \varphi_1 \Gamma_1,\sgn \dot \varphi_2 \Gamma_2)$. The complete information on the families and the corresponding bases is given in Table~3.

\begin{center}
\footnotesize
\renewcommand{\rul}{\rule[-4pt]{0pt}{12pt}}
\begin{tabular}{|c| c| c|c|c|c|c|}
\multicolumn{7}{l}{\fts{Table 3. Families and bases for the chambers}}\\
\hline
\rul{\fns{Chamber}}
&
$\ts{I}$
&
$\ts{II}$
&
$\ts{III}$
&
$\ts{IV}$
&
$\ts{V}$
&
$\ts{VI}$\\
\hline
{\renewcommand{\arraystretch}{0.8}\fns{\begin{tabular}{c}\strut Number of\\ families\end{tabular}}}
&
$1$
&
$2$
&
$2$
&
$4$
&
$4$
&
$4$\\
\hline
\rule[-10pt]{0pt}{26pt}{\fns{Bases}}
&
$\colu{\Gamma_1}{\Gamma_2}$
&
$\colu{\Gamma_1}{e_2 \Gamma_2}$
&
$\colu{e_1 \Gamma_1}{\Gamma_2}$
&
$\colu{e_1 \Gamma_1}{d_2 \Gamma_2}$
&
$\colu{e_1 \Gamma_1}{ e_2 \Gamma_2}$
&
$\colu{d_1 \Gamma_1}{e_2 \Gamma_2}$\\
\hline
\end{tabular}
\end{center}

Let us denote by $a_i$ the atoms arising in the pre-image of segments $1,\ldots,13$ of the bifurcation diagram, where $i$ is the number assigned to the corresponding segment. If there are two atoms in the pre-image of a point on segment $i$, we denote them by $a_i^{\varepsilon_j}$ using for $\varepsilon_j=\pm 1$ the sign built out of the fixed signs of the radicals as defined above. The only exception is $a_{13}$ having four components. To each of them we assign the pair of sings. Obviously, in this notation we have the following atoms
\begin{equation}\notag
  a_1, \; a_2, \; a_3^{e_2}, \; a_4^{e_1}, \; a_5, \; a_6, a_7^{e_2}, \; a_8^{e_1}, a_9^{e_1},\; a_{10}^{e_2},\; a_{11}^{e_1 d_2}, \; a_{12}^{e_2}, \; a_{13}^{(e_2,d_1)}.
\end{equation}
For a non-degenerate 3-atom, we can say that its sign is defined by the sign of the connected component of the curve out of $\Gamma_1,\Gamma_2$ that does not bifurcate at this moment. The signs of the atoms in the pre-image  of a point on the curve $L_0$ (segments $11$, $12$, and $13$) separate those tori families which are not identified with each other upon reaching $L_0$.

To establish for each atom (excluding for a while the points of $L_0$) the uniquely defined admissible cycles ($\mu$-cycles for $A$ and $\lambda$-cycles for hyperbolic atoms), we use Table~2 and equations \eqref{eq5_3}--\eqref{eq5_7} giving the orientation on $\Gamma_i$ induced by the phase flow. The general rule of choosing this cycle is as follows. First, we take the curve $\Gamma_i$ for which the accessible region does not contain a double root (i.e., $\pm 1_*$). Then we multiply it by the sign of $\dot{\varphi}_i$ from the adjacent chamber. This sign, of course, coincides with the sign of the corresponding cycle in the basis taken for the positive orientation of the family in this chamber in Table~3. For example, on segment $4$ with two atoms $A$ we take $\mu=\pm \Gamma_1$ from Table~2 and choose $e_1$ for the sign from the basis of chamber $\ts{III}$ in Table~3. On segment $10$ with two atoms $B$ we take $\lambda=\pm \Gamma_2$ from Table~2 and then choose $e_2$ for the sign from the basis of any of chambers $\ts{II}$ or $\ts{VI}$ in Table~3.
The second cycle in the pair $(\lambda,\mu)$ defining an admissible coordinate system is chosen according to Agreements~\ref{theagre3} and \ref{theagre4}.

It is convenient to collect in one table all admissible coordinate systems for the atoms on the segments of the bifurcation diagram written out for some globally fixed direction of crossing these atoms. Let us take the direction of the increasing $h$-coordinate and denote the corresponding pairs of cycles for an atom $a_i$ by $\mathcal{B}_i\inn$ and $\mathcal{B}_i\out$.
While constructing the loop molecules we may meet with the necessity to cross some atoms in the inverse direction. Then we denote such admissible bases for an atom $a_i$ by $\mathcal{C}_i\inn$ and $\mathcal{C}_i\out$. The connection between these bases is obvious
\begin{equation}\label{eq5_8}
  \mathcal{C}_i\inn = \matr{1}{0}{0}{-1}\mathcal{B}_i\out,\qquad \mathcal{C}_i\out = \matr{1}{0}{0}{-1}\mathcal{B}_i\inn.
\end{equation}

\begin{table}
\centering
\footnotesize
\renewcommand{\rul}{\rule[-10pt]{0pt}{26pt}}
\tabcolsep=0.35em
\noindent\begin{tabular}{|c| c| c|c| c| c|c| c|}
\multicolumn{8}{l}{\fts{Table 4. Admissible bases for the atoms}}\\
\hline
\rul{\fns{Seg.}}
&
1
&
2
&
3
&
4
&
5
&
6
&
7
\\
\hline
\rul{\fns{$\mathcal{B}\inn$}}
&
$-$
&
$-$
&
$-$
&
$-$
&
$\colu{\Gamma_1}{-\Gamma_2}$
&
$\colu{\Gamma_2}{\Gamma_1}$
&
$\colu{e_2\Gamma_2}{\Gamma_1}$
\\
\hline
\rul{\fns{$\mathcal{B}\out$}}
&
$\colu{\Gamma_1}{\Gamma_2}$
&
$\colu{-\Gamma_2}{\Gamma_1}$
&
$\colu{-\Gamma_1}{-e_2\Gamma_2}$
&
$\colu{\Gamma_2}{-e_1\Gamma_1}$
&
$\colu{\Gamma_1}{e_2\Gamma_2}$
&
$\colu{\Gamma_2}{-e_1\Gamma_1}$
&
$\colu{e_2\Gamma_2}{-e_1\Gamma_1}$
\\
\hhline{|=|=|=|=|=|=|=|=|}
\rul{\fns{Seg.}}
&
8
&
9
&
10
&
11
&
$12_1$
&
$12_2$
&
13
\\
\hline
\rul{\fns{$\mathcal{B}\inn$}}
&
$\colu{e_1\Gamma_1}{-\Gamma_2}$
&
$\colu{e_1\Gamma_1}{-\Gamma_2}$
&
$\colu{e_2\Gamma_2}{\Gamma_1}$
&
$\colu{e_1\Gamma_1}{-d_2\Gamma_2}$
&
$\colu{e_1\Gamma_1}{-e_2\Gamma_2}$
&
$\colu{e_2\Gamma_2}{e_1\Gamma_1}$
&
$\colu{e_2\Gamma_2}{d_1(\Gamma_1{+}\Gamma_2)}$
\\
\hline
\rul{\fns{$\mathcal{B}\out$}}
&
$\colu{e_1\Gamma_1}{e_2\Gamma_2}$
&
$\colu{e_1\Gamma_1}{d_2\Gamma_2}$
&
$\colu{e_2\Gamma_2}{-d_1\Gamma_1}$
&
$-$
&
$-$
&
$-$
&
$-$
\\
\hline
\end{tabular}
\end{table}

The result is given in Table~4. For the curve $L_0$ the general rule does not work since the tori in the pre-image are in fact regular manifolds. For the new atoms $R$, $T$, $C_1$ and $T_1$ we need to establish some other rules.

For the atoms $R$ we take admissible bases with the $\lambda$-cycle coming from the adjacent atom $B$ in the molecule. Thus, the notation $12_1$ stands for the case of the edge coming from segment $8$ (the part of the loop molecule of the point $q_1$) and $12_2$ is used for the bases obtained along the edge coming from segment $7$ (the part of the loop molecule of the point $q_2$).

Consider the 3-atom $T=M^2{\times}\Cir^1$ on segment $13$. Obviously, $\Cir^1$ here stands for the global $\lambda$-cycle $e_2\Gamma_2$ which came from segment $10$. So let it be the first cycle of the admissible basis on $\partial T = \Tor^2$. For the $\mu$-cycle let us take the circle in the M\"{o}bius band that covers twice the middle line of the band oriented according to Agreement~\ref{theagre4}. On the family in chamber $\ts{VI}$ corresponding to the signs $(e_2,d_1)$, the basis $\mathcal{B}_{13}\inn$ is negatively oriented. Then from Table~3, its orientation coincides with that of $(d_1\Gamma_1,-e_2\Gamma_2)$, which is the same as of $(e_2\Gamma_2,d_1\Gamma_1)$.

Let us write out equations \eqref{eq5_4} and \eqref{eq5_6} on the curve $L_0$ ($0<\tau<1, \sigma>1$):
\begin{equation}\notag
\Gamma_1: \left\{
\begin{array}{l}
u_1=- \tau \cos 2\varphi_1,\\
\QR_1 = d_1 \sqrt{1-u_1^2(\varphi_1)},\\
\PR_1 = \sqrt{-h_*}\tau \sin 2\varphi_1,\\
\sgn \dot \varphi _1 = d_1,
\end{array}  \right. \quad
\Gamma_2: \left\{
\begin{array}{l}
u_2=- \cos 2\varphi_2,\\
\QR_2 = \sin 2\varphi_2 ,\\
\PR_2 = e_2 \, b\,\sqrt{-m[\sigma^2-u_2^2(\varphi_2)]},\\
\sgn \dot \varphi_2 = e_2.
\end{array}  \right.
\end{equation}
It is easy to see that $\chi$ acts on $\Gamma_1{\times}\Gamma_2$ as the simultaneous shift $\varphi_1\to \varphi_1+\frac{\pi}{2}$ and $\varphi_2\to \varphi_2+\frac{\pi}{2}$. Therefore the circle folding twice is $\Gamma_1+\Gamma_2$ and we come to the basis for segment $13$ as in Table~4.

Since the 3-atom $T_1$ is non-orientable and $\partial T_1$ consists of only one torus, the solution here is standard, i.e., for an outgoing torus we take the positive orientation of the family, but for an incoming one we take the negative orientation of the family. On the contrary, the 3-atom $C_1$ is orientable and the choice of its orientation defines the orientations on two boundary tori. In our case at the point $q_1$ both of them are incoming (supposing that one of the coordinates $h$ or $m$ increases while another is fixed). Two families come to the point $q_1$ from chamber $\ts{II}$. Choosing the negative orientation from the families as given by the pair $(\Gamma_1,-e_2\Gamma_2)$ (see Table~3), we have one of them different from the atom's orientation. We mark this situation by the notation $C_1^*$. Note that we may choose the orientation on $\partial C_1$ as $(\Gamma_1,\Gamma_2)$. Then it is consistent with some orientation of $C_1$. Obviously, in this case one of the gluing matrices on the incoming edges will have the determinant equal to $+1$. If the topology of the molecule makes it possible to change the direction of this edge and the following ones without general contradiction, then we can change orientations of the corresponding families of tori and obtain an orientable molecule. But if this edge is a part of a loop, such a change may be impossible. Then the molecule defines a non-orientable loop manifold.

\subsection{Loop molecules of non-degenerate singular points}
As it was mentioned above, in the system $\mathcal{N}$ we have exactly four critical points of rank 0, all four are non-degenerate and their neighborhoods have representations \eqref{eq4_4}. Therefore the corresponding loop molecules are well-known and completely classified by the Bolsinov theorem \cite{igs}. Nevertheless, in this section we demonstrate the use of separated variables for the process of constructing these loop molecules.

Let us take small closed paths around the points $p_1,\ldots,p_4$ in the clockwise direction. Then from Tables~3 and 4 we have the following chains:
\begin{equation}\label{eq5_9}
\begin{array}{l}
  p_1: \;(a_2)\;\mathcal{B}_2\out \to \mathcal{C}_1\inn \;(a_1); \\
  p_2: \;(a_4)\;\mathcal{B}_4\out \to \mathcal{C}_6\inn \;(a_6)\; \mathcal{C}_6\out \to \mathcal{C}_2\inn \;(a_2);\\
  p_3: \;(a_1)\;\mathcal{B}_1\out \to \mathcal{B}_5\inn \;(a_5)\; \mathcal{B}_5\out \to \mathcal{C}_3\inn \;(a_3);\\
  p_4: \;(a_5)\;\mathcal{C}_5\out \to \mathcal{B}_6\inn \;(a_6)\; \mathcal{B}_6\out \to \mathcal{B}_8\inn \;(a_8)\; \mathcal{B}_8\out \to \\
  \phantom{p_4: \;} \to \mathcal{C}_7\inn \;(a_7)\; \mathcal{C}_7\out \to \mathcal{C}_5\inn \;(a_5).
\end{array}
\end{equation}
Substituting the bases from Table~4 and \eqref{eq5_8} we readily obtain not only the gluing matrices but also the rules by which families choose their bounding atoms. Indeed, let us take for example the edge connecting $a_8^{e_1}$ and $a_7^{e_2}$. We have
\begin{equation*}
  \mathcal{B}_8\out = \colu{e_1\Gamma_1}{e_2\Gamma_2}, \qquad \mathcal{C}_7\inn = \colu{e_2\Gamma_2}{e_1\Gamma_1}=\matr{0}{1}{1}{0} \mathcal{B}_8\out.
\end{equation*}
We see that the two families started from the $B$-atom $a_8^+$ ($e_1=+1$) differ by the sign $e_2$ and therefore come to different $B$-atoms on segment $7$.

For the sake of brevity we denote some $(2{\times}2)$-matrices needed below
\begin{equation}\notag
\begin{array}{c}
  E_\pm=\matr{\pm 1}{0}{0}{\mp 1}, \quad D_\pm=\matr{0}{\pm 1}{\pm 1}{0}, \quad C_\pm=\matr{1}{0}{\pm 1}{-1}.
\end{array}
\end{equation}
As usual, $E$ stands for the identity matrix. The loop molecules of non-degenerate singularities $c_i$ ($J(c_i)=p_i$) generated by the chains \eqref{eq5_9} are shown in Fig.~\ref{figmols1234}: ($a$) for $c_1$, ($b$) for $c_2,c_3$, and ($c$) for $c_4$.

\begin{figure}[!ht]
%figure14%
\centering
\includegraphics[width=0.75\textwidth, keepaspectratio = true]{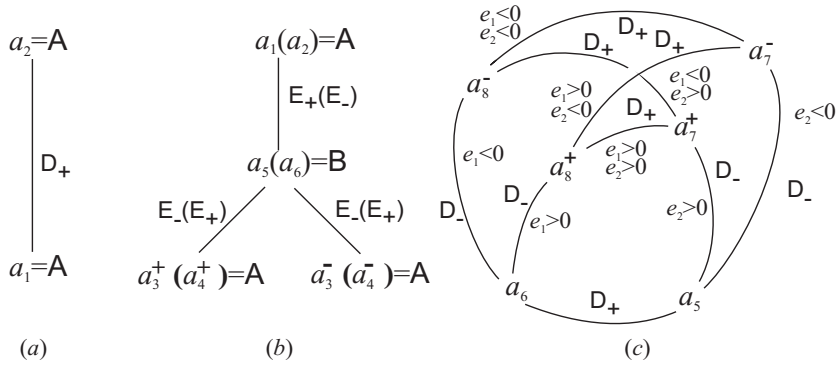}
\caption{Loop molecules of non-degenerate points.}\label{figmols1234}
\end{figure}

Of course, our method of global choice of orientation affects the gluing matrices, but since the possible changes (e.g. changing the order of the separation variables) are applied to all tori simultaneously, the gluing matrices of the molecules considered in this section simultaneously change their signs ($E_\pm \to E_\mp, D_\pm \to D_\mp$). This only leads to another representatives of the same exact topological invariants.

\subsection{Loop molecules of degenerate closed orbits}
In the system $\mathcal{N}$ the set of degenerate closed orbits consists of the motions \eqref{eq4_2} in the pre-image of the points $q_1,q_2$.

\begin{theorem}\label{theo3}
The loop molecules of the points $q_1,q_2$ can be represented in the form shown in Fig.~\ref{figlmolQ1Q2}. The first one is connected, while the second consists of two equivalent connected components differing by the sign $e_2$.
\end{theorem}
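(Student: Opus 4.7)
The plan is to adapt the chain-based method of Section~5.2 (used to produce the loop molecules of the non-degenerate critical points $c_1,\dots,c_4$) to the tangency points $q_1,q_2$, taking care to incorporate the new atoms $R$, $C_1$, $T$, and $T_1$ produced by the curve $L_0$ and by the factorization by $\chi$.

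First I would describe explicitly small simple closed loops $\gamma_1,\gamma_2$ in the $(m,h)$-plane around $q_1,q_2$ respectively, and list the smooth segments of $\Sigma$ they cross. Since $L_0$ is tangent to $R_b^+$ at $q_1$ and to $R_a^+$ at $q_2$, each $\gamma_i$ crosses the ray $R_b^\pm$ (resp. $R_a^\pm$) twice and the curve $L_0$ twice, sweeping through the pairs of chambers adjacent to the tangency (at $q_1$ these include $\ts{II},\ts{III}$, at $q_2$ these include $\ts{II},\ts{VI}$, cf.\ Fig.~\ref{figbifset} and Fig.~\ref{figbifwithatoms}). For each crossing I would read the atom type from Table~2 (an atom $B$ on the $R$-segment crossings; atoms $R$ on segments $11,12$; and the new atoms $C_1$ at $q_1$, $T_1$ at $q_2$ produced by the discussion in Section~4 and proof of Proposition~\ref{prop03}), and then record the corresponding admissible bases $\mathcal{B}\inn,\mathcal{B}\out$ (or their reverses $\mathcal{C}\inn,\mathcal{C}\out$ via \eqref{eq5_8}) from Table~4 together with the chamber bases in Table~3.

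The gluing matrix on each edge of the molecule is then obtained exactly as in Section~5.2: write the $\inn$-basis of the next atom as an integer linear combination of the $\out$-basis of the previous one, via the universal cycles $e_i\Gamma_i$, $d_i\Gamma_i$ constructed from \eqref{eq5_3}--\eqref{eq5_7}. The only genuinely new step is the computation at the special atoms. For $C_1$ at $q_1$: the two families entering from chamber $\ts{II}$ differ by $e_2$, and from Table~3 their natural orientations $(\Gamma_1,\pm e_2\Gamma_2)$ are consistent with only one orientation of $C_1$, so the non-orientable mark $C_1^*$ of Section~5.1 applies and the two incoming edges are tied through the single atom, producing a \emph{connected} loop molecule. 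For $T_1$ at $q_2$: the boundary of $T_1$ consists of a single torus and the global $\lambda$-cycle coming from segment~$8$ is $e_2\Gamma_2$; since $\chi$ commutes with $e_2$ on both sides (the factorization in the proof of Proposition~\ref{prop03} only pairs opposite $e_1$-components), the two $T_1$ atoms never get joined, and the molecule splits into two identical connected pieces labeled by $e_2=\pm 1$.

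The main technical obstacle I anticipate is the bookkeeping of signs and orientations at the non-orientable atoms $C_1^*$ and $T_1$, where the default ``determinant $=-1$'' convention for orientable Hamiltonian systems does not apply. To handle this cleanly I would follow Agreements~\ref{theagre3},\ref{theagre4} strictly, using the phase-flow orientation of each $\Gamma_i$ from \eqref{eq5_2} on every family, and then argue that a change of overall global choice (e.g.\ swapping the order of $u_1,u_2$) simultaneously transforms all gluing matrices $E_\pm\mapsto E_\mp$, $D_\pm\mapsto D_\mp$, so the resulting marked molecules are well-defined representatives. Finally I would cross-check the resulting chains against Fig.~\ref{figlmolQ1Q2}, verifying that the connectivity statements and the values of the marks match, which completes the identification.
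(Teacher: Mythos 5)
Your overall machinery is the right one --- small closed circles around $q_1,q_2$, atoms read from Table~2, chains of admissible bases taken from Tables~3 and~4 together with \eqref{eq5_8}, and gluing matrices obtained by expressing each incoming basis in the preceding outgoing one --- and this is exactly how the paper proceeds. But there is a genuine structural error: you place the atoms $C_1$ and $T_1$ as vertices of the loop molecules. Those atoms live in the pre-image of the points $q_1,q_2$ \emph{themselves}; the loop molecule is the invariant of the Liouville foliation on $J^{-1}(\gamma)$ for a small circle $\gamma$ \emph{around} the point, which never meets the point. The actual vertices are the atoms over the four segments that $\gamma$ crosses: for $q_1$ these are the pairs of $B$-atoms on segments $8,9$ and the pairs of $R$-atoms on segments $11,12$ (the adjacent chambers are $\ts{III},\ts{IV},\ts{V}$, not $\ts{II},\ts{III}$); for $q_2$ they are the $B$-atoms on segments $7,10$, the $R$-atoms on segment $12$ and --- crucially --- the non-orientable M\"obius atoms $T$ on segment $13$, which you never mention. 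Consequently your connectivity arguments are attached to the wrong objects: the connectedness of the $q_1$-molecule is not produced by ``two incoming edges tied through the single atom $C_1^*$,'' but by the interlocking of two different pairings of the four families --- the $B$-atoms on segments $8,9$ are distinguished by $e_1$, while $\chi$ identifies the families over $L_0$ by the product $e_1d_2$ in chamber $\ts{IV}$ and by $e_2$ in chamber $\ts{V}$. Your splitting of the $q_2$-molecule by $e_2$ is the correct conclusion, but the reason is that \emph{every} atom on that loop (the $B$'s, the $R$'s and the $T$'s) carries the label $e_2$ and the identifications over $L_0$ preserve $e_2$; the $T_1$ atoms play no role.

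The second omission is precisely the computation you defer as ``bookkeeping'': the genuinely new marks come from the edges ending on $L_0$. There one must shift the two bases from the two $\chi$-related components in $\mathcal{M}$ down to the single quotient torus and check, using the explicit action \eqref{eq4_1} on the parameterizations \eqref{eq5_3}--\eqref{eq5_7} (namely $\chi_1(\Gamma_1)=-\Gamma_1$ and $\chi_2(\Gamma_2)=\mp\Gamma_2$ depending on the chamber), that the two shifted bases coincide on the image torus; this is what forces identity gluing matrices (of determinant $+1$) on the arcs and loops through the $R$-atoms, and, at the $T$-atoms of segment~$13$, the matrix $C_s$ with $s=\sgn(-e_2d_1)$ obtained by comparing the bases of segments $10$ and $13$. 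Without this step neither the marks in Fig.~\ref{figlmolQ1Q2} nor the non-orientability phenomena can be verified, so the proof as proposed does not go through.
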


\begin{figure}[!ht]
%figure15%
\centering
\includegraphics[width=\textwidth, keepaspectratio = true]{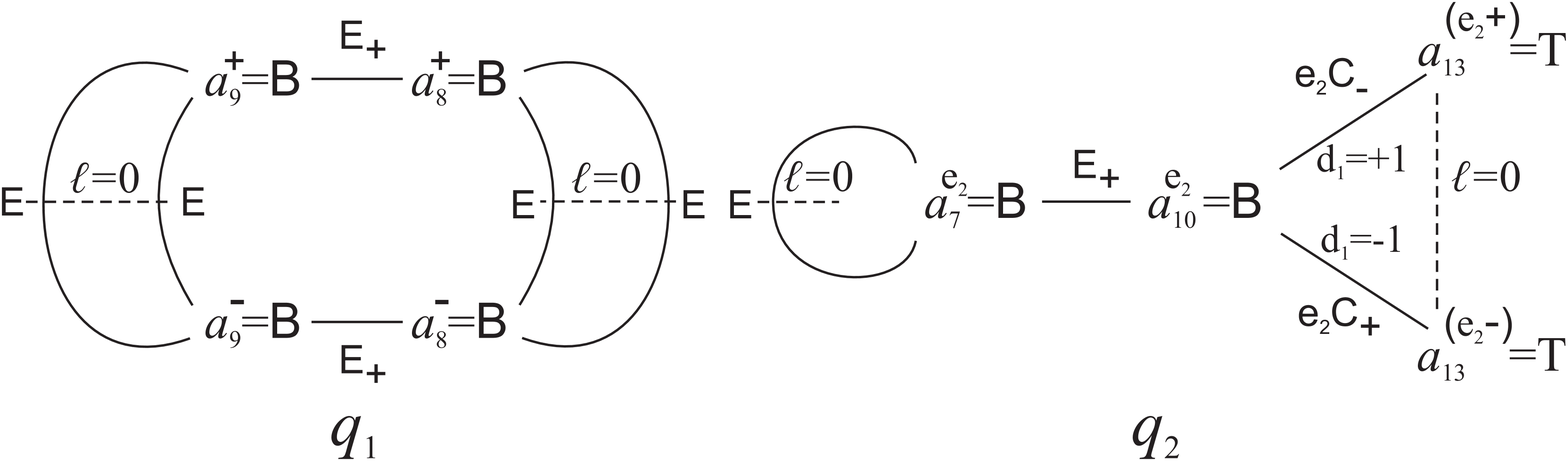}
\caption{Loop molecules of the points $q_1,q_2$.}\label{figlmolQ1Q2}
\end{figure}

\begin{proof}
Consider a closed path surrounding the point $q_1$ clockwise. Recall that the atoms $B$ both in $a_8$ and $a_9$ differ by the sign $e_1$. At the same time on $L_0$ we have to identify the families from chamber $\ts{IV}$ having the same product $e_1d_2$ and from chamber $\ts{V}$ having the same sign $e_2$. Then we obtain the picture of the molecular edges printed over a piece of the bifurcation diagram as shown in Fig.~\ref{figtomolQ1xn}.

\begin{figure}[!ht]
%figure16%
\centering
\includegraphics[width=0.55\textwidth, keepaspectratio = true]{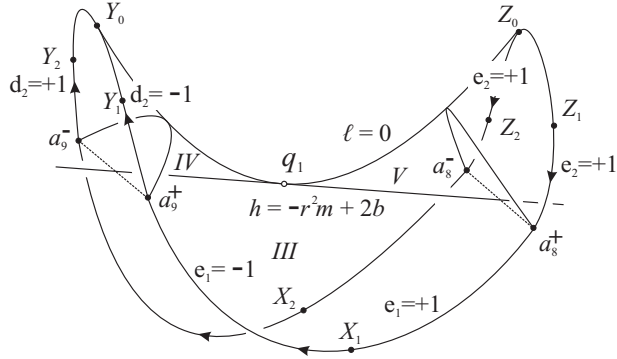}
\caption{Constructing the loop molecule of the point $q_1$.}\label{figtomolQ1xn}
\end{figure}

In the whole neighborhood of $q_1$ all $\lambda$-cycles on the regular tori are induced by the periodic critical trajectories of the atoms $a_8,a_9$. Then from Table~2 we readily obtain that both $\lambda_8$ and $\lambda_9$ (no matter, incoming or outgoing) are defined as $e_1\Gamma_1$.
Taking the chains marked with the points $Z_i,X_i,Y_i$ we get
\begin{equation}\notag
\begin{array}{ll}
(Z_i)\;\mathcal{C}_{12}\out \to \mathcal{C}_8\inn \;(a_8)\; \mathcal{C}_8\out \to \;(X_i)\; \to \mathcal{B}_9\inn \;(a_9)\; \mathcal{B}_9\out \to \mathcal{B}_{11}\inn \;(Y_i).
\end{array}
\end{equation}
From Table~4, the gluing matrix is $E_+$ at both points $X_i$.

Let us find the gluing matrices on the arcs connecting the atoms $a_9^{\pm}$. To be definite, let us take the points $Y_{1,2}$ on the arc with $e_1d_2=+1$ and shift the corresponding bases $\mathcal{B}_9\out$ to the torus $J^{-1}(Y_0)$. We get two bases
$(\Gamma_1,\Gamma_2)$ and $(-\Gamma_1,-\Gamma_2)$.
On segment $11$ with $|\tau_i|=\tau>1$ we obtain from \eqref{eq5_3} and \eqref{eq4_1}
\begin{equation}\notag
\chi_1(\Gamma_1): \left\{
\begin{array}{l}
u_1= \cos 2\varphi_1,\\
\QR_1 = \sin 2\varphi_1 ,\\
\PR_1 = - e_1 P ,\\
\sgn \dot \varphi _1 = e_1.
\end{array}  \right. \quad  -\Gamma_1: \left\{
\begin{array}{l}
u_1=- \cos 2\varphi_1,\\
\QR_1 = \sin 2\varphi_1 ,\\
\PR_1 = - e_1 P ,\\
\sgn \dot \varphi _1 = -  e_1,
\end{array}  \right.
\end{equation}
where $P=\sqrt{-h_*[\tau^2 - \cos^2 2\varphi_1]}>0$. The obvious substitution $\varphi_1 \to \frac{\pi}{2}-\varphi_1$ turns $\chi_1(\Gamma_1)$ into $-\Gamma_1$. A similar reasoning leads to the equality $\chi_2(\Gamma_2)=-\Gamma_2$ on $J^{-1}(Y_0)$. Finally we have that the map $\chi$ identifies two incoming bases at $Y_0$, so they give the same basis on the image torus in $\mathcal{N}$. Therefore the gluing matrix on the arc is the identity matrix $E$. Another arc $e_1 d_2 = -1$ is considered analogously and gives the same result.

Now let us turn to the arcs in chamber $\ts{V}$. Shifting the bases $\mathcal{C}_8\inn$ from the points $Z_{1,2}$ to the point $Z_0$ (i.e., to the torus $J^{-1}(Z_0)$) we get two bases
$(\Gamma_1,-\Gamma_2)$ and $(-\Gamma_1,-\Gamma_2)$.
In the same way as above, \eqref{eq5_3} yields $\chi_1(\Gamma_1)=-\Gamma_1$, but from \eqref{eq5_6} we obviously have $\chi_2(\Gamma_2)=\Gamma_2$. So again $\chi$ identifies two bases at $Z_0$ and they give the same basis on the image torus in $\mathcal{N}$. The gluing matrix is $E$. This proves the statement on the loop molecule of $q_1$.

Consider a closed path surrounding the point $q_2$ clockwise. Recall that the atoms $B$ both in $a_7$ and $a_{10}$ differ by the sign $e_2$. At the same time on $L_0$ we have to identify the families from chamber $\ts{V}$ having the same sign $e_2$. The families in chamber $\ts{VI}$ are not identified with each other and end with the atoms $T$. This leads to two rough invariants homeomorphic to that shown in Fig.~\ref{figlmolQ1Q2}. The picture of the molecular edges together with a piece of the bifurcation diagram is shown in Fig.~\ref{figtomolQ2xn}.
In this case the globally defined $\lambda$-cycles are $e_2\Gamma_2$. Since the connected components of the loop molecule differ exactly by the sign $e_2$, this cycle is fixed for the whole component and is not affected by any identifications in the pre-image of $L_0$.

\begin{figure}[!ht]
%figure17%
\centering
\includegraphics[width=0.55\textwidth, keepaspectratio = true]{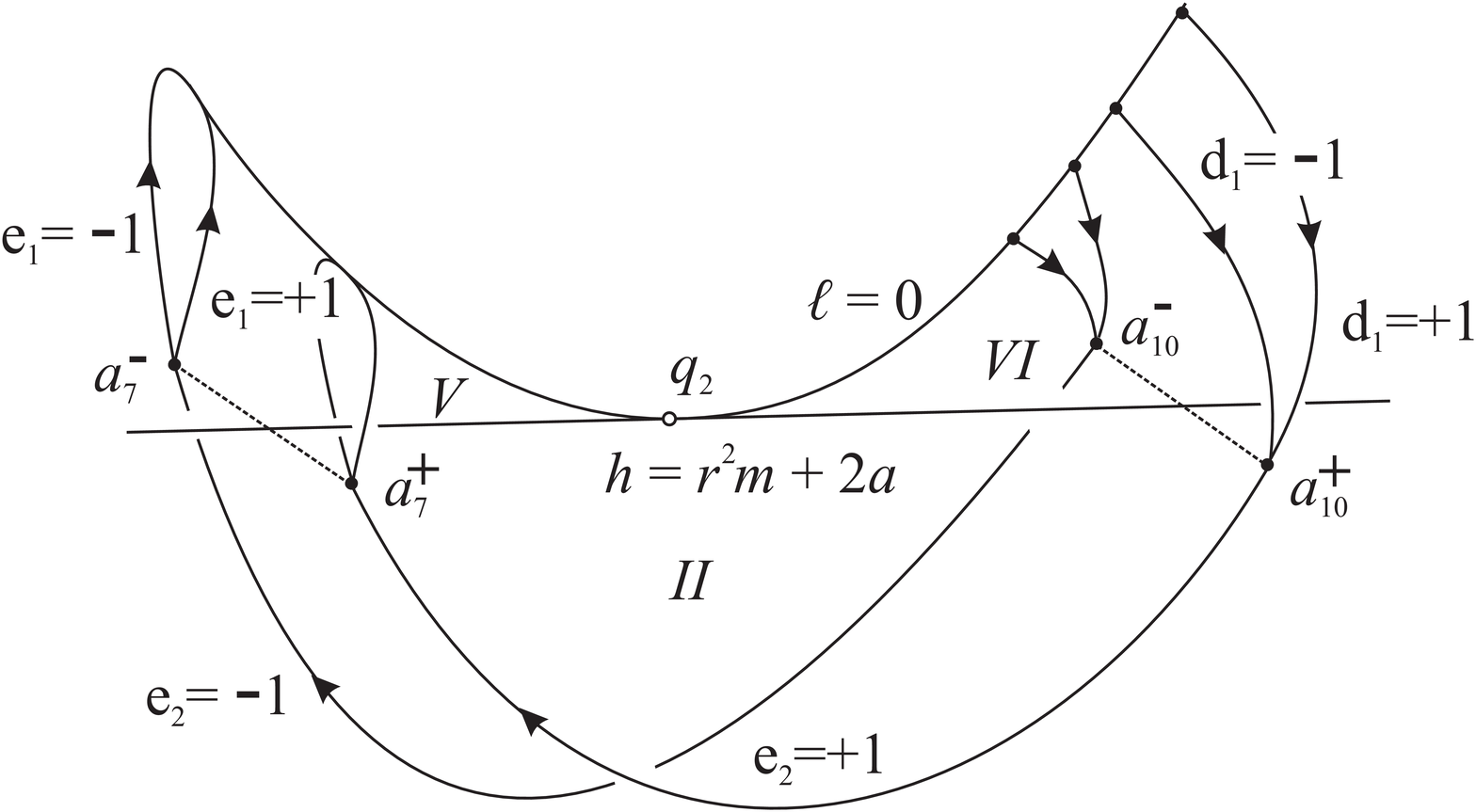}
\caption{Constructing the loop molecule of the point $q_2$.}\label{figtomolQ2xn}
\end{figure}

The edges in chamber $\ts{II}$ give $\mathcal{C}_{10}\out \to \mathcal{B}_7\inn$.
The gluing matrix is $E_+$. In chamber $\ts{V}$ let us fix $e_2$ and shift the bases $\mathcal{B}_{7}\out$ to the point on $L_0$. For $e_1=\pm 1$, we get two bases
$(e_2\Gamma_2,-\Gamma_1)$ and $(e_2\Gamma_2,\Gamma_1)$.
Previously we stated that, in chamber $\ts{V}$, $\chi_1(\Gamma_1)=-\Gamma_1$ and $\chi_2(\Gamma_2)=\Gamma_2$, therefore the obtained bases at the point of $L_0$ give the same basis on the image torus in $\mathcal{N}$. Thus, the gluing matrix on the loops is the identity matrix.

Finally, comparing the bases for segments $10$ and $13$ from Table~4 we see that the gluing matrix is $C_s$ where $s=\sgn ({-e_2 d_1})$. The theorem is proved.
\end{proof}

Note that the structure of the molecule of $q_1$ in Fig.~\ref{figlmolQ1Q2} allows us to pass the whole molecule in one direction. Instead of going globally right-to-left, let us change the direction in the upper half. Then we have to change the admissible bases on the boundary tori of the atoms $a_i^+$. The matrix $E_+$ remains unchanged but all the identity matrices turn into $E_-$. So, in this case we can avoid gluing matrices with the determinant equal to $+1$ and the loop manifold is orientable. Obviously, this cannot be done for the molecules of the point $q_2$ and for some similar iso-integral molecules having loops rather than arcs.

Let us denote by $P^2+ k\,g + s\,m$ the result of gluing $k$ handles and $s$ M\"{o}bius bands to a closed 2-surface $P^2$ (of course, it supposes cutting out small discs first). From Theorem~\ref{theo3} we find the topology of the loop manifolds of the degenerate points.

\begin{corol}\label{thecorol}
The loop manifold of the point $q_1$ is homeomorphic to the direct product $(\Cir^2+3g){\times}\Cir^1$, i.e., to the loop manifold of the non-degenerate \mbox{3-a}tom~$C_2$. The loop manifold of the point $q_2$ is homeomorphic to two copies of the direct product $(\Cir^2+4 m){\times}\Cir^1$.
\end{corol}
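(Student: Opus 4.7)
The plan is to realize each loop manifold as a trivial $\Cir^1$-bundle over a closed 2-surface $\Sigma$ and then identify $\Sigma$ by summing Euler characteristics of the 2-atoms in the molecule.

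First, I would check that in both molecules of Theorem~\ref{theo3} the $\lambda$-cycle is globally defined: $e_1\Gamma_1$ around $q_1$ (carried by the closed critical orbits of segments $8,9$) and $e_2\Gamma_2$ around $q_2$ (from segments $7,10$). All gluing matrices appearing in Theorem~\ref{theo3}, namely $E_\pm$, $E$, and $C_s$, have first row $(\pm 1,0)$, so the $\lambda$-cycle is preserved across every identification up to sign. Each 3-atom occurring in the molecules is explicitly of the product form $U\times\Cir^1$ with $U$ the underlying 2-atom---trivially for $B$ and $R$, and by the very definition $T=M^2\times\Cir^1$ on segment $13$. Consequently, the $\Cir^1$-factor assembles globally and the loop manifold is homeomorphic to $\Sigma\times\Cir^1$, where $\Sigma$ is obtained from the 2-atoms of the molecule by gluing their boundary circles through annuli coming from the regular Liouville families.

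Since both the gluing circles and the annuli have vanishing Euler characteristic, $\chi(\Sigma)=\sum_U\chi(U)$ summed over the 2-atoms of the molecule, with $\chi(B)=-1$ and $\chi(R)=\chi(T)=0$. For $q_1$ the molecule assembles four copies of $B$ (two each from segments $8$ and $9$) and four copies of $R$ (two each from segments $11$ and $12$), giving $\chi(\Sigma)=-4$. All 2-atoms are orientable and the gluings $E_\pm,E$ induce orientation-coherent identifications of boundary circles in the base, so $\Sigma$ is the closed orientable surface of genus three, $\Sigma\cong\Cir^2+3g$. For the loop manifold of a non-degenerate 3-atom $C_2$, a small loop crossing the smooth $C_2$-bifurcation curve twice produces two copies of the 2-atom $C_2$ with $\chi=-2$ each, hence again $\chi(\Sigma)=-4$ and an orientable surface, matching the loop manifold of $q_1$.

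For $q_2$, Theorem~\ref{theo3} splits the molecule into two equivalent connected components distinguished by $e_2$. Each component contains two copies of $B$ (one each from segments $7$ and $10$) and four copies of $T$ (two from each of the two crossings of segment~$13$), so $\chi(\Sigma)=2(-1)+4(0)=-2$. Since the 2-atom $T$ is a M\"{o}bius band, $\Sigma$ is non-orientable, and the unique closed non-orientable surface of Euler characteristic $-2$ is $\Cir^2+4m$. The two components of the molecule then yield the claimed two copies of $(\Cir^2+4m)\times\Cir^1$. The main delicate point to verify is that the $\Cir^1$-fibration is globally trivial (no gluing introduces a Dehn twist in the $\lambda$-direction) and, for $q_1$, that the 2-atom assembly is coherently orientable; both reduce to a finite bookkeeping task with the admissible bases of Table~4 and the explicit gluing matrices from the proof of Theorem~\ref{theo3}.
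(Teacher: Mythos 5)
Your argument is essentially the paper's own: the proof of the Corollary rests precisely on the observation that all $r$-marks in the molecules of $q_1,q_2$ equal $\infty$ (equivalently, your globally defined $\lambda$-cycle, preserved by $E_\pm$, $E$ and $C_s$), so that each loop manifold is a trivial $\Cir^1$-bundle over the closed surface assembled from the 2-atoms of the graph; your Euler-characteristic bookkeeping merely makes explicit what the paper calls ``easily restored from the topology of the graph,'' and both texts leave the verification that the fibration carries no net twist at the same level of detail. One inaccuracy in your count: the loop around $q_2$ meets $L_0$ once on segment $13$ and once on segment $12$, so each component of the molecule contains two atoms $T$ together with one atom $R$ (closing the loop through chamber $\ts{V}$), not four atoms $T$; since $\chi(T)=\chi(R)=0$ and either count yields a non-orientable closed surface with $\chi=-2$, your conclusion $(\Cir^2+4m){\times}\Cir^1$ is unaffected.
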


Indeed, all $r$-marks in the molecules of $q_1,q_2$ are equal to $\infty$ and the loop manifolds are easily restored from the topology of the graph. This fact corresponds to the mentioned above property of globally defined $\lambda$-cycles (at the point $q_1$ two $\lambda$-cycles $e_1\Gamma_1$ are identified by the $\mathbb{Z}_2$-symmetry $\chi$ in the pre-image of $L_0$).

\section{The collection of iso-integral molecules}\label{sec6}
In this section, we present the result of constructing all possible iso-$M$ and iso-energy marked molecules which occur in the system $\mathcal{N}$, except for the critical values of the integrals at four singular points ${h=\pm a \pm b}$, ${m=\pm 1/(a\pm b)}$ and, for the restriction of $H$ to $\mathcal{N}$, the minimal value ${h_*=\sqrt{2(a^2+b^2)}}$ of the $h$-coordinate on the curve $L_0$.

\def\figs{0.24}
{
\renewcommand{\rul}{\rule[-5pt]{0pt}{14pt}}
\footnotesize
\tabcolsep=0mm
%\begin{longtable}{|m{6mm}|m{29mm}||m{6mm}|m{29mm}||m{6mm}|m{29mm}|}%
\begin{longtable}{|m{8mm}|m{38mm}||m{8mm}|m{38mm}||m{8mm}|m{38mm}|}
\multicolumn{6}{l}{\fts{Table 5. Molecules and their codes}}\\
\hline
\rul \hbox to 7mm {\hfil\fns{Code}\hfil}
&
\rul \hbox to 36mm {\hfil\fns{Molecule}\hfil}
&
\rul \hbox to 7mm {\hfil\fns{Code}\hfil}
&
\rul \hbox to 36mm {\hfil\fns{Molecule}\hfil}
&
\rul \hbox to 7mm {\hfil\fns{Code}\hfil}
&
\rul \hbox to 36mm {\hfil\fns{Molecule}\hfil} \\
\hline\endfirsthead%
\multicolumn{6}{r}{\fts{Table 5 (contuned)}}\\
\hline
\rul \hbox to 7mm {\hfil\fns{Code}\hfil}
&
\rul \hbox to 36mm {\hfil\fns{Molecule}\hfil}
&
\rul \hbox to 7mm {\hfil\fns{Code}\hfil}
&
\rul \hbox to 36mm {\hfil\fns{Molecule}\hfil}
&
\rul \hbox to 7mm {\hfil\fns{Code}\hfil}
&
\rul \hbox to 36mm {\hfil\fns{Molecule}\hfil} \\
\hline\endhead

$\ts{OM}_1$ & \includegraphics[width=\figs\textwidth, keepaspectratio = true]{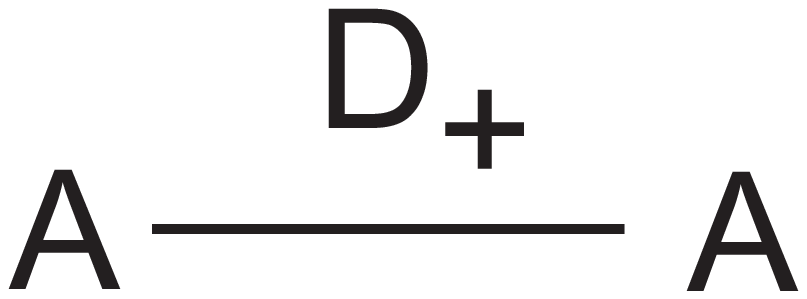}
&
$\ts{OM}_2$ & \includegraphics[width=\figs\textwidth, keepaspectratio = true]{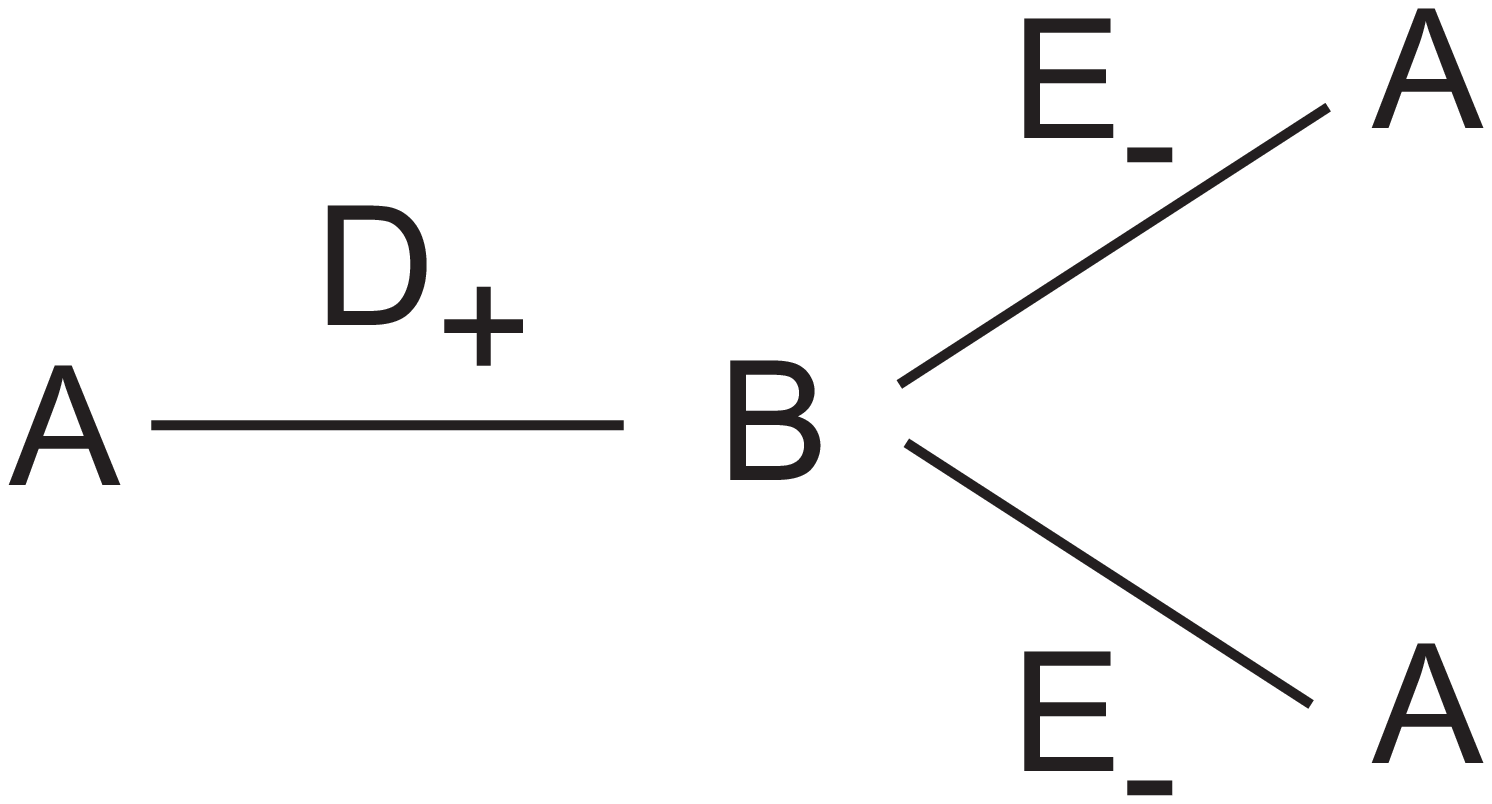}
&
$\ts{OM}_3$& \includegraphics[width=\figs\textwidth, keepaspectratio = true]{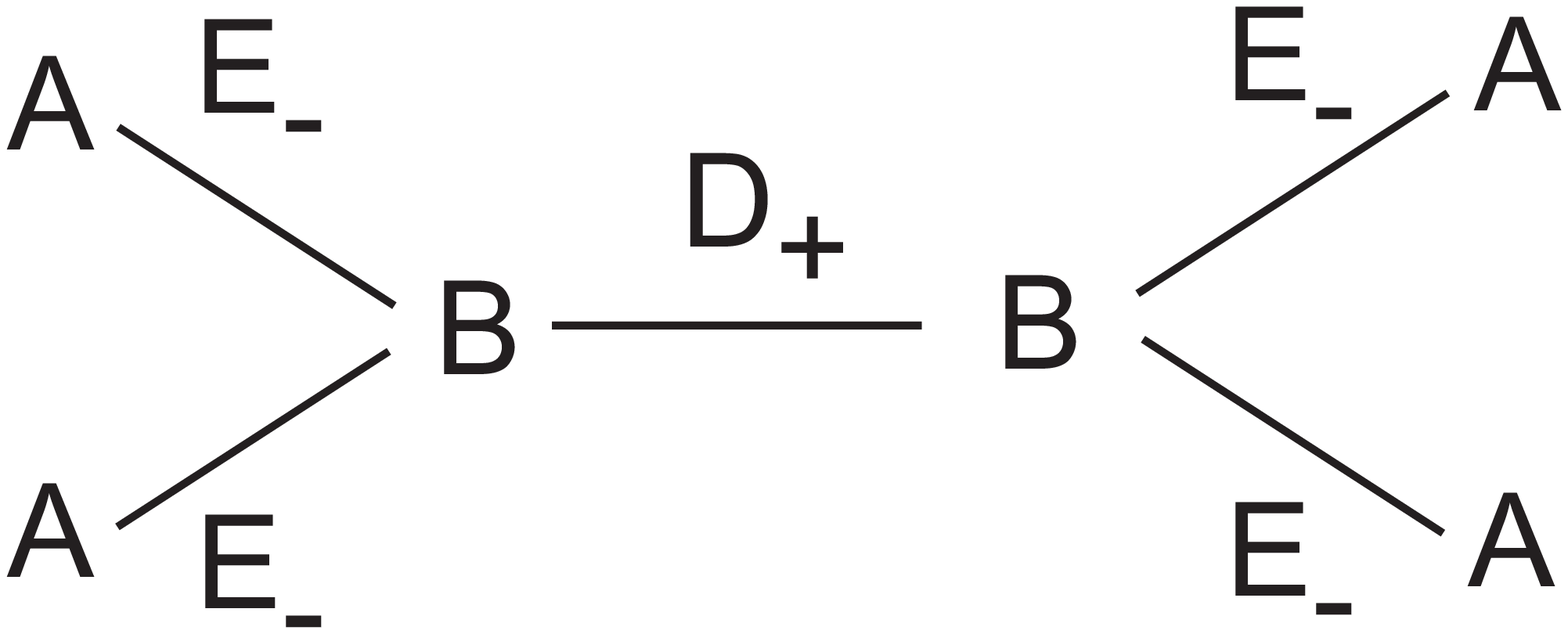}\\
\hline

$\ts{OM}_4$ & \includegraphics[width=\figs\textwidth, keepaspectratio = true]{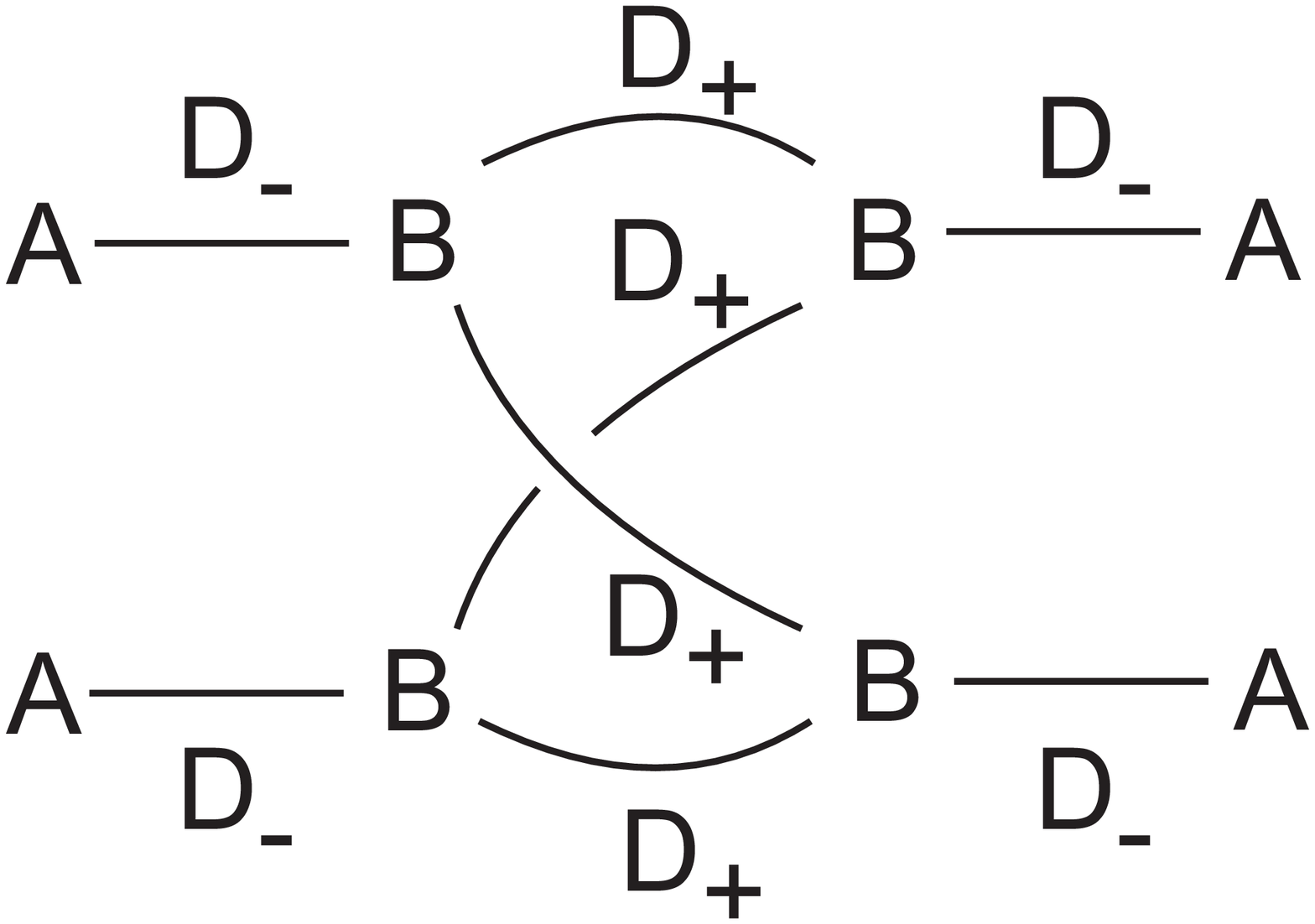}
&
$\ts{OM}_5$ & \includegraphics[width=\figs\textwidth, keepaspectratio = true]{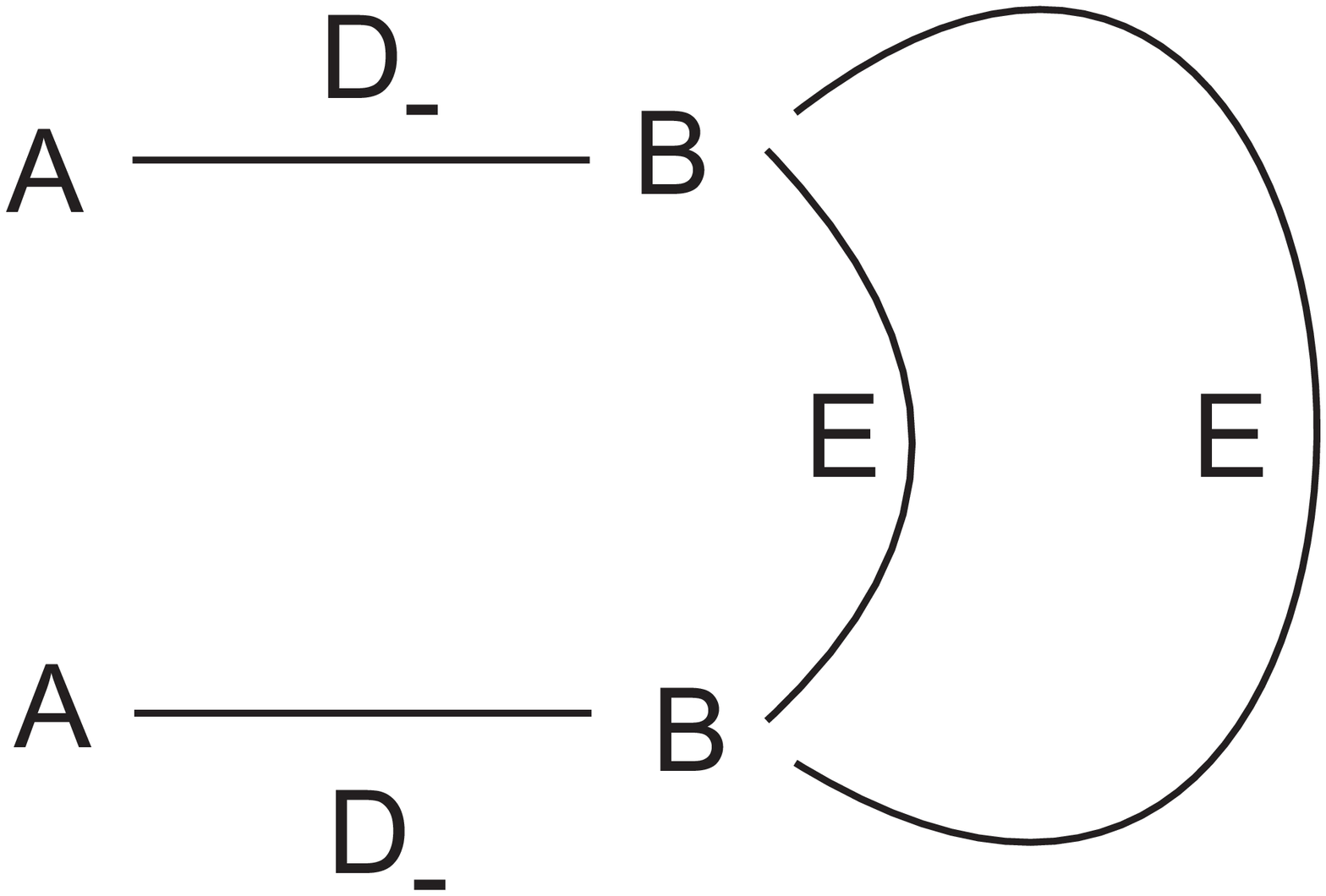}
&
$\ts{NM}_1$& \includegraphics[width=\figs\textwidth, keepaspectratio = true]{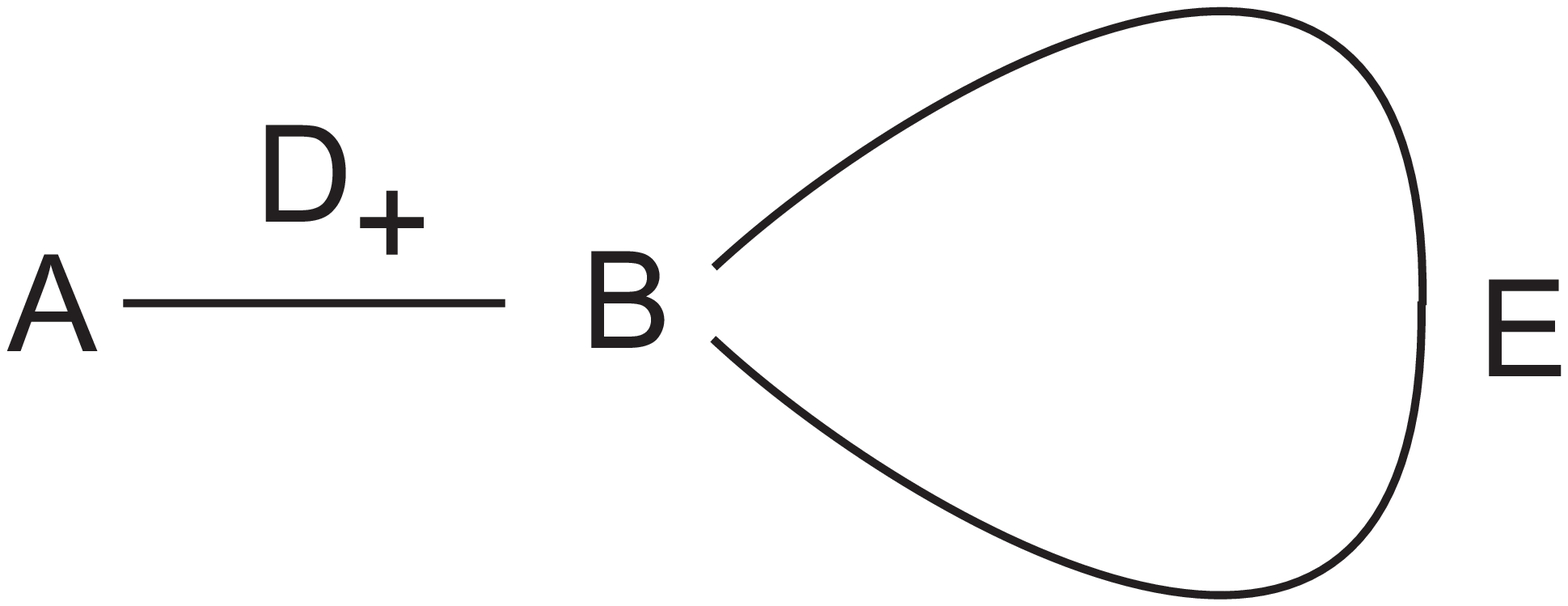}\\
\hline

$\ts{NM}_2$ & \includegraphics[width=\figs\textwidth, keepaspectratio = true]{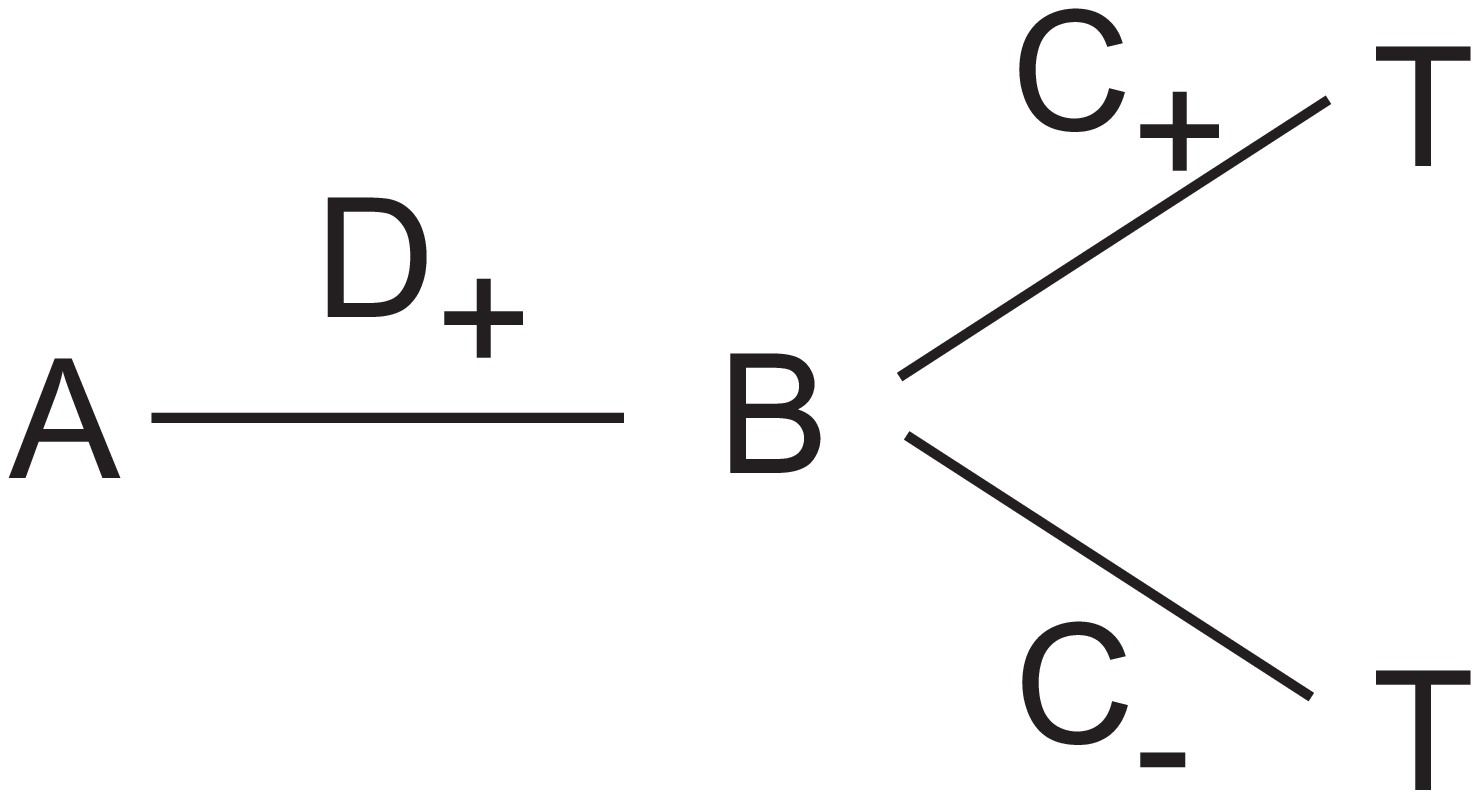}
&
$\ts{NM}_3$ & \includegraphics[width=\figs\textwidth, keepaspectratio = true]{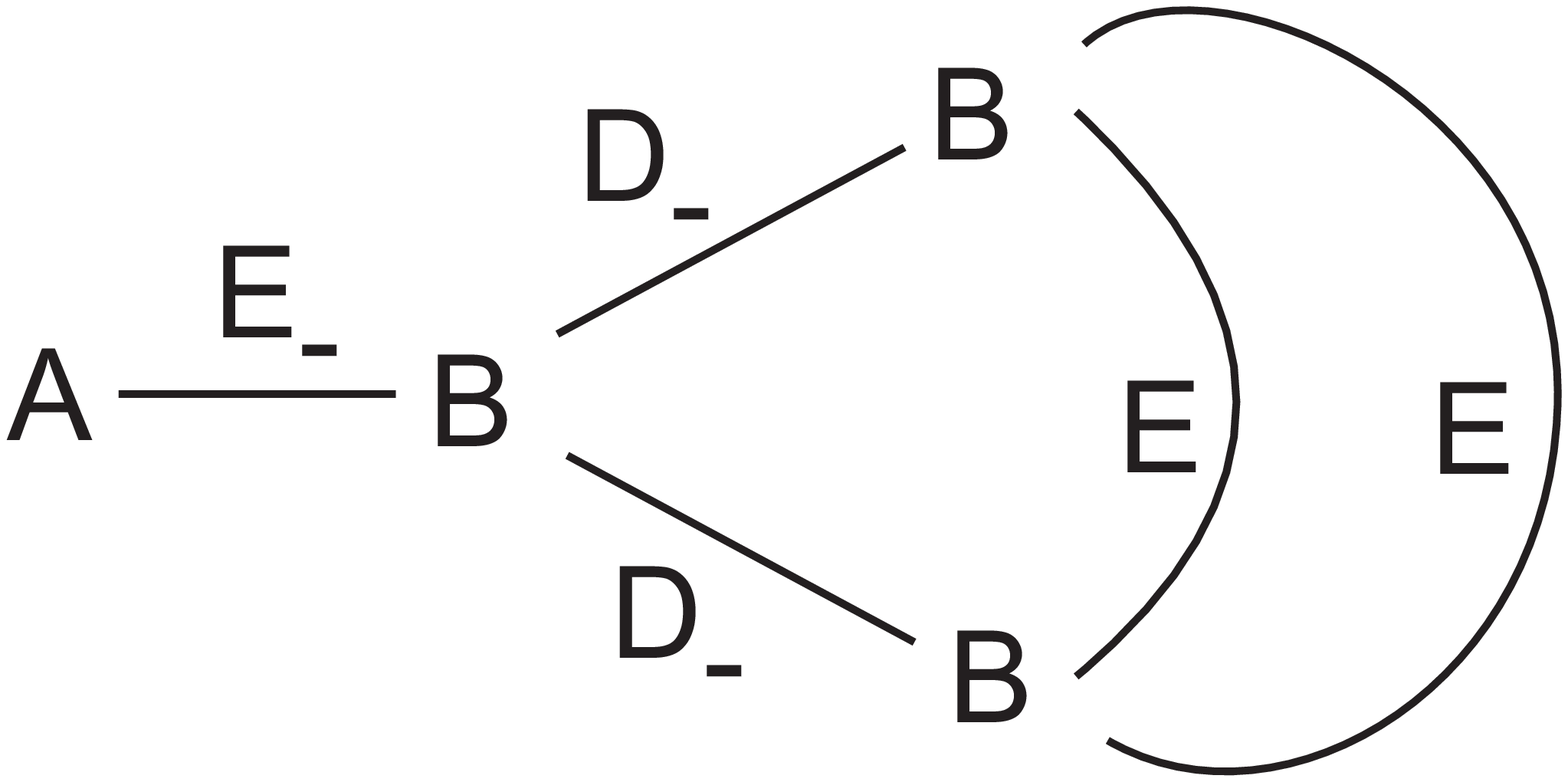}
&
$\ts{NM}_4$& \includegraphics[width=\figs\textwidth, keepaspectratio = true]{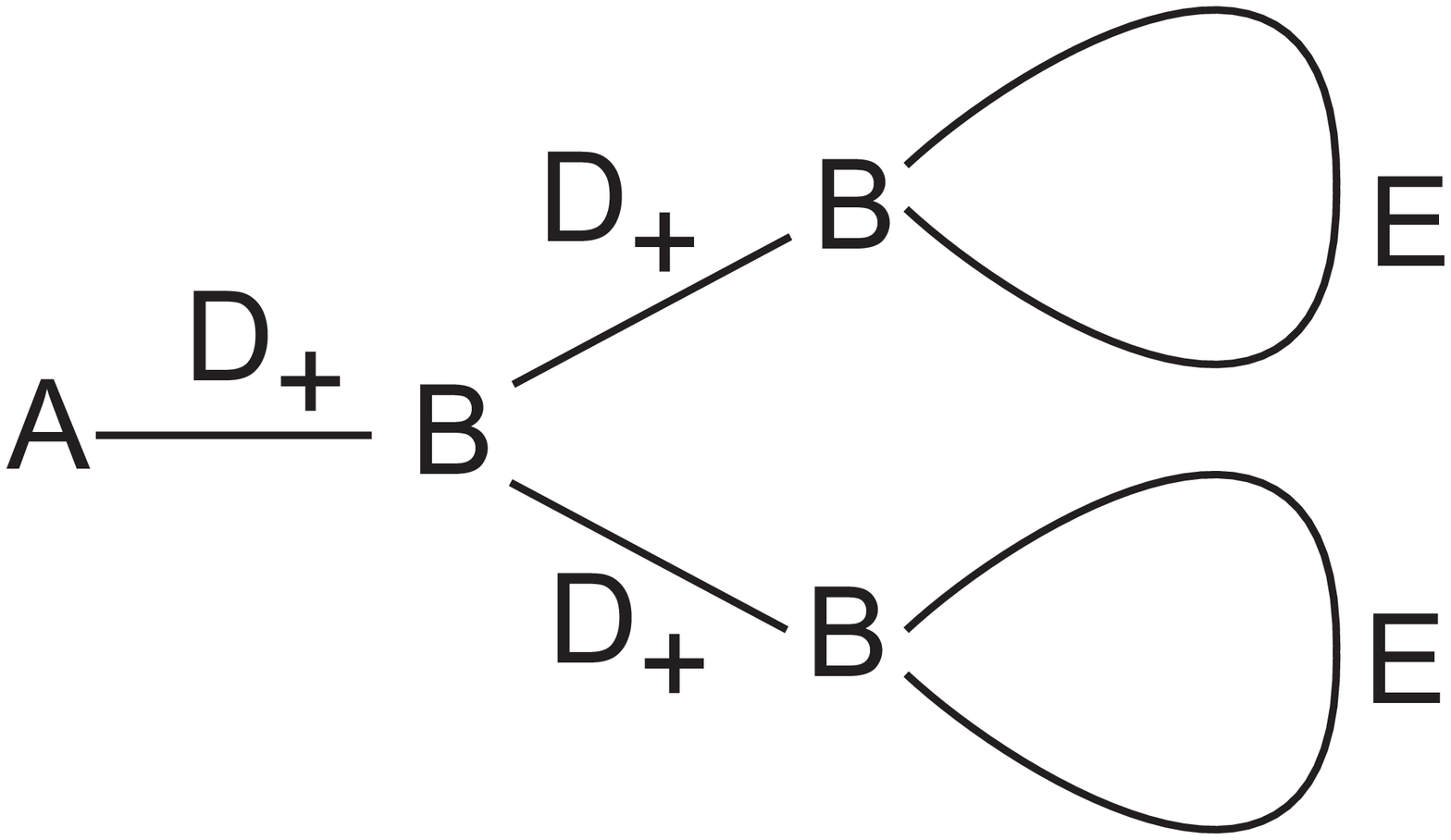}\\
\hline

$\ts{NM}_5$ & \includegraphics[width=\figs\textwidth, keepaspectratio = true]{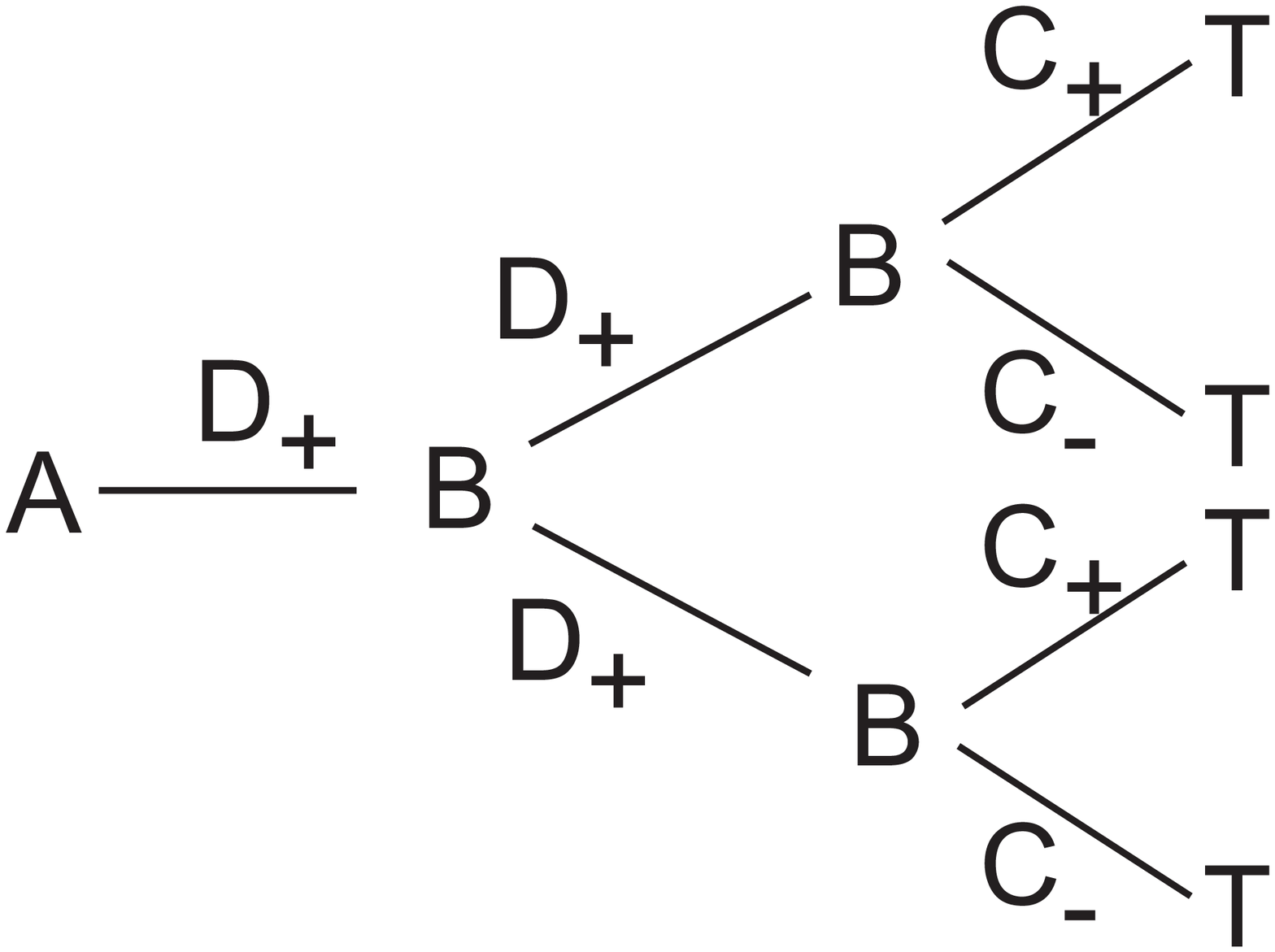}
&
$\ts{UM}_1$ & \includegraphics[width=\figs\textwidth, keepaspectratio = true]{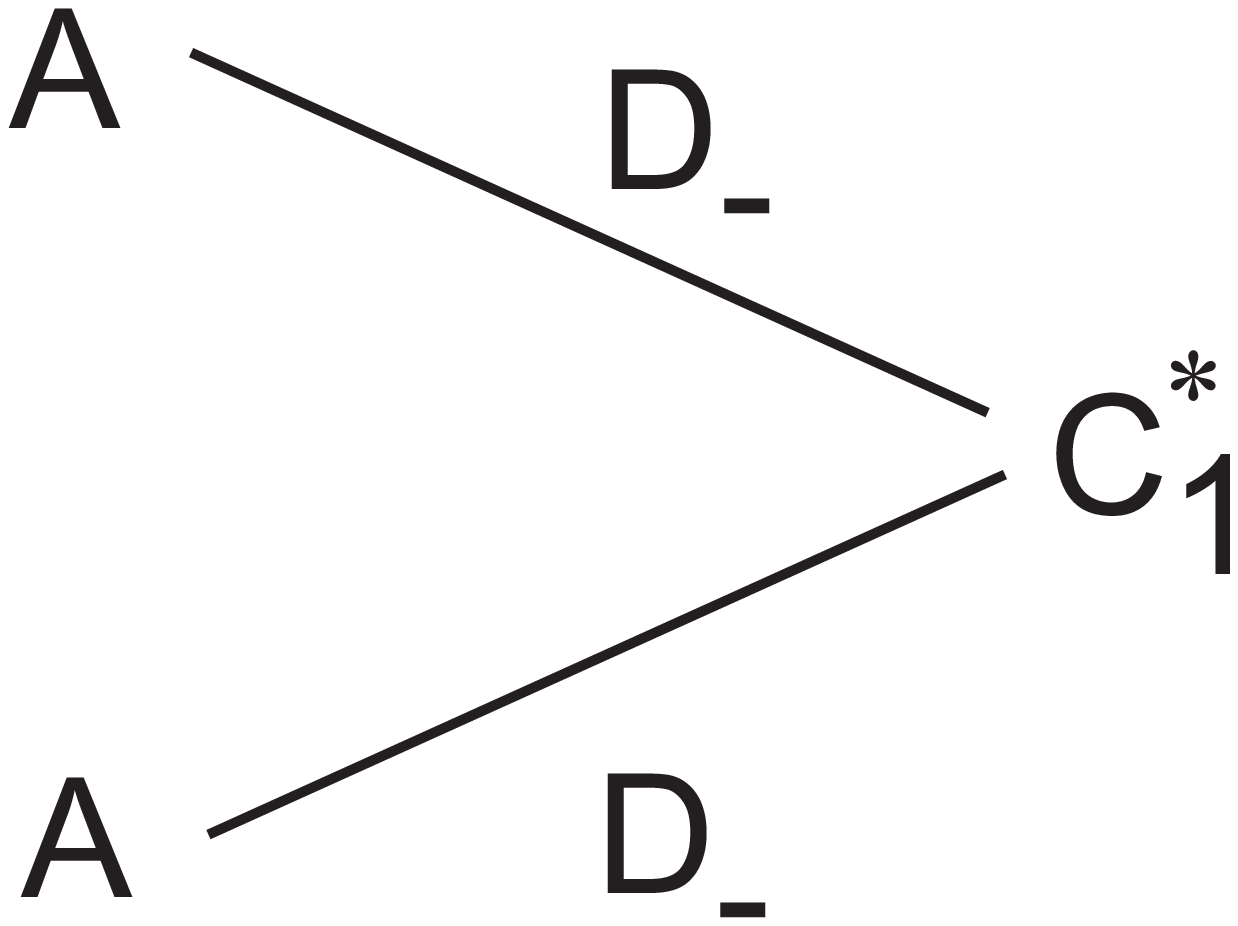}
&
$\ts{UM}_2$& \includegraphics[width=\figs\textwidth, keepaspectratio = true]{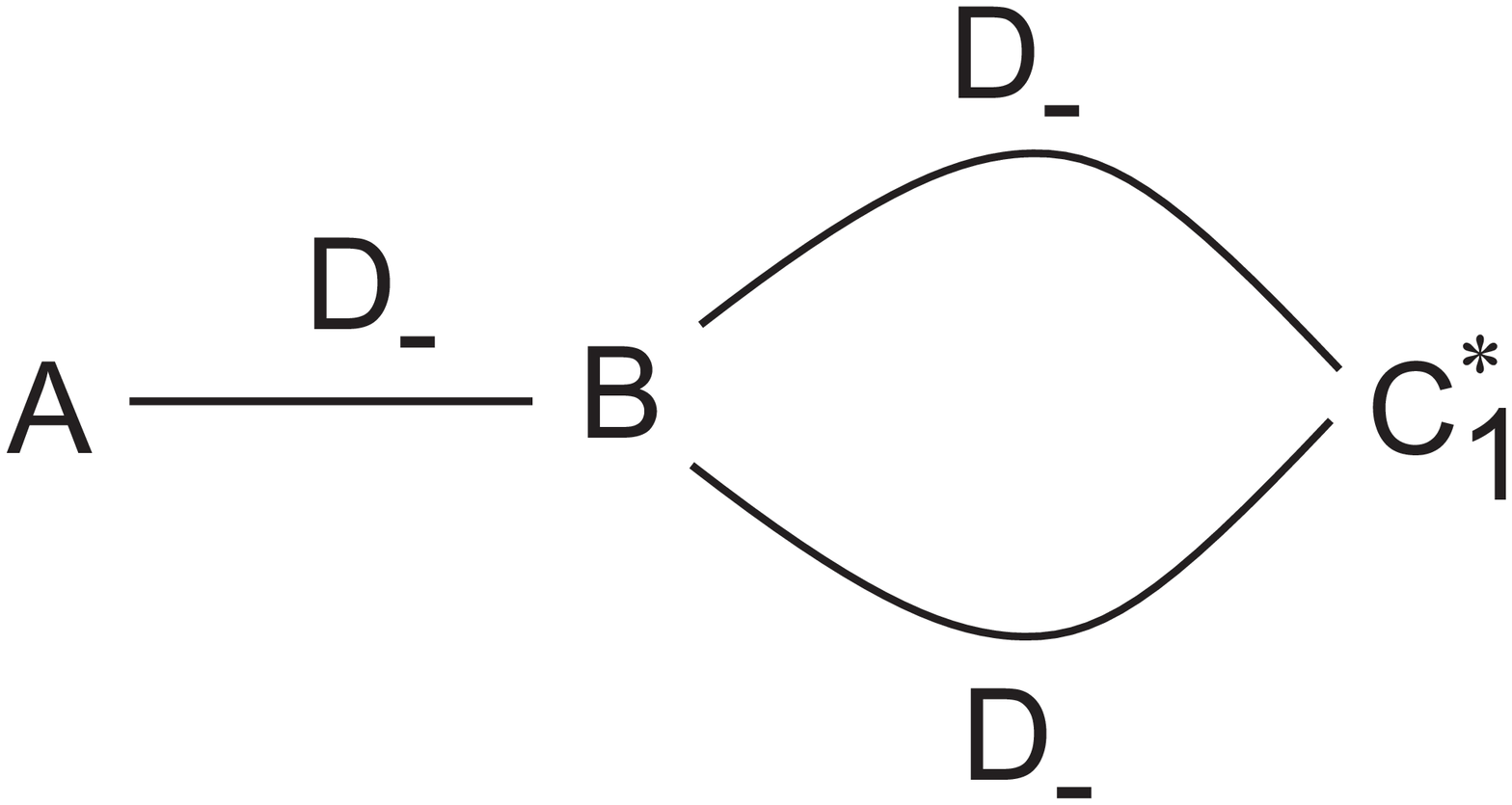}\\
\hline

$\ts{UM}_3$ & \includegraphics[width=\figs\textwidth, keepaspectratio = true]{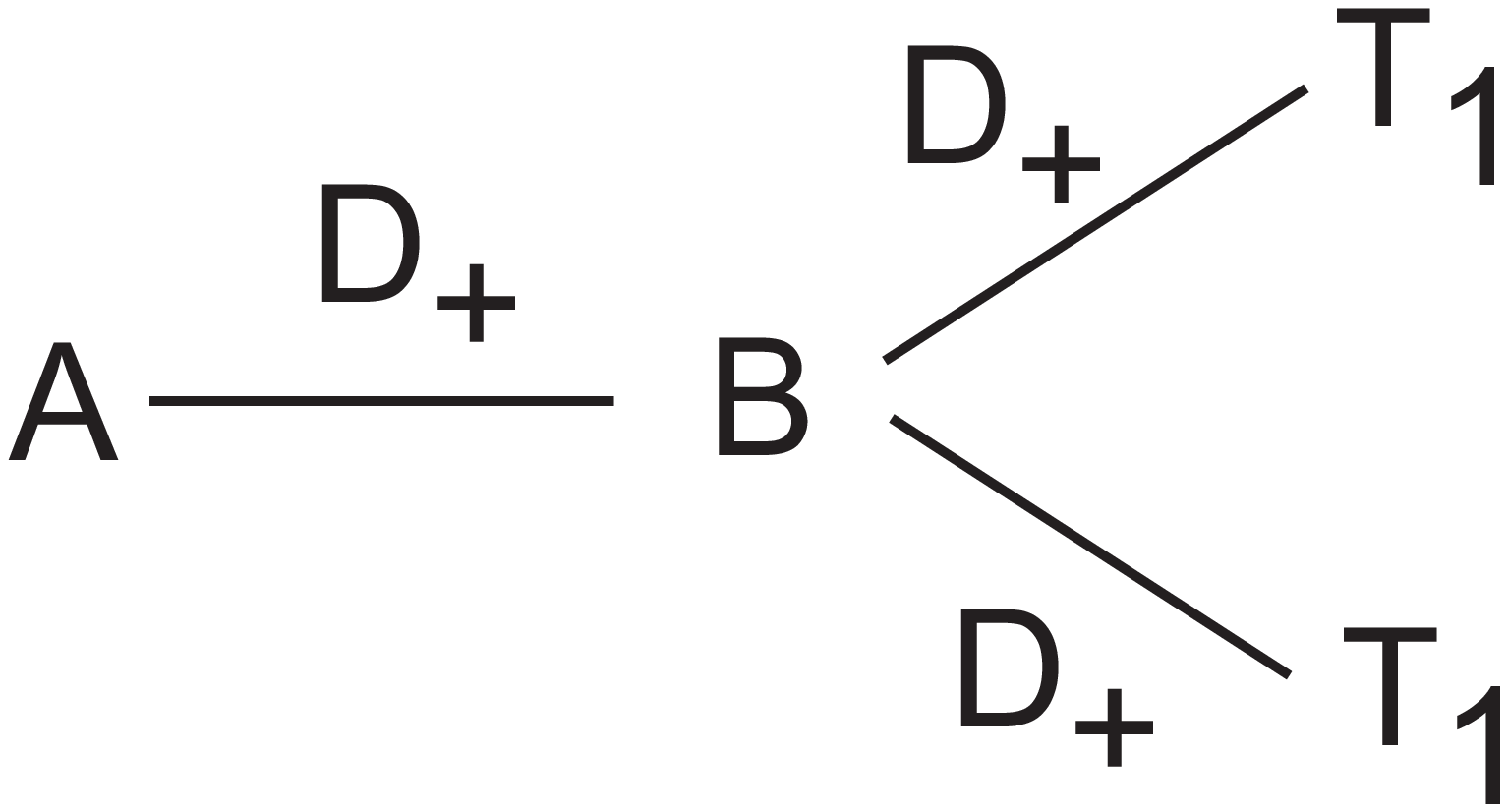}
&
$\ts{UM}_4$ & \includegraphics[width=\figs\textwidth, keepaspectratio = true]{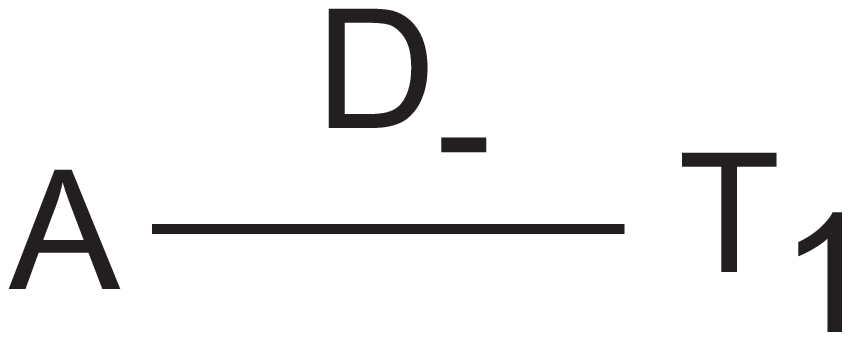}
&
$\ts{LM}_0$& \includegraphics[width=\figs\textwidth, keepaspectratio = true]{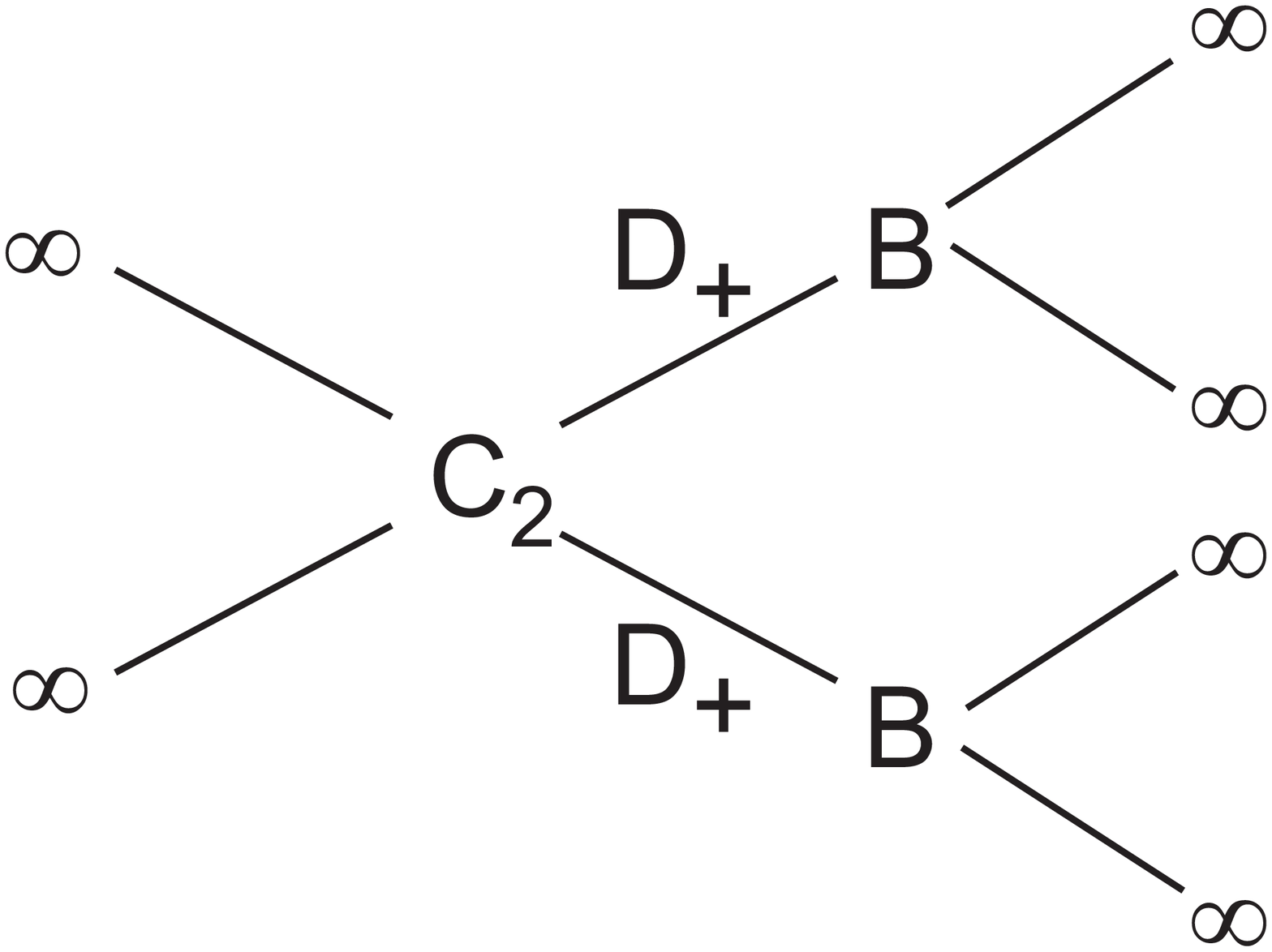}\\
\hline
\hspace*{1mm}$\ts{IM}_1$ & \includegraphics[width=\figs\textwidth, keepaspectratio = true]{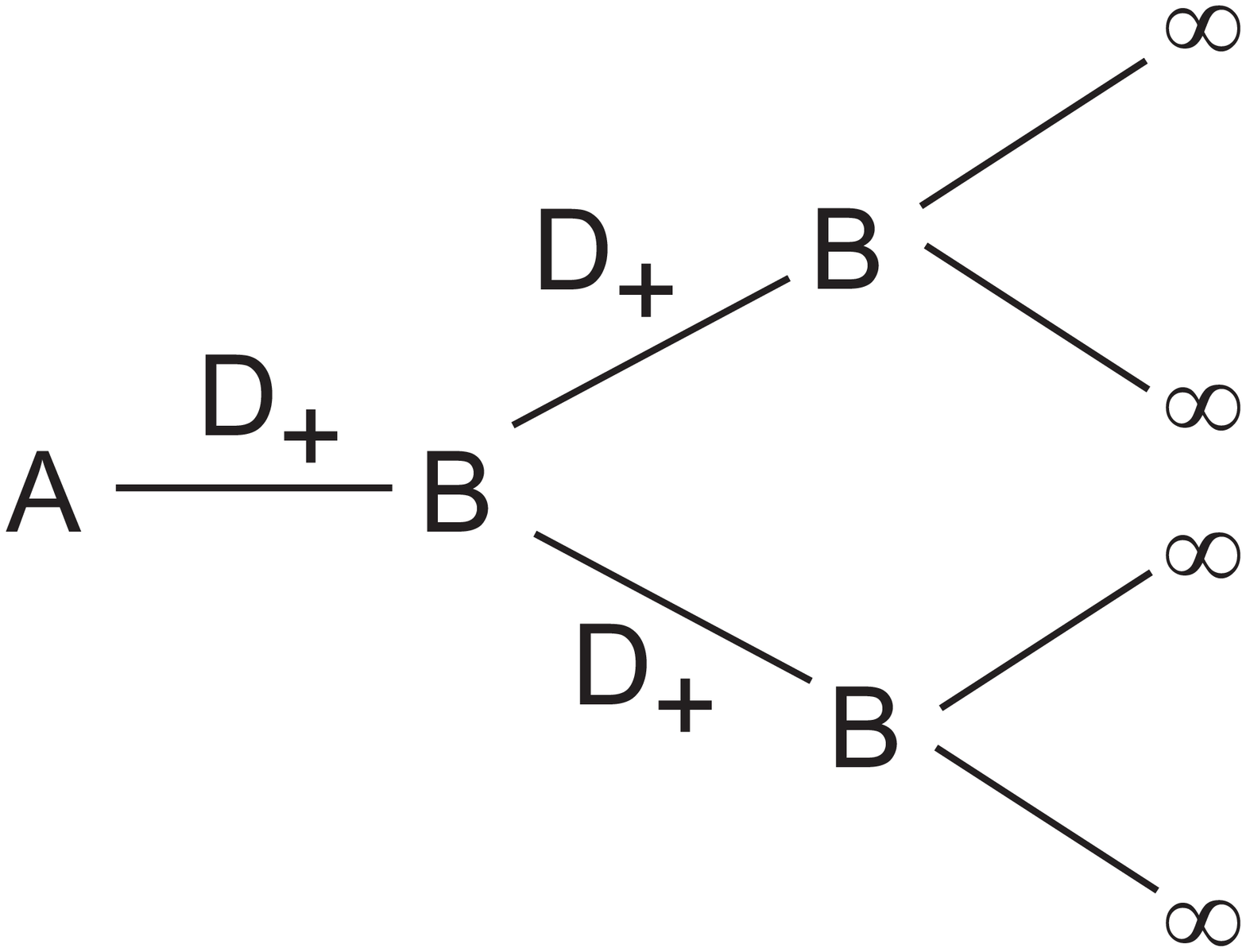}
&
\hspace*{1mm}$\ts{IM}_2$ & \includegraphics[width=\figs\textwidth, keepaspectratio = true]{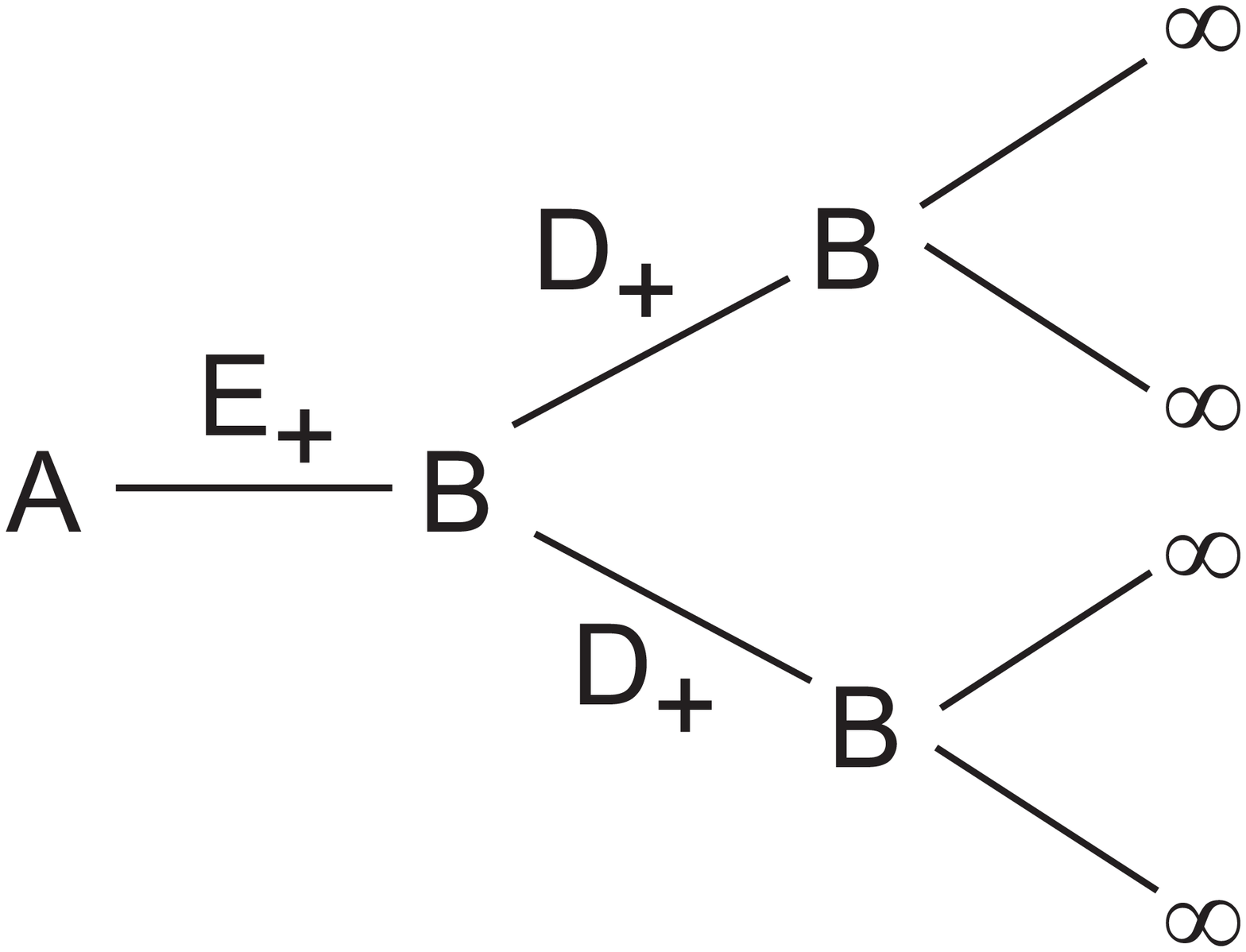}
&
\hspace*{1mm}$\ts{IM}_3$& \includegraphics[width=\figs\textwidth, keepaspectratio = true]{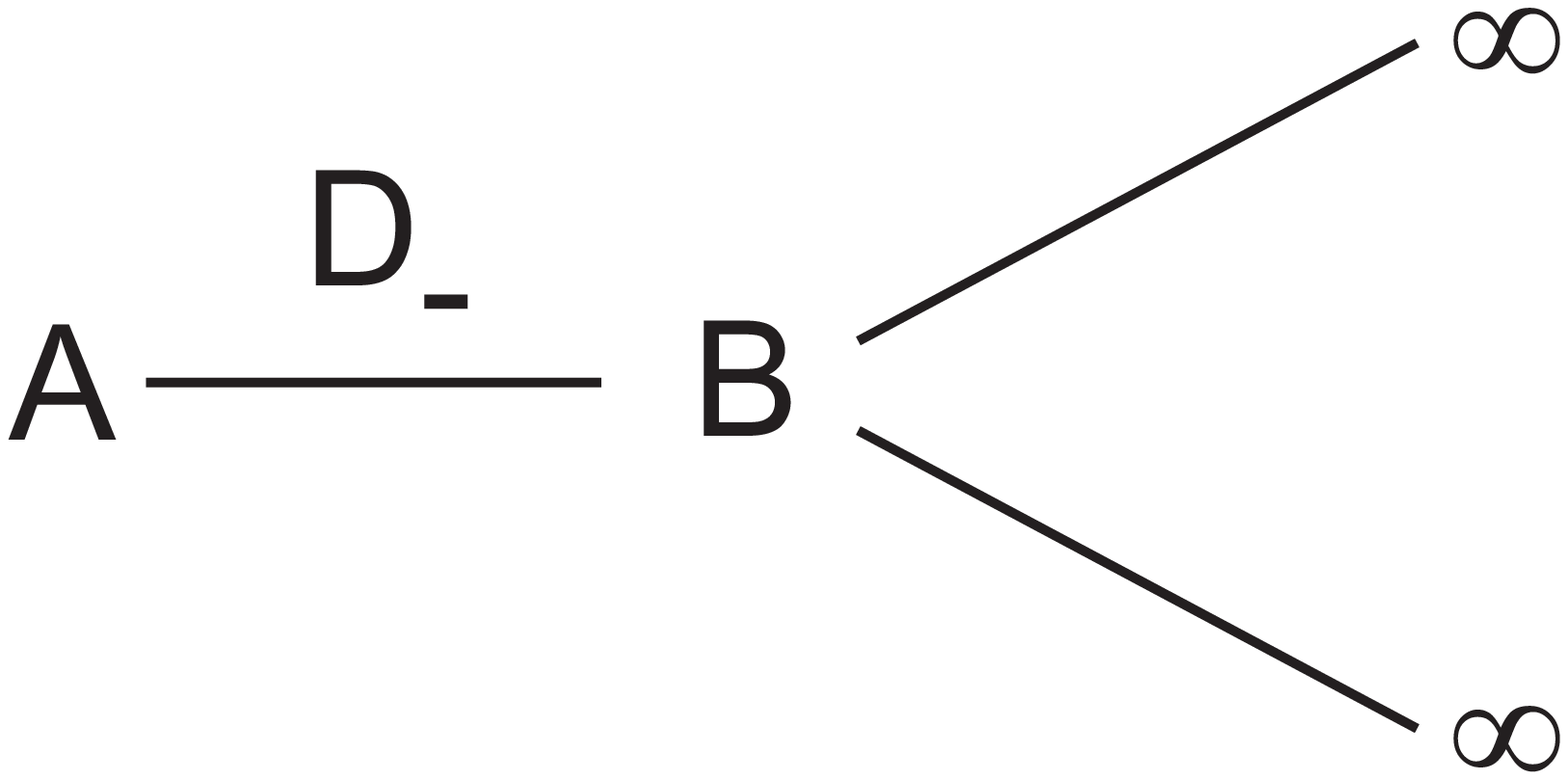}\\
\hline
\end{longtable}
}

In Table~5, the notation $\ts{OM}_i$ stands for orientable molecules, $\ts{NM}_i$ for non-orientable ones, $\ts{UM}_i$ for splitting ones. We also include the non-compact iso-$L$ molecule $\ts{LM}_0$ for the level $\{L=0\}$ where the symplectic structure degenerates. For non-negative values of $m$ the iso-$M$ manifolds are non-compact. Thus, crossing the zero value, these manifolds bifurcate without any critical points of $M$ appearing on the zero level. Three corresponding non-compact molecules are denoted by $\ts{IM}_i$. Note that the molecule $\ts{IM}_1$ was found in \cite{Zo2006} as describing the zero level of the Bogoyavlensky integral. This level, due to \eqref{eq2_1}, coincides with the manifold $\{M=0\}\subset \mathcal{N}$.

The molecules with the gluing matrices are representatives of their Liouville equivalence classes obtained by our way of choosing the coordinate systems and orientations on the families of tori. In \cite{igs}, one can find the description of possible changes that occur due to the changes of directions and orientations. Finally, in Table~6 we show the correspondence between the values of $m$ and $h$ and the molecules found.

{

\centering
\small
\renewcommand{\arraystretch}{1.3}%
\begin{longtable}{|c|c||c|c|}
\multicolumn{4}{l}{\fts{Table 6. Iso-integral molecules}}\\
\hline
\fns{$m$-value} & \fns{Code} & \fns{$h$-value} & \fns{Code} \\
\hline\endfirsthead%
\multicolumn{4}{r}{\fts{Table 6 (continued)}}\\
\hline
\fns{$m$-value} & \fns{Code} & \fns{$h$-value} & \fns{Code} \\
\hline\endhead
$m<\min \{m(q_1),m(p_3)\}$ & $\ts{OM}_5$ & {$h(p_1)<h<h(p_2)$} &{$\ts{OM}_1$}\\
\hline
$m(q_1)<m <m(p_3)\; (a>3b)$ & $\ts{OM}_5$ & {$h(p_2)<h<h(p_3)$} &{$\ts{OM}_2$}\\
\hline
$m(p_3)<m <m(q_1)\; (a<3b)$ & $\ts{NM}_3$ & {$h(p_3)<h<h(p_4)$} &{$\ts{OM}_3$}\\
\hline
$m=m(q_1) <m(p_3)\; (a>3b)$ & $\ts{UM}_1$ & {$h(p_4)<h<h_*$} &{$\ts{OM}_4$}\\
\hline
$m=m(q_1)> m(p_3)\; (a<3b)$ & $\ts{UM}_2$ & {$h_*<h<h(q_2)$} &{$\ts{OM}_5+2\ts{NM}_1$}\\
\hline
$\max\{m(p_3),m(q_1)\}< m < m(p_4)$ & $\ts{NM}_3$ & {$h=h(q_2)$} &{$\ts{OM}_5+2\ts{UM}_4$}\\
\hline
$m(p_4)<m <m(q_2)$ & $\ts{NM}_4$ & {$h(q_2)<h<h(q_1)$} &{$\ts{OM}_5+2\ts{NM}_2$}\\
\hline
$m=m(q_2)$ & $\ts{UM}_3$ & {$h=h(q_1)$} &{$\ts{UM}_1+2\ts{NM}_2$}\\
\hline
$m(q_2)<m <0 $ & $\ts{NM}_5$ & {$h>h(q_1)$} &{$\ts{OM}_5+2\ts{NM}_2$}\\
\hline
$0 \leqslant m < m(p_1)$ & $\ts{IM}_1$ & {} &{}\\
\hline
$m(p_1)< m <m(p_2)$ & $\ts{IM}_2$  & {} &{}\\
\hline
$ m > m(p_2)$ & $2\ts{IM}_3$ & {} &{}\\
\hline
\end{longtable}

}

This completes the exact topological analysis of the system.

\section*{Acknowledgements}\addcontentsline{toc}{section}{Acknowledgements}
The author is grateful to A.V.\,Bolsinov for extremely valuable discussions and advices, to V.N.\,Roubtsov, the University of Angers and the Organizers of FDIS-2013 for hospitality and support.

%\clearpage

\end{document}